\newcommand{\ud}{\mathrm{d}}
\newcommand{\tr}{\mathrm{tr}}
\newcommand\bs[1]{\boldsymbol{#1}}
\newtheorem{thm}{Theorem}
\newtheorem{prop}{Proposition}
\newtheorem{lem}{Lemma}
\newtheorem{rem}{Remark}
\newtheorem{assump}{Assumption}
\newtheorem{corol}{Corollary}
\theoremstyle{nonumberplain}
\newtheorem{proof}{Proof}
\newcommand{\blue}{\textcolor[rgb]{0.00,0.00,0.00}}
\begin{document}

\title{On the Performance Bound of Sparse Estimation with Sensing Matrix Perturbation}

\author{Yujie~Tang, Laming~Chen, and~Yuantao~Gu\thanks{This work was partially supported by National Natural Science Foundation of
China (NSFC 60872087 and NSFC U0835003). The authors are with Department of Electronic Engineering, Tsinghua University, Beijing 100084,
 China. The corresponding author of this paper is Yuantao Gu (Email: gyt@tsinghua.edu.cn).}}

\date{Received January 3, 2013, Revised April 12, 2013.}

\maketitle

\begin{abstract}
This paper focusses on the sparse estimation in the situation where both the the sensing matrix and the measurement vector are corrupted by additive Gaussian noises. The performance bound of sparse estimation is analyzed and discussed in depth. Two types of lower bounds, the constrained Cram\'{e}r-Rao bound (CCRB) and the Hammersley-Chapman-Robbins bound (HCRB), are discussed. It is shown that the situation with sensing matrix perturbation is more complex than the one with only measurement noise. For the CCRB, its closed-form expression is deduced. It demonstrates a gap between the maximal and nonmaximal support cases. It is also revealed that a gap lies between the CCRB and the MSE of the oracle pseudoinverse estimator, but it approaches zero asymptotically when the problem dimensions tend to infinity. For a tighter bound, the HCRB, despite of the difficulty in obtaining a simple expression for general sensing matrix, a closed-form expression in the unit sensing matrix case is derived for a qualitative study of the performance bound. It is shown that the gap between the maximal and nonmaximal cases is eliminated for the HCRB. Numerical simulations are performed to verify the theoretical results in this paper.

\textbf{Keywords:} Sparsity, \ unbiased estimation, \ constrained Cram\'{e}r-Rao bound, \ Hammersley-Chapman-Robbins bound, \ sensing matrix perturbation, \ asymptotic behavior.
\end{abstract}

\section{Introduction}

The problem of sparse recovery from linear measurement has been a hot topic these years and has drawn a great deal of attention. Various practical algorithms of sparse recovery have been proposed and theoretical results have been derived \cite{Donoho_Stable,Tropp_Just,Candes_Dantzig,Tropp_OMP,Needell_ROMP,Needell_CoSaMP,Dai_SP,Blumensath_IHT,Gu_ZAP,Gu_l1ZAP}. The theory of sparse recovery can be applied to various fields, especially the field of compressive sensing which considerably decreases the sampling rate of sparse signals \cite{Candes_UncertaintyP,Donoho_CS,Candes_CS}.

Suppose that a sparse signal $\mathbf{x}\in\mathbb{R}^n$ is observed through noisy linear measurement
\begin{equation}\label{pre_eq}
\mathbf{y}=\mathbf{A}\mathbf{x}+\mathbf{n},
\end{equation}
where $\mathbf{A}\in\mathbb{R}^{m\times n}$ is called the sensing matrix, $\mathbf{y}\in\mathbb{R}^m$ is the measurement vector, and $\mathbf{n}\in\mathbb{R}^m$ is the additive random noise vector. The main issue of sparse recovery is to estimate $\mathbf{x}$ from measurement $\mathbf{y}$ with estimation error as small as possible, and the recovery algorithm should be computationally tractable.

The performance of various recovery algorithms in noisy scenarios has been theoretical analyzed \cite{Needell_CoSaMP,Dai_SP,Blumensath_IHT,Candes_StableRec,Chartrand_nonconvex,Cai_Stable,Cai_OMPn,Needell_ROMPn}. Most of these works only consider the upper bound of the estimation error. Theoretical result about to what extent the estimation error can be small (i.e. the theoretical lower bound of estimation error) is of great interest because it sets a limit performance which all sparse recovery algorithms cannot exceed. There are various approaches that try to handle this topic. Reference \cite{Candes_HowWell} employed a minimax approach to study the problem. Another approach is to reformulate the sparse recovery problem as a parameter estimation problem \cite{Eldar_SparseCRB}. The sparse vector $\mathbf{x}$ is viewed as a deterministic parameter vector, and $\mathbf{y}$ represents the observation data. The goal of this approach is to minimize the mean-squared error $E[(\hat{\mathbf{x}}-\mathbf{x})^2]$ (MSE) among all possible estimators $\hat{\mathbf{x}}=\hat{\mathbf{x}}(\mathbf{y})$. The theory of lower bounds of MSE has been well established for parameter vector $\mathbf{x}\in\mathbb{R}^n$ without further constraints \cite{Lehmann_PointEstimation}. Various bounds, including the Cram\'{e}r-Rao bound \cite{Kay_StatisticalSP,Lehmann_PointEstimation}, the Hammersley-Chapman-Robbins bound\cite{Gorman_HCRB,CR_HCRB}, and the Barankin bound \cite{BarabkinBound}, have been introduced. However, the classical theory in general requires some modification to adapt to the sparse settings.

Recently, researches on the lower bounds of MSE for constrained parameter vectors, especially sparse parameter vectors, have been developed. The Cram\'{e}r-Rao bound has been modified for the constrained parameter case, and works well in the sparse settings \cite{Eldar_SingFIM,Eldar_SparseCRB}. The Hammersley-Chapman-Robbins bound requires little essential modifications, and has also been applied to the the problem of sparse recovery \cite{Eldar_SSNM,Hormati}.

This paper also focusses on the theoretical lower bounds of sparse estimators and employs the constrained Cram\'{e}r-Rao bound and the Hammersley-Chapman-Robbins bound, but deals with a more general setting in which the sensing matrix is perturbed by additive random noise. Perturbed sensing matrices appear in many practical scenarios, and therefore it is necessary to study the theoretical bounds of sparse recovery with perturbed sensing matrix \cite{Herman_PerturbCS,Zhu_PerturbCS,Ding_PerturbOMP}. One of the consequences of perturbed sensing matrix is that it is a kind of multiplicative noise, and the total noise on the measurement vector is dependent on the parameter vector $\mathbf{x}$, which demonstrates potential complexity compared to the sensing matrix perturbation-free setting \eqref{pre_eq}.

The problem of sensing matrix perturbation is also closely related to other research topics.
In \cite{Chi_BasisMismatch}, the authors considered the case of basis mismatch between the assumed model and the actual one, and commented that sensing matrix perturbation can reflect basis mismatch. They derived the approximation error in terms of mismatch level and showed that extra care may be needed to account for the effects of basis mismatch. In \cite{Eldar_NoiseFolding,Davenport_ProsCons}, the authors analyzed the case where the sparse signal is directly contaminated with white noise and demonstrated the phenomenon of noise folding. Though different from the situation where it is the sensing matrix that is corrupted, their work shows the necessity to study more general noise settings.

The main contributions of this work are the theoretical bounds of sparse recovery with perturbed sensing matrix and noisy measurement vector. Closed-form expressions of the constrained CRB will be derived, and the quantitative behavior will be discussed. For the Hammersley-Chapman-Robbins bound, only the case of identity sensing matrix is studied for the sake of simplicity, but the results are still inspiring in that its analysis is much simpler and can still provide much information about the behavior of the theoretical lower bounds when the noises are large.

The rest of this paper is organized as follows. In Section \ref{sec:Fund_problem}, the fundamental problem of sparse recovery with perturbed sensing matrix is introduced, and the classical theory of parameter estimation will be reviewed. In Section \ref{sec:CCRB}, the constrained Cram\'{e}r-Rao bound will be derived, and quantitative analysis will be provided in order to have a deeper understanding of its behavior. In Section \ref{sec:HCRB}, the Hammersley-Chapman-Robbins bound will be derived for the case with unit sensing matrix, and its behavior with different settings of signals and noises will also be studied. In Section \ref{sec:num_results}, numerical results will be presented to verify the theoretical results \blue{and to compare with existing estimators}. This paper is concluded in Section~\ref{sec:conclusion} and the proofs are postponed to Appendices.

\subsection*{Notation}

The $M\times M$ unit matrix is denoted by $\mathbf{U}_M$.
For any index set $\Lambda\subset\{1,\ldots,N\}$, $|\Lambda|$ denotes the cardinality of $\Lambda$,
and $\Lambda^c$ denotes the complement set $\{1,2,\ldots,N\}\backslash \Lambda$.
For any index set $\Lambda$ and any $N$-dimensional vector $\mathbf{v}$ ($N\geq |\Lambda|$),
$\mathbf{v}_\Lambda$ denotes the $|\Lambda|$-length vector containing
the entries of $\mathbf{v}$ indexed by $\Lambda$. For any index set $\Lambda$ and any $M\times N$ matrix $\mathbf{M}$ ($N\geq|\Lambda|$),
$\mathbf{M}_\Lambda$ denotes the $M\times|\Lambda|$ matrix containing the columns of $\mathbf{M}$ corresponding
to $\Lambda$. For any vector $\mathbf{v}$, $\|\mathbf{v}\|_{\ell_p}$ denotes the $p$-norm of $\mathbf{v}$.
For any appropriate matrix $\mathbf{M}$, $\mathbf{M}^\dagger$ denotes the Moore-Penrose pseudo-inverse of $\mathbf{M}$.
For $\mathbf{x}=(x_1,\ldots,x_N)^\mathrm{T}$, $\nabla_{\mathbf{x}}$ denotes the gradient operator
$(\partial/\partial x_1,\ldots,\partial/\partial x_N)^\mathrm{T}$, and $\nabla^\mathrm{T}_{\mathbf{x}}$ denotes
its transposition. $\mathbf{e}_k$ denotes the $k$th column vector of the identity matrix.
Other notations will be introduced when needed.

\section{Problem Setting}\label{sec:Fund_problem}

The settings of sparse estimation with general perturbation is introduced in this section. In the case of general perturbation, the measurement vector is observed via a corrupted sensing matrix as
\begin{equation}\label{fund_eq}
\mathbf{y}=(\mathbf{A}+\mathbf{E})\mathbf{x}+\mathbf{n},
\end{equation}
where $\mathbf{x}$ is the deterministic parameter to be estimated, and $\mathbf{y}$ is the measurement vector. $\mathbf{E}\in\mathbb{R}^{m\times n}$ represents the perturbation on the sensing matrix, whose elements are i.i.d. Gaussian distributed random variables with zero mean and variance $\sigma_e^2$. The vector $\mathbf{n}\sim\mathcal{N}(0,\sigma_n^2\mathbf{U}_m)$ is the noise on the measurement vector $\mathbf{y}$, and is independent of $\mathbf{E}$.

The parameter $\mathbf{x}$ is supposed to be sparse, i.e. the size of its support is far less than its dimension.
The support of $\mathbf{x}$ is denoted by $S$, and its size is assumed to satisfy $|S|=\|\mathbf{x}\|_{\ell_0}\leq s$.
Furthermore, it is adopted in the following text that
\begin{equation}\label{eq:spark}
\mathrm{spark}(\mathbf{A})>2s ,
\end{equation}
where $\mathrm{spark}(\mathbf{A})$ is defined as the smallest possible number $k$ such that there exists a subgroup of $k$ columns from $\mathbf{A}$ that are linearly dependent \cite{Donoho_OpGen}. The above prerequisite ensures that two different $s$-sparse signals will not share the same measurement vector if the measurement is precise.

An estimator $\hat{\mathbf{x}}=\hat{\mathbf{x}}(\mathbf{y})$ is a function of the measurement vector, and is essentially a random variable.
A widely used criterion of the performance of an estimator is the mean square error (MSE), given by
\begin{equation}
\mathrm{mse}(\hat{\mathbf{x}})=E_{\mathbf{y};\mathbf{x}}[\|\hat{\mathbf{x}}(\mathbf{y})-\mathbf{x}\|_{\ell_2}^2].
\end{equation}
Here, $E_{\mathbf{y};\mathbf{x}}[\cdot]$ denotes the expectation taken with respect to the pdf $p(\mathbf{y};\mathbf{x})$ of the measurement $\mathbf{y}$ parameterized by $\mathbf{x}$. Note that the MSE is in general dependent on $\mathbf{x}$.

In this paper only unbiased estimators are considered. Unbiased estimators are the ones that satisfy
\begin{equation}
E_{\mathbf{y};\mathbf{x}}[\hat{\mathbf{x}}(\mathbf{y})]=\mathbf{x},\quad \forall\mathbf{x}\in \mathcal{X}.
\end{equation}
Here $\mathcal{X}$ denotes the set of all possible values of the parameter $\mathbf{x}$. In the sparse setting, the notation $\mathcal{X}_s$ is used for this set and could be formulated as
\begin{equation}
\mathcal{X}_s=\{\mathbf{x}\in\mathbb{R}^n:\|\mathbf{x}\|_{\ell_0}\leq s\}.
\end{equation}
The set of all unbiased estimators will be denoted by $\mathcal{U}$.
For every unbiased estimator, its MSE at a specific parameter value possesses a lower bound known as the Barankin bound (BB) \cite{BarabkinBound}. Unfortunately, the BB often does not possess a closed-form expression, or its computation is of great complexity.
In the remainder of this paper, two types of lower bounds of the BB, the constrained Cram\'{e}r-Rao bound (CCRB) and the Hammersley-Chapman-Robbins bound (HCRB), are discussed for sparse estimation with general perturbation. As they are lower bounds of the BB, they can also be viewed as the lower bounds of the MSE of unbiased estimators. Although they are not as tight as the BB, they usually possess simpler expressions and can provide insights into the properties of the BB.

\section{The Constrained CRB}\label{sec:CCRB}

In this section, the constrained Cram\'{e}r-Rao bound (CCRB) of the estimation problem \eqref{fund_eq} is considered. The CCRB generalizes the original CRB to the case where the parameter is constrained in an arbitrary given set. Researches on CCRB have been developed recently and especially on the situation of sparse estimation \cite{Eldar_SingFIM,Eldar_SparseCRB}. The CCRB can be summarized by the following proposition.

\begin{prop}\label{prop_CCRB}\cite{Eldar_SingFIM}
Suppose that the parameter $\mathbf{x}\in\mathbb{R}^n$ lies in a given set $\mathcal{X}$\footnote{There are certain requirements that $\mathcal{X}$ has to meet. Refer to \cite{Eldar_SparseCRB} for detailed exposition. Fortunately, the set $\mathcal{X}_s$ of the sparse setting meets all the requirements.}, and $\mathbf{x}_0$ is a specific value of $\mathbf{x}$. Define the set $\mathcal{F}(\mathbf{x}_0)$ as follows,
\begin{equation*}
\begin{aligned}
\mathcal{F}(\mathbf{x}_0) = \{\mathbf{v}\in\mathbb{R}^n: \exists~\epsilon_0(\mathbf{v})>0 
\ \mathrm{s.t.}~\forall\epsilon\in(0,\epsilon_0(\mathbf{v})),\mathbf{x}_0+\epsilon\mathbf{v}\in \mathcal{X}\}.
\end{aligned}
\end{equation*}
It can be proved that $\mathcal{F}(\mathbf{x}_0)$ is a subspace of $\mathbb{R}^n$. Let $\mathbf{V}=[\mathbf{v}_1,\ldots,\mathbf{v}_l]$ be an orthogonal basis of $\mathcal{F}(\mathbf{x}_0)$, and $\mathbf{J}$ be the Fisher information matrix (FIM),
\begin{equation}
\mathbf{J}(\mathbf{x}_0)=
E_{\mathbf{y};\mathbf{x}_0}\left[\big(\nabla_{\mathbf{x}}\ln
p(\mathbf{y};\mathbf{x})\big)\big(\nabla_{\mathbf{x}}^\mathrm{T}\ln
p(\mathbf{y};\mathbf{x})\big)\right].
\end{equation}
If
\begin{equation}\label{CCRB_condition}
\mathcal{R}(\mathbf{V}\mathbf{V}^\mathrm{T})\subseteq
\mathcal{R}(\mathbf{V}\mathbf{V}^\mathrm{T}\mathbf{J}\mathbf{V}\mathbf{V}^\mathrm{T}),
\end{equation}
where $\mathcal{R}(\mathbf{P})$ is the column space of $\mathbf{P}$, then for any estimator $\hat{\mathbf{x}}$
which is unbiased in the neighborhood of $\mathbf{x}_0$, its covariance matrix $\mathrm{Cov}(\hat{\mathbf{x}})$ satisfies
\begin{equation}\label{Cov_CCRB_general}
\mathrm{Cov}(\hat{\mathbf{x}})\succeq
\mathbf{V}(\mathbf{V}^\mathrm{T}\mathbf{J}\mathbf{V})^{\dagger}\mathbf{V}^\mathrm{T},
\end{equation}
where $\mathbf{P}\succeq\mathbf{Q}$ means that $\mathbf{P}-\mathbf{Q}$ is positive semidefinite. The trace of the covariance matrix gives the MSE of the estimator. Conversely, if \eqref{CCRB_condition} does not hold, then there exists no finite variance estimator which is unbiased in the neighborhood of $\mathbf{x}_0$.
\end{prop}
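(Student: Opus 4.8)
The plan is to run the classical Cram\'{e}r--Rao argument projected onto the feasible subspace $\mathcal{F}(\mathbf{x}_0)$, letting a pseudo-inverse absorb any singularity of the Fisher information. Write $\mathbf{s}(\mathbf{y})$ for the score $\nabla_{\mathbf{x}}\ln p(\mathbf{y};\mathbf{x})$ evaluated at $\mathbf{x}_0$, so that $\mathbf{J}=E_{\mathbf{y};\mathbf{x}_0}[\mathbf{s}\mathbf{s}^{\mathrm{T}}]$, and write $E[\cdot]$ for $E_{\mathbf{y};\mathbf{x}_0}[\cdot]$. First I would extract two identities by differentiating integral relations along a feasible direction $\mathbf{v}\in\mathcal{F}(\mathbf{x}_0)$ at $\mathbf{x}_0$. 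Differentiating the normalization $\int p(\mathbf{y};\mathbf{x})\,\ud\mathbf{y}\equiv 1$ yields $\mathbf{V}^{\mathrm{T}}E[\mathbf{s}]=\mathbf{0}$, so the projected score has zero mean; differentiating the local unbiasedness relation $E_{\mathbf{y};\mathbf{x}_0+\epsilon\mathbf{v}}[\hat{\mathbf{x}}]=\mathbf{x}_0+\epsilon\mathbf{v}$ at $\epsilon=0$ and using $\nabla p=p\,\nabla\ln p$ yields $E[\hat{\mathbf{x}}\,\mathbf{s}^{\mathrm{T}}]\mathbf{V}=\mathbf{V}$. Combining the two gives the key orthogonality relation $E[(\hat{\mathbf{x}}-\mathbf{x}_0)\mathbf{s}^{\mathrm{T}}]\mathbf{V}=\mathbf{V}$. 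The regularity conditions mentioned in the footnote are used here to interchange differentiation and integration, and the finiteness of the variance of $\hat{\mathbf{x}}$ is what makes these expectations well defined.

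Next I would form the zero-mean pair $\mathbf{a}=\hat{\mathbf{x}}-\mathbf{x}_0$ and $\mathbf{b}=\mathbf{V}\mathbf{V}^{\mathrm{T}}\mathbf{s}$, with second-moment blocks $E[\mathbf{a}\mathbf{a}^{\mathrm{T}}]=\mathrm{Cov}(\hat{\mathbf{x}})$, $E[\mathbf{b}\mathbf{b}^{\mathrm{T}}]=\mathbf{V}\mathbf{V}^{\mathrm{T}}\mathbf{J}\mathbf{V}\mathbf{V}^{\mathrm{T}}$, and $E[\mathbf{a}\mathbf{b}^{\mathrm{T}}]=\mathbf{V}\mathbf{V}^{\mathrm{T}}$ from the orthogonality relation. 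Using the $\mathbb{R}^n$-valued projected score $\mathbf{V}\mathbf{V}^{\mathrm{T}}\mathbf{s}$ rather than the reduced score $\mathbf{V}^{\mathrm{T}}\mathbf{s}$ is deliberate, since it makes the cross block equal to $\mathbf{V}\mathbf{V}^{\mathrm{T}}$ and the pivot block equal to $\mathbf{V}\mathbf{V}^{\mathrm{T}}\mathbf{J}\mathbf{V}\mathbf{V}^{\mathrm{T}}$, so that the range condition inherited from the stacked second-moment matrix of $(\mathbf{a}^{\mathrm{T}},\mathbf{b}^{\mathrm{T}})^{\mathrm{T}}$ is verbatim \eqref{CCRB_condition}. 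When $\hat{\mathbf{x}}$ has finite variance this stacked matrix is positive semidefinite, and any positive semidefinite block matrix automatically satisfies $\mathcal{R}(\mathbf{V}\mathbf{V}^{\mathrm{T}})\subseteq\mathcal{R}(\mathbf{V}\mathbf{V}^{\mathrm{T}}\mathbf{J}\mathbf{V}\mathbf{V}^{\mathrm{T}})$; its generalized Schur complement with respect to the pivot then gives $\mathrm{Cov}(\hat{\mathbf{x}})\succeq\mathbf{V}\mathbf{V}^{\mathrm{T}}(\mathbf{V}\mathbf{V}^{\mathrm{T}}\mathbf{J}\mathbf{V}\mathbf{V}^{\mathrm{T}})^{\dagger}\mathbf{V}\mathbf{V}^{\mathrm{T}}$. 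Finally, the identity $(\mathbf{V}\mathbf{M}\mathbf{V}^{\mathrm{T}})^{\dagger}=\mathbf{V}\mathbf{M}^{\dagger}\mathbf{V}^{\mathrm{T}}$ valid for $\mathbf{V}$ with orthonormal columns, together with $\mathbf{V}^{\mathrm{T}}\mathbf{V}=\mathbf{U}_l$, collapses the right-hand side to the claimed bound \eqref{Cov_CCRB_general}, namely $\mathbf{V}(\mathbf{V}^{\mathrm{T}}\mathbf{J}\mathbf{V})^{\dagger}\mathbf{V}^{\mathrm{T}}$.

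The converse falls out as the contrapositive of the step just used: if \eqref{CCRB_condition} fails, the stacked second-moment matrix cannot be positive semidefinite, so no locally unbiased estimator of finite variance can give rise to it, which is the asserted nonexistence. I expect the main obstacle to be exactly this interplay around a singular pivot --- justifying the pseudo-inverse form of the covariance inequality when $\mathbf{V}^{\mathrm{T}}\mathbf{J}\mathbf{V}$ is rank deficient, and verifying that the range condition that emerges is precisely \eqref{CCRB_condition} and not the stronger requirement that $\mathbf{V}^{\mathrm{T}}\mathbf{J}\mathbf{V}$ be invertible. The choice $\mathbf{b}=\mathbf{V}\mathbf{V}^{\mathrm{T}}\mathbf{s}$ and the orthonormal-column pseudo-inverse identity are the two devices that make this match exact; by comparison, establishing the score identities under the stated regularity is routine.
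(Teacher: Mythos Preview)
The paper does not prove this proposition at all: it is quoted from \cite{Eldar_SingFIM} as a known result and immediately followed by a remark, with no proof environment. So there is no ``paper's own proof'' to compare against.

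That said, your sketch is a correct and standard derivation of the constrained CRB with a possibly singular projected FIM. The two score identities you extract from local unbiasedness are exactly the ones needed, and the device of pairing $\hat{\mathbf{x}}-\mathbf{x}_0$ with the \emph{projected} score $\mathbf{V}\mathbf{V}^{\mathrm{T}}\mathbf{s}$ (rather than the reduced score $\mathbf{V}^{\mathrm{T}}\mathbf{s}$) is the right move: it makes the pivot block $\mathbf{V}\mathbf{V}^{\mathrm{T}}\mathbf{J}\mathbf{V}\mathbf{V}^{\mathrm{T}}$ and the cross block $\mathbf{V}\mathbf{V}^{\mathrm{T}}$, so the range condition forced by positive semidefiniteness of the stacked second-moment matrix is literally \eqref{CCRB_condition}. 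The generalized Schur complement then gives the bound, and the pseudo-inverse identity $(\mathbf{V}\mathbf{M}\mathbf{V}^{\mathrm{T}})^{\dagger}=\mathbf{V}\mathbf{M}^{\dagger}\mathbf{V}^{\mathrm{T}}$ for $\mathbf{V}$ with orthonormal columns collapses it to \eqref{Cov_CCRB_general}. Your contrapositive for the converse is also correct: failure of \eqref{CCRB_condition} is incompatible with positive semidefiniteness of the stacked matrix, ruling out any finite-variance locally unbiased estimator. One small point worth making explicit in a full write-up is the fact you are implicitly using, that for a positive semidefinite block matrix the column space of the off-diagonal block is automatically contained in that of the corresponding diagonal block; this is what makes \eqref{CCRB_condition} a \emph{consequence} of the existence of a finite-variance locally unbiased estimator rather than an assumption on $\mathbf{J}$.
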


\begin{rem}\blue{
Note that the estimators considered in Proposition \ref{prop_CCRB} are ``unbiased in the neighborhood of $\mathbf{x}_0$'', which can be rigorously formulated as follows: define $\mathbf{b}(\mathbf{x})=E[\hat{\mathbf{x}}-\mathbf{x}]$ to be the bias at $\mathbf{x}$, then one says that the estimator $\hat{\mathbf{x}}$ is unbiased in the neighborhood of $\mathbf{x}_0$ if and only if
\begin{equation}
\forall\mathbf{v}\in\mathcal{F}(\mathbf{x}_0),\quad
\mathbf{b}(\mathbf{x}_0)=0~\textrm{and}~\left.\frac{\partial \mathbf{b}(\mathbf{x})}{\partial \mathbf{v}}\right|_{\mathbf{x}_0}=0.
\end{equation}}
We denote the set containing all the estimators unbiased in the neighborhood of $\mathbf{x}_0$ as $\mathcal{U}_{\mathbf{x}_0}$. In the sparse setting, it can be seen that
\begin{equation}
\mathcal{U}\subset\mathcal{U}_{\mathbf{x}_0},\quad\forall\mathbf{x}_0\in\mathcal{X}_s,
\end{equation}
therefore the CCRB is certainly lower than the BB. Nevertheless, the CCRB has simple closed-form expression which is convenient to analyze. In this section we relax our restrictions on the estimators to be unbiased in the neighborhood of a specific parameter value.
\end{rem}

From Proposition \ref{prop_CCRB}, it can be seen that the computation of CCRB mainly relies on the computation of the FIM $\mathbf{J}$ and the orthogonal basis $\mathbf{V}$.
The FIM $\mathbf{J}$ is given by the following lemma.

\begin{lem}\label{lemma:FIM}
The Fisher information matrix is given by
\begin{equation}\label{FIM_xS}
\mathbf{J}(\mathbf{x})=\frac{1}{\sigma_{\mathbf{x}}^2}\left[\mathbf{A}^\mathrm{T}\mathbf{A}
+2m\sigma_e^4\frac{\mathbf{x}\mathbf{x}^\mathrm{T}}{\sigma_{\mathbf{x}}^2}\right],
\end{equation}
where $\sigma_{\mathbf{x}}^2$ is defined as
\begin{equation}\label{equiv_noise}
\sigma_{\mathbf{x}}^2=\sigma_e^2\|\mathbf{x}\|_{\ell_2}^2+\sigma_n^2.
\end{equation}
\end{lem}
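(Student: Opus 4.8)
The plan is to first identify the likelihood $p(\mathbf{y};\mathbf{x})$ explicitly, then recognize it as a Gaussian family whose mean \emph{and} covariance both depend on the parameter, and finally apply the standard Gaussian Fisher-information (Slepian--Bangs) formula. First I would determine the distribution of $\mathbf{y}$ conditioned on $\mathbf{x}$. The key observation is that the matrix perturbation contributes a signal-dependent noise term $\mathbf{E}\mathbf{x}$: since the rows of $\mathbf{E}$ are independent and every entry is $\mathcal{N}(0,\sigma_e^2)$, the $i$th entry $(\mathbf{E}\mathbf{x})_i=\sum_j E_{ij}x_j$ is zero-mean Gaussian with variance $\sigma_e^2\|\mathbf{x}\|_{\ell_2}^2$, and these entries are independent across $i$. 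Hence $\mathbf{E}\mathbf{x}\sim\mathcal{N}(0,\sigma_e^2\|\mathbf{x}\|_{\ell_2}^2\mathbf{U}_m)$, and adding the independent measurement noise $\mathbf{n}$ gives $\mathbf{y}\sim\mathcal{N}(\mathbf{A}\mathbf{x},\sigma_{\mathbf{x}}^2\mathbf{U}_m)$ with $\sigma_{\mathbf{x}}^2$ as in \eqref{equiv_noise}. The crucial structural point, which distinguishes this setting from the perturbation-free case, is that both the mean $\boldsymbol{\mu}(\mathbf{x})=\mathbf{A}\mathbf{x}$ and the covariance $\mathbf{C}(\mathbf{x})=\sigma_{\mathbf{x}}^2\mathbf{U}_m$ depend on $\mathbf{x}$.

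With the Gaussian likelihood in hand, I would invoke the expression for the FIM of a Gaussian whose mean and covariance both depend on the parameter,
\begin{equation*}
[\mathbf{J}(\mathbf{x})]_{ij}=\frac{\partial\boldsymbol{\mu}^\mathrm{T}}{\partial x_i}\mathbf{C}^{-1}\frac{\partial\boldsymbol{\mu}}{\partial x_j}+\frac{1}{2}\tr\!\left(\mathbf{C}^{-1}\frac{\partial\mathbf{C}}{\partial x_i}\mathbf{C}^{-1}\frac{\partial\mathbf{C}}{\partial x_j}\right).
\end{equation*}
The first (mean) term yields $\mathbf{a}_i^\mathrm{T}\mathbf{a}_j/\sigma_{\mathbf{x}}^2$, i.e. $\mathbf{A}^\mathrm{T}\mathbf{A}/\sigma_{\mathbf{x}}^2$ in matrix form, where $\mathbf{a}_i$ is the $i$th column of $\mathbf{A}$. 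For the second (covariance) term, using $\partial\sigma_{\mathbf{x}}^2/\partial x_i=2\sigma_e^2 x_i$ and $\mathbf{C}^{-1}=\sigma_{\mathbf{x}}^{-2}\mathbf{U}_m$ gives $\mathbf{C}^{-1}(\partial\mathbf{C}/\partial x_i)=(2\sigma_e^2 x_i/\sigma_{\mathbf{x}}^2)\mathbf{U}_m$, so the trace evaluates to $2m\sigma_e^4 x_i x_j/\sigma_{\mathbf{x}}^4$. Summing the two contributions and factoring out $\sigma_{\mathbf{x}}^{-2}$ reproduces \eqref{FIM_xS}.

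If a self-contained derivation is preferred, I would instead compute the score $\nabla_{\mathbf{x}}\ln p(\mathbf{y};\mathbf{x})$ directly from the log-likelihood and take the expectation of its outer product. Here the differentiation must account for $\sigma_{\mathbf{x}}^2$ depending on $\mathbf{x}$ through both the normalizing constant $-\tfrac{m}{2}\ln\sigma_{\mathbf{x}}^2$ and the exponent, producing a score with a linear part and a part proportional to $\|\mathbf{r}\|_{\ell_2}^2-m\sigma_{\mathbf{x}}^2$, where $\mathbf{r}=\mathbf{y}-\mathbf{A}\mathbf{x}\sim\mathcal{N}(0,\sigma_{\mathbf{x}}^2\mathbf{U}_m)$. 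Taking the expectation then requires only the low-order Gaussian moments of $\mathbf{r}$, namely $E[\mathbf{r}\mathbf{r}^\mathrm{T}]=\sigma_{\mathbf{x}}^2\mathbf{U}_m$ and the fourth-order quantity $\mathrm{Var}(\|\mathbf{r}\|_{\ell_2}^2)=2m\sigma_{\mathbf{x}}^4$, while the cross terms vanish because the relevant odd moments are zero.

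I expect the main obstacle to be the covariance-derivative term. Unlike the familiar CRB for linear Gaussian models, the signal-dependent variance contributes the extra rank-one term $2m\sigma_e^4\,\mathbf{x}\mathbf{x}^\mathrm{T}/\sigma_{\mathbf{x}}^2$, and obtaining it correctly --- including the factor $2m\sigma_e^4$ and the fourth power of $\sigma_{\mathbf{x}}$ --- is where the bookkeeping (the trace in the Slepian--Bangs route, or equivalently the fourth-moment evaluation in the direct route) must be carried out carefully. The mean term and the vanishing of the cross terms are routine by comparison.
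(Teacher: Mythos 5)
Your proposal is correct, and your primary route differs genuinely from the paper's. Both start identically: you and the paper each establish that $\mathbf{E}\mathbf{x}\sim\mathcal{N}(0,\sigma_e^2\|\mathbf{x}\|_{\ell_2}^2\mathbf{U}_m)$ independently of $\mathbf{n}$, hence $\mathbf{y}\sim\mathcal{N}(\mathbf{A}\mathbf{x},\sigma_{\mathbf{x}}^2\mathbf{U}_m)$ with both mean and covariance depending on $\mathbf{x}$. From there the paper proceeds by brute force along what you call the ``self-contained'' route: it writes out the score $\nabla_{\mathbf{x}}\ln p(\mathbf{y};\mathbf{x})$, expands the outer product into terms of degree $0$ through $4$ in the residual $\mathbf{y}-\mathbf{A}\mathbf{x}$, argues that the odd-degree (linear and cubic) terms vanish in expectation by symmetry, and evaluates the surviving terms using $E[\|\mathbf{y}-\mathbf{A}\mathbf{x}\|_{\ell_2}^2]=m\sigma_{\mathbf{x}}^2$ and $E[\|\mathbf{y}-\mathbf{A}\mathbf{x}\|_{\ell_2}^4]=(m^2+2m)\sigma_{\mathbf{x}}^4$. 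Your main argument instead invokes the Slepian--Bangs formula for a Gaussian family with parameter-dependent mean and covariance, and your evaluation of both terms is correct: the mean term gives $\mathbf{A}^\mathrm{T}\mathbf{A}/\sigma_{\mathbf{x}}^2$, and since $\partial\mathbf{C}/\partial x_i=2\sigma_e^2x_i\mathbf{U}_m$, the covariance term gives $\tfrac{1}{2}\tr\bigl[(2\sigma_e^2x_i/\sigma_{\mathbf{x}}^2)(2\sigma_e^2x_j/\sigma_{\mathbf{x}}^2)\mathbf{U}_m\bigr]=2m\sigma_e^4x_ix_j/\sigma_{\mathbf{x}}^4$, exactly the rank-one term in \eqref{FIM_xS}. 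What each approach buys: yours is shorter and far less error-prone, since the delicate bookkeeping (vanishing odd moments, the fourth-moment coefficient $m^2+2m$ of which only the $2m$ part survives cancellation) is absorbed into a standard, citable formula --- one available in the very reference \cite{Kay_StatisticalSP} the paper already cites; the paper's computation is self-contained and makes explicit where the factor $2m\sigma_e^4$ comes from, at the cost of a considerably messier intermediate expression. Your fallback sketch is essentially the paper's proof, so you have both derivations in hand.
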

\begin{proof}
The proof is postponed to Appendix \ref{proof_FIM}.
\end{proof}

Next we deal with the orthogonal basis $\mathbf{V}$ of the subspace $\mathcal{F}$. The cases in which $\|\mathbf{x}\|_{\ell_0}=s$ and $\|\mathbf{x}\|_{\ell_0}<s$ should be discussed separately. For the case $\|\mathbf{x}\|_{\ell_0}=s$, it can be seen that for every $k\in S=\mathrm{supp}(\mathbf{x})$, one has $\|\mathbf{x}+\epsilon\mathbf{e}_k\|_{\ell_0}\leq s$, i.e. $\mathbf{x}+\epsilon\mathbf{e}_k\in\mathcal{X}_s$ for arbitrary $\epsilon$, and therefore $\mathbf{e}_k\in\mathcal{F}$; on the other hand, for $k\notin S$,  one has $\|\mathbf{x}+\epsilon\mathbf{e}_k\|_{\ell_0}> s$ and thus $\mathbf{e}_k\notin\mathcal{F}$. It follows that the subspace $\mathcal{F}(\mathbf{x})$ can be formulated as
\begin{equation}
\mathcal{F}(\mathbf{x})=\mathrm{span}(\{\mathbf{e}_{S_1},\ldots,\mathbf{e}_{S_s}\}),
\end{equation}
in which $S_1,\ldots,S_s$ are the elements of the support $S=\mathrm{supp}(\mathbf{x})$, and the basis $\mathbf{V}$ can take the following form
\begin{equation}\label{V_full_support}
\mathbf{V}=[\mathbf{e}_{S_1},\ldots,\mathbf{e}_{S_s}].
\end{equation}

For the case $\|\mathbf{x}\|_{\ell_0}<s$, the situation is rather different, because for every $\mathbf{e}_k,~k=1,\ldots,n$,  one has $\|\mathbf{x}+\epsilon\mathbf{e}_k\|_{\ell_0}\leq s$. Thus it can be concluded that $\mathcal{F}(\mathbf{x})=\mathbb{R}^n$ in this case, and the basis $\mathbf{V}$ can be given by $[\mathbf{e}_1,\ldots,\mathbf{e}_n]=\mathbf{U}_n$.

With the form of the FIM $\mathbf{J}$ and the basis $\mathbf{V}$, one can readily derive the CCRB of the problem \eqref{fund_eq}. The situation in which $\mathbf{x}$ has maximal support ($\|\mathbf{x}\|_{\ell_0}=s$) is first analyzed.

\begin{thm}\label{thm:CCRB}
For $\|\mathbf{x}\|_{\ell_0}=s$, the CCRB is given by
\begin{equation}\label{CCRB_1}
\begin{aligned}
\sigma_{\mathbf{x}}^2\Bigg[&\tr\left((\mathbf{A}_S^\mathrm{T}\mathbf{A}_S)^{-1}\right) -\frac{2m\sigma_e^4\|(\mathbf{A}_S^\mathrm{T}\mathbf{A}_S)^{-1}\mathbf{x}_S\|^2_{\ell_2}} {\sigma_{\mathbf{x}}^2+2m\sigma_e^4\mathbf{x}_S^\mathrm{T}(\mathbf{A}_S^\mathrm{T}\mathbf{A}_S)^{-1}\mathbf{x}_S}\Bigg],
\end{aligned}
\end{equation}
where $\sigma_{\mathbf{x}}^2$ is defined in \eqref{equiv_noise}.
\end{thm}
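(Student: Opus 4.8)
The plan is to substitute the explicit forms of the Fisher information matrix $\mathbf{J}$ from Lemma \ref{lemma:FIM} and the basis $\mathbf{V}=[\mathbf{e}_{S_1},\ldots,\mathbf{e}_{S_s}]$ from \eqref{V_full_support} into the CCRB expression \eqref{Cov_CCRB_general} and then evaluate its trace. First I would compute the $s\times s$ matrix $\mathbf{V}^\mathrm{T}\mathbf{J}\mathbf{V}$. Because the columns of $\mathbf{V}$ select the coordinates indexed by the support $S$, one has $\mathbf{A}\mathbf{V}=\mathbf{A}_S$ and $\mathbf{V}^\mathrm{T}\mathbf{x}=\mathbf{x}_S$, so that
\begin{equation*}
\mathbf{V}^\mathrm{T}\mathbf{J}\mathbf{V}=\frac{1}{\sigma_{\mathbf{x}}^2}\left[\mathbf{A}_S^\mathrm{T}\mathbf{A}_S+\frac{2m\sigma_e^4}{\sigma_{\mathbf{x}}^2}\mathbf{x}_S\mathbf{x}_S^\mathrm{T}\right].
\end{equation*}
This reduces the Fisher information restricted to $\mathcal{F}(\mathbf{x})$ to a rank-one perturbation of the support Gram matrix $\mathbf{A}_S^\mathrm{T}\mathbf{A}_S$, which is the structural observation that makes the whole computation tractable.

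Next I would discharge the regularity condition \eqref{CCRB_condition} and argue that the pseudo-inverse may be replaced by an ordinary inverse. Since $\mathrm{spark}(\mathbf{A})>2s\geq|S|$, any $|S|$ columns of $\mathbf{A}$ are linearly independent, so $\mathbf{A}_S$ has full column rank and $\mathbf{A}_S^\mathrm{T}\mathbf{A}_S\succ0$. Adding the positive semidefinite rank-one term keeps $\mathbf{V}^\mathrm{T}\mathbf{J}\mathbf{V}$ positive definite, hence invertible; consequently $\mathcal{R}(\mathbf{V}\mathbf{V}^\mathrm{T}\mathbf{J}\mathbf{V}\mathbf{V}^\mathrm{T})=\mathcal{R}(\mathbf{V})=\mathcal{R}(\mathbf{V}\mathbf{V}^\mathrm{T})$, so \eqref{CCRB_condition} holds and $(\mathbf{V}^\mathrm{T}\mathbf{J}\mathbf{V})^{\dagger}=(\mathbf{V}^\mathrm{T}\mathbf{J}\mathbf{V})^{-1}$. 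The trace of \eqref{Cov_CCRB_general} is then simplified using the orthonormality $\mathbf{V}^\mathrm{T}\mathbf{V}=\mathbf{U}_s$: by the cyclic property of the trace,
\begin{equation*}
\tr\left(\mathbf{V}(\mathbf{V}^\mathrm{T}\mathbf{J}\mathbf{V})^{-1}\mathbf{V}^\mathrm{T}\right)=\tr\left((\mathbf{V}^\mathrm{T}\mathbf{J}\mathbf{V})^{-1}\right)=\sigma_{\mathbf{x}}^2\,\tr\left(\left(\mathbf{A}_S^\mathrm{T}\mathbf{A}_S+\frac{2m\sigma_e^4}{\sigma_{\mathbf{x}}^2}\mathbf{x}_S\mathbf{x}_S^\mathrm{T}\right)^{-1}\right).
\end{equation*}

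Finally I would apply the Sherman--Morrison formula to the rank-one update. Writing $\mathbf{G}=\mathbf{A}_S^\mathrm{T}\mathbf{A}_S$ and $c=2m\sigma_e^4/\sigma_{\mathbf{x}}^2$, one has $(\mathbf{G}+c\mathbf{x}_S\mathbf{x}_S^\mathrm{T})^{-1}=\mathbf{G}^{-1}-c\mathbf{G}^{-1}\mathbf{x}_S\mathbf{x}_S^\mathrm{T}\mathbf{G}^{-1}/(1+c\mathbf{x}_S^\mathrm{T}\mathbf{G}^{-1}\mathbf{x}_S)$. Taking the trace and using $\tr(\mathbf{G}^{-1}\mathbf{x}_S\mathbf{x}_S^\mathrm{T}\mathbf{G}^{-1})=\|\mathbf{G}^{-1}\mathbf{x}_S\|_{\ell_2}^2$ together with the symmetry of $\mathbf{G}^{-1}$ yields the claimed \eqref{CCRB_1} after multiplying numerator and denominator of the correction term by $\sigma_{\mathbf{x}}^2$ and factoring out the leading $\sigma_{\mathbf{x}}^2$. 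I expect no conceptual obstacle in this argument; the substance is entirely in the two algebraic identities above, and the only points requiring care are confirming the positive definiteness that licenses replacing the pseudo-inverse by the inverse (and hence verifying \eqref{CCRB_condition}) and carrying out the Sherman--Morrison bookkeeping without dropping the $\sigma_{\mathbf{x}}^2$ factors.
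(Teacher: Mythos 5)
Your proposal is correct and follows essentially the same route as the paper's own proof: compute $\mathbf{V}^\mathrm{T}\mathbf{J}\mathbf{V}=\frac{1}{\sigma_{\mathbf{x}}^2}\bigl[\mathbf{A}_S^\mathrm{T}\mathbf{A}_S+2m\sigma_e^4\mathbf{x}_S\mathbf{x}_S^\mathrm{T}/\sigma_{\mathbf{x}}^2\bigr]$, use the spark condition to get invertibility and hence verify \eqref{CCRB_condition}, then apply Sherman--Morrison and take the trace via the cyclic property. Your only additions (explicitly noting $\mathbf{A}\mathbf{V}=\mathbf{A}_S$, the positive-definiteness argument, and the pseudo-inverse-to-inverse replacement) are elaborations of steps the paper leaves implicit, not a different method.
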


\begin{proof}
The condition \eqref{CCRB_condition} should be checked first. The matrix $\mathbf{V}^\mathrm{T}\mathbf{J}\mathbf{V}$ is given by
\begin{equation*}
\begin{aligned}
\mathbf{V}^\mathrm{T}\mathbf{J}\mathbf{V}
&=\mathbf{V}^\mathrm{T}\frac{1}{\sigma_{\mathbf{x}}^2}\left[\mathbf{A}^\mathrm{T}\mathbf{A}
+2m\sigma_e^4\frac{\mathbf{x}\mathbf{x}^\mathrm{T}}{\sigma_{\mathbf{x}}^2}\right]\mathbf{V} \\
&=\frac{1}{\sigma_{\mathbf{x}}^2}\left[\mathbf{A}_S^\mathrm{T}\mathbf{A}_S+
2m\sigma_e^4\frac{\mathbf{x}_S\mathbf{x}_S^\mathrm{T}}{\sigma_{\mathbf{x}}^2}\right].
\end{aligned}
\end{equation*}
Because we have assumed that $\mathrm{spark}(\mathbf{A})>2s$, it follows that $\mathbf{A}_S$ has full column rank, and thus $\mathbf{V}^\mathrm{T}\mathbf{J}\mathbf{V}$ is invertible by employing the Sherman-Morrison formula \cite{Golub_MatComp}, which gives
\begin{equation}\label{VJV_inv}
\begin{aligned}
\left(\mathbf{V}^\mathrm{T}\mathbf{J}\mathbf{V}\right)^{-1}
=\sigma_\mathbf{x}^2\left[(\mathbf{A}_S^\mathrm{T}\mathbf{A}_S)^{-1}-\frac{2m\sigma_e^4(\mathbf{A}_S^\mathrm{T}\mathbf{A}_S)^{-1}\mathbf{x}_S\mathbf{x}_S^\mathrm{T}(\mathbf{A}_S^\mathrm{T}\mathbf{A}_S)^{-1}}
{\sigma_{\mathbf{x}}^2+2m\sigma_e^4\mathbf{x}_S^\mathrm{T}(\mathbf{A}_S^\mathrm{T}\mathbf{A}_S)^{-1}\mathbf{x}_S}\right],
\end{aligned}
\end{equation}
and thus $\mathcal{R}(\mathbf{V}\mathbf{V}^\mathrm{T}\mathbf{J}\mathbf{V}\mathbf{V}^\mathrm{T}) =\mathcal{R}(\mathbf{V}\mathbf{V}^\mathrm{T})$, i.e. the CCRB exists. The expression of the CCRB can also be obtained from \eqref{VJV_inv} that
\begin{equation*}
\begin{aligned}
\mathrm{mse}(\hat{\mathbf{x}})\geq &~\tr\left(\mathbf{V}(\mathbf{V}^\mathrm{T}\mathbf{J}\mathbf{V})^{-1}\mathbf{V}^\mathrm{T}\right) \\
=&~\tr\left(\mathbf{V}^\mathrm{T}\mathbf{V}(\mathbf{V}^\mathrm{T}\mathbf{J}\mathbf{V})^{-1}\right) \\ =&~ \tr\left((\mathbf{V}^\mathrm{T}\mathbf{J}\mathbf{V})^{-1}\right) \\
=&~\sigma_\mathbf{x}^2\Bigg[\tr\left((\mathbf{A}_S^\mathrm{T}\mathbf{A}_S)^{-1}\right)-\frac{2m\sigma_e^4\|(\mathbf{A}_S^\mathrm{T}\mathbf{A}_S)^{-1}\mathbf{x}_S\|^2_{\ell_2}}
{\sigma_\mathbf{x}^2+2m\sigma_e^4\mathbf{x}_S^\mathrm{T}(\mathbf{A}_S^\mathrm{T}\mathbf{A}_S)^{-1}\mathbf{x}_S}\Bigg].
\end{aligned}
\end{equation*}

\end{proof}

Next consider the case in which $\mathbf{x}$ has nonmaximal support. The CCRB of this case can be summarized as the following theorem.

\begin{thm}\label{CCRB_nonmaximal}
For $\|\mathbf{x}\|_{\ell_0}<s$, if the FIM $\mathbf{J}$ is nonsingular, then the CCRB exists. Furthermore,
if $\mathbf{A}$ has full column rank, then the CCRB is given by
\begin{equation}\label{CCRB_2}
\sigma_{\mathbf{x}}^2\left[\tr\left((\mathbf{A}^\mathrm{T}\mathbf{A})^{-1}\right) -\frac{2m\sigma_e^4\|(\mathbf{A}^\mathrm{T}\mathbf{A})^{-1}\mathbf{x}\|^2_{\ell_2}}
{\sigma_{\mathbf{x}}^2+2m\sigma_e^4\mathbf{x}^\mathrm{T}(\mathbf{A}^\mathrm{T}\mathbf{A})^{-1}\mathbf{x}}
\right].
\end{equation}
If the FIM $\mathbf{J}$ is singular, then there do not exist finite variance estimators that are unbiased in the neighborhood of $\mathbf{x}$.
\end{thm}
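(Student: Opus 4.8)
The plan is to reuse the machinery of Proposition~\ref{prop_CCRB} with the basis $\mathbf{V}=\mathbf{U}_n$ already identified for the nonmaximal case, so that the argument closely parallels the proof of Theorem~\ref{thm:CCRB}. First I would substitute $\mathbf{V}=\mathbf{U}_n$ into the rank condition \eqref{CCRB_condition}. Since $\mathbf{V}\mathbf{V}^\mathrm{T}=\mathbf{U}_n$, the left-hand side is $\mathcal{R}(\mathbf{U}_n)=\mathbb{R}^n$ and the right-hand side is $\mathcal{R}(\mathbf{J})$; hence \eqref{CCRB_condition} holds if and only if $\mathbf{J}$ is nonsingular. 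This immediately yields the first assertion: whenever $\mathbf{J}$ is nonsingular the CCRB exists. The last assertion then follows from the converse part of Proposition~\ref{prop_CCRB}, because if $\mathbf{J}$ is singular then $\mathbb{R}^n\not\subseteq\mathcal{R}(\mathbf{J})$, so \eqref{CCRB_condition} fails and no finite-variance estimator unbiased in the neighborhood of $\mathbf{x}$ can exist.

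Next I would compute the bound itself under the extra hypothesis that $\mathbf{A}$ has full column rank. In that case $\mathbf{A}^\mathrm{T}\mathbf{A}$ is positive definite, so $\mathbf{J}=\tfrac{1}{\sigma_{\mathbf{x}}^2}[\mathbf{A}^\mathrm{T}\mathbf{A}+2m\sigma_e^4\mathbf{x}\mathbf{x}^\mathrm{T}/\sigma_{\mathbf{x}}^2]$ is positive definite and in particular nonsingular, placing us in the case already settled. With $\mathbf{V}=\mathbf{U}_n$ the bound in \eqref{Cov_CCRB_general} simplifies to $\tr(\mathbf{J}^{-1})$, so it remains only to invert $\mathbf{J}$. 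I would apply the Sherman--Morrison formula to the rank-one update exactly as in \eqref{VJV_inv}, but now with $\mathbf{A}$ and $\mathbf{x}$ in place of $\mathbf{A}_S$ and $\mathbf{x}_S$, obtaining a closed form for $\mathbf{J}^{-1}$. Taking the trace and using $\tr\big((\mathbf{A}^\mathrm{T}\mathbf{A})^{-1}\mathbf{x}\mathbf{x}^\mathrm{T}(\mathbf{A}^\mathrm{T}\mathbf{A})^{-1}\big)=\|(\mathbf{A}^\mathrm{T}\mathbf{A})^{-1}\mathbf{x}\|^2_{\ell_2}$, which holds by symmetry of $(\mathbf{A}^\mathrm{T}\mathbf{A})^{-1}$, reproduces \eqref{CCRB_2}.

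The calculation here is entirely routine once the reduction $\mathbf{V}=\mathbf{U}_n$ is in place, and indeed mirrors the full-support derivation step for step. The only point that deserves care---and which motivates the three-way split in the statement---is that, unlike in Theorem~\ref{thm:CCRB} where $\mathrm{spark}(\mathbf{A})>2s$ guaranteed invertibility of $\mathbf{A}_S^\mathrm{T}\mathbf{A}_S$, here the full Gram matrix $\mathbf{A}^\mathrm{T}\mathbf{A}$ need not be invertible: in the typical underdetermined regime $m<n$ it has rank at most $m$, and the rank-one perturbation $2m\sigma_e^4\mathbf{x}\mathbf{x}^\mathrm{T}/\sigma_{\mathbf{x}}^2$ can raise the rank by at most one. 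Thus $\mathbf{J}$ may well be singular, which is exactly why the existence of the CCRB must be phrased conditionally on the nonsingularity of $\mathbf{J}$, and why the explicit formula \eqref{CCRB_2} is only claimed when $\mathbf{A}$ has full column rank. I expect this bookkeeping about the rank of $\mathbf{J}$, rather than any single computation, to be the main thing to get right.
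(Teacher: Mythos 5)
Your proposal is correct and follows essentially the same route as the paper's proof: substitute $\mathbf{V}=\mathbf{U}_n$ to reduce the condition \eqref{CCRB_condition} to nonsingularity of $\mathbf{J}$, invoke the converse part of Proposition~\ref{prop_CCRB} for the singular case, and compute $\tr(\mathbf{J}^{-1})$ via Sherman--Morrison when $\mathbf{A}$ has full column rank. Your added observations---that full column rank of $\mathbf{A}$ forces $\mathbf{J}\succ0$, and that in the underdetermined regime $\mathbf{J}$ has rank at most $m+1<n$ so singularity is the typical situation---are sound and, if anything, make the conditional phrasing of the theorem more transparent than the paper's own exposition.
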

\begin{proof}
Because in this case $\mathbf{V}=\mathbf{U}_n$, the two subspaces are respectively $\mathcal{R}(\mathbf{V}\mathbf{V}^\mathrm{T})=\mathbb{R}^n$ and $\mathcal{R}(\mathbf{V}\mathbf{V}^\mathrm{T}\mathbf{J}\mathbf{V}\mathbf{V}^\mathrm{T})=\mathcal{R}(\mathbf{J})$.
Therefore when $\mathbf{J}$ is invertible, it can be seen that the condition \eqref{CCRB_condition} holds, and the CCRB can be obtained by taking the trace of $\mathbf{J}^{-1}$. In the special case where $\mathbf{A}$ has full column rank, the inverse $\mathbf{J}^{-1}$ can be calculated with the help of the the
Sherman-Morrison formula \cite{Golub_MatComp},
\begin{equation}
\mathbf{J}^{-1}
=\sigma_\mathbf{x}^2\left[(\mathbf{A}^\mathrm{T}\mathbf{A})^{-1}-\frac{2m\sigma_e^4(\mathbf{A}^\mathrm{T}\mathbf{A})^{-1}\mathbf{x}\mathbf{x}^\mathrm{T}(\mathbf{A}^\mathrm{T}\mathbf{A})^{-1}}
{\sigma_{\mathbf{x}}^2+2m\sigma_e^4\mathbf{x}^\mathrm{T}(\mathbf{A}^\mathrm{T}\mathbf{A})^{-1}\mathbf{x}}\right],
\end{equation}
and the CCRB is
\begin{equation}
\begin{aligned}
\mathrm{mse}(\hat{\mathbf{x}})\geq\tr\left(\mathbf{J}^{-1}\right)
=\sigma_{\mathbf{x}}^2\Bigg[\tr\left((\mathbf{A}^\mathrm{T}\mathbf{A})^{-1}\right)-\frac{2m\sigma_e^4\|(\mathbf{A}^\mathrm{T}\mathbf{A})^{-1}\mathbf{x}\|^2_{\ell_2}}
{\sigma_{\mathbf{x}}^2+2m\sigma_e^4\mathbf{x}^\mathrm{T}(\mathbf{A}^\mathrm{T}\mathbf{A})^{-1}\mathbf{x}}\Bigg].
\end{aligned}
\end{equation}

When $\mathbf{J}$ is not invertible, the dimension of the column space of $\mathbf{J}$ is less than $n$. Thus the condition \eqref{CCRB_condition} does not hold, and estimators that are unbiased in the neighborhood of $\mathbf{x}$ do not exist.
\end{proof}

Theorem \ref{CCRB_nonmaximal} illustrates that for the nonmaximal support case, the prior information of sparsity cannot lower the theoretical bound of  estimation error compared to the ordinary problem where $\mathbf{x}$ can be any vector in $\mathbb{R}^n$. This demonstrates a gap between the maximal and nonmaximal support cases of the CCRB, which is the main topic of the next subsection.

\subsection{Gap between the Maximal and the Nonmaximal Cases}

The gap between the maximal and nonmaximal cases of the CCRB can be revealed from the following example.
Suppose we observe a sparse vector $\mathbf{x}\in\mathcal{X}_s$ with $s$ nonzero entries. Then by the result of Theorem \ref{thm:CCRB},
the CCRB is given by \eqref{CCRB_1}. Next we assume that one of the nonzero components, say $x_q$, tends to zero. Consequently, the CCRB given by
\eqref{CCRB_1} tends to a specific limit $\gamma_1$. However, when $x_q$ equals zero, the CCRB cannot be computed by \eqref{CCRB_1} anymore
because its support is now nonmaximal, therefore the CCRB of the $x_q=0$ case is given by \eqref{CCRB_2}, which we temporarily denote as
$\gamma_2$. It is interesting to find that $\gamma_1$ and $\gamma_2$ are not equal to each other, which means that the CCRB is not a continuous function of the parameter $\mathbf{x}$.

Generally one has $\gamma_1\leq\gamma_2$, which can be inferred as follows. $\gamma_1$ could be seen as the CRB of estimators which are unbiased
on the subspace $\mathrm{span}(\{\mathbf{e}_i:i\in\mathrm{supp}(\mathbf{x})\})$, while $\gamma_2$ could be seen as the CRB of estimators unbiased
on $\mathbb{R}^n$. If the former class of estimators is denoted by $\mathcal{U}_1$, and the latter is denoted by $\mathcal{U}_2$, it can be
seen that $\mathcal{U}_1\supset\mathcal{U}_2$, and thus the lower bounds of estimation error of the two classes should satisfy
$\gamma_1\leq\gamma_2$. This conclusion can also be verified by numerical approaches.

This gap originates from the ``discontinuity'' of the restriction that the estimator should be unbiased in the neighborhood of a specific parameter value. The ``neighborhood'' of a parameter point having maximal support in $\mathcal{X}_s$ has an entirely different structure from that of a parameter point having nonmaximal support: the former is a subspace that is locally identical to $\mathbb{R}^s$, while the latter is a union of $s$-dimensional subspaces. Fig. \ref{fig_CCRB_discont} is a geometric illustration of the structure of the neighborhood of $\mathbf{x}$. It can be seen that as $x_q\rightarrow0$, the structure of the neighborhood of $\mathbf{x}$ will have an abrupt change: from being locally identical to $\mathbb{R}^s$ to being locally identical to $\mathbb{R}^n$. This is the cause of the gap between the maximal and nonmaximal cases.

\begin{figure}[!t]
\centerline{
\subfigure[The maximal support case]{\includegraphics[width=.25\textwidth]{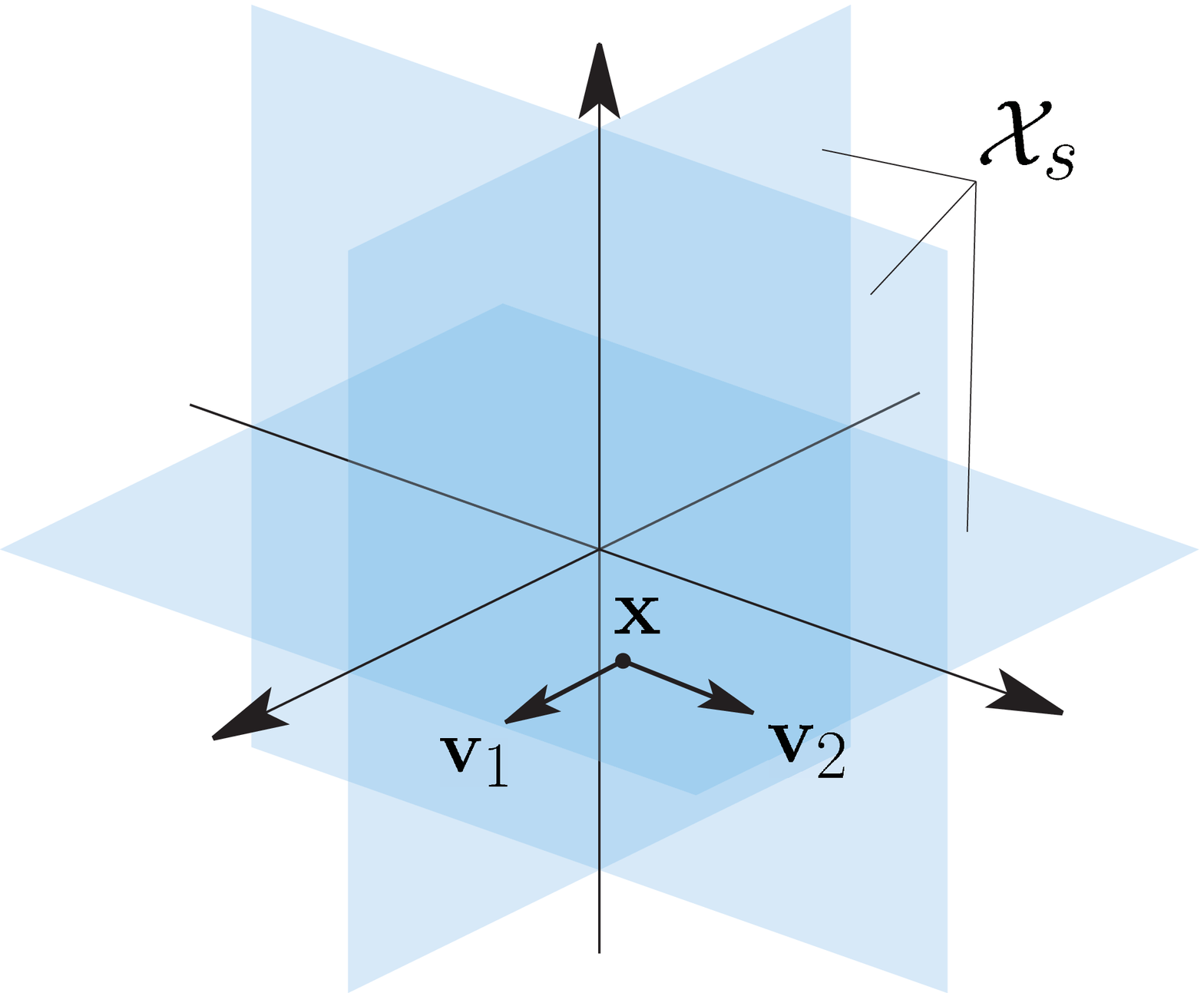}
\label{fig_CCRB_maxsupp}}
\hfil
\subfigure[The nonmaximal support case]{\includegraphics[width=.25\textwidth]{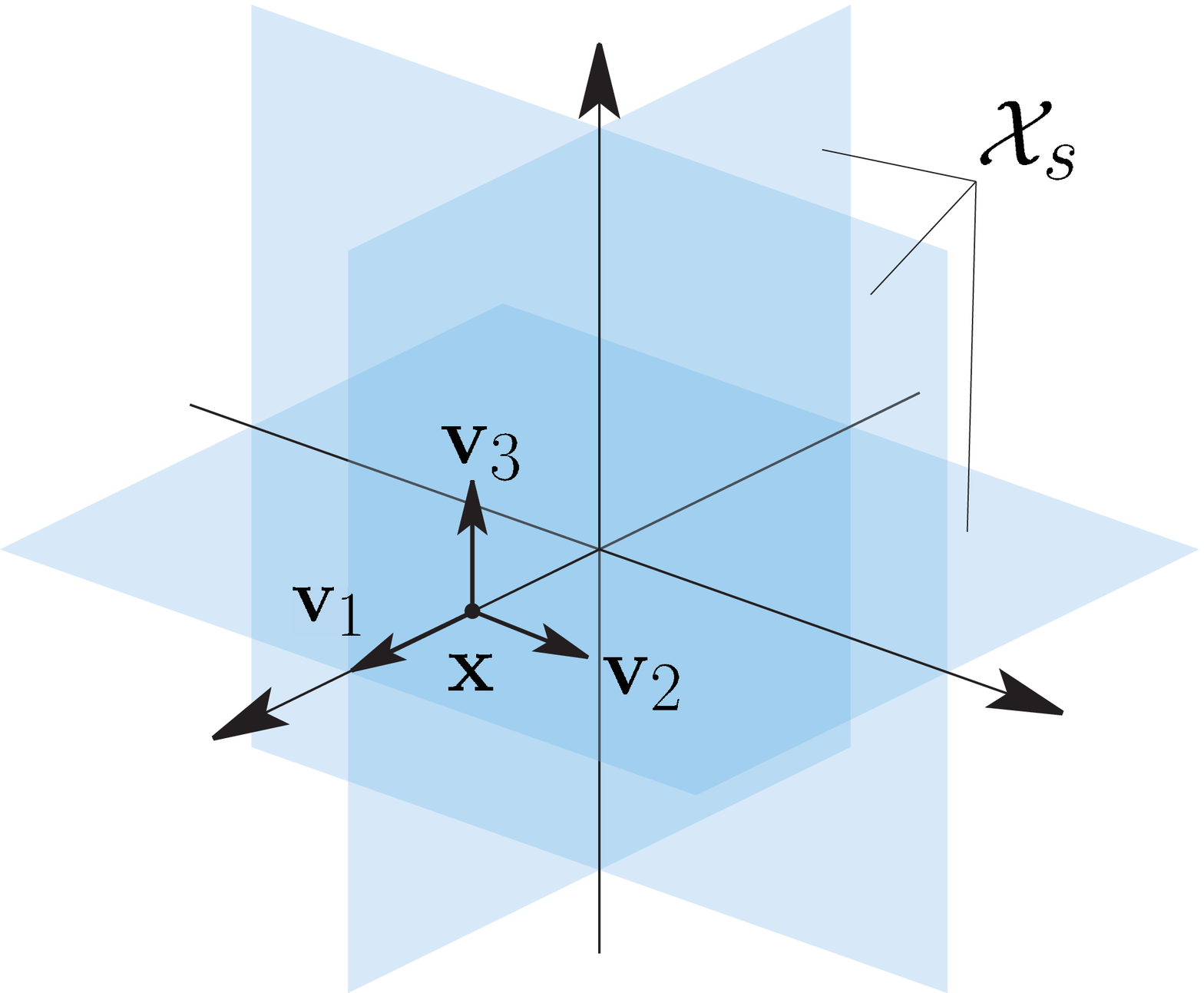}
\label{fig_CCRB_nonmaxsupp}}}
\caption{A geometrical demonstration of the discontinuity between the neighborhood structures of the maximal and the nonmaximal support cases.
The vectors $\mathbf{v}_i$ are base vectors of the neighborhood subspace $\mathcal{F}$.}
\label{fig_CCRB_discont}
\end{figure}

On the other hand, if a stronger condition, global unbiasedness, is imposed on the considered estimators instead of unbiasedness in the neighborhood, i.e. the estimators are restricted to be unbiased for all $\mathbf{x}\in\mathcal{X}_s$, then this discontinuity should not occur. Thus the corresponding lower bound should also be continuous as $x_q\rightarrow0$, i.e. $\mathbf{x}$ changes from having maximal support to having nonmaximal support. Since $\gamma_1\leq\gamma_2$, it is further demonstrated that the CCRB for maximal support is not sufficiently tight for estimators in $\mathcal{U}$, especially when the support of $\mathbf{x}$ is nearly nonmaximal, i.e. at least one of the non-zero entries is small compared to other non-zero entries.

\subsection{Further Analysis of the CCRB} \label{analysis_CCRB}

In this subsection we aim to analyze the behavior of the CCRB. The following analysis will mainly focus on the maximal support case. As can be seen from \eqref{CCRB_1} and \eqref{CCRB_2}, the CCRB of the two cases share a similar form, and thus the main results provided in the following will still be valid for the nonmaximal support case with minor modifications.

The expression \eqref{CCRB_1} contains two terms, the first of which is rather simple, while the second of which is much more complicated.
The following proposition relates the first term $\sigma_{\mathbf{x}}^2\tr\left((\mathbf{A}_S^\mathrm{T}\mathbf{A}_S)^{-1}\right)$ to the MSE of the oracle estimator, which provides an intuitive explanation. The definition and performance of the oracle estimator can be summarized as follows.
\begin{prop}\label{prop:pinv}
For a given support $S$ whose size is $s$, the oracle estimator $\hat{\mathbf{x}}_{\mathrm{or}}$ is defined as
\begin{equation*}
\begin{aligned}
\left(\hat{\mathbf{x}}_{\mathrm{or}}(\mathbf{y})\right)_S
&=\mathbf{A}_S^\dagger\mathbf{y}=(\mathbf{A}_S^\mathrm{T}\mathbf{A}_S)^{-1}\mathbf{A}_S^\mathrm{T}\mathbf{y}, \\
\left(\hat{\mathbf{x}}_{\mathrm{or}}(\mathbf{y})\right)_{S^c}&=0.
\end{aligned}
\end{equation*}
This estimator is unbiased in the neighborhood of any parameter value whose support is $S$.
The MSE of $\hat{\mathbf{x}}_{\mathrm{or}}$ is
\begin{equation}
\sigma_{\mathbf{x}}^2\tr\left((\mathbf{A}_S^\mathrm{T}\mathbf{A}_S)^{-1}\right),\qquad\forall\mathbf{x}:\mathrm{supp}(\mathbf{x})=S.
\end{equation}
\end{prop}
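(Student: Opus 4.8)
The plan is to exploit the known support $S$ to collapse the perturbed model \eqref{fund_eq} into an equivalent additive--noise model, and then read off both unbiasedness and the MSE from standard least--squares identities. For any $\mathbf{x}$ whose support is contained in $S$, the columns of $\mathbf{A}$ and $\mathbf{E}$ outside $S$ are multiplied by zero, so \eqref{fund_eq} becomes $\mathbf{y}=\mathbf{A}_S\mathbf{x}_S+\mathbf{w}$ with the effective noise $\mathbf{w}=\mathbf{E}_S\mathbf{x}_S+\mathbf{n}$. The first step is to establish the statistics of $\mathbf{w}$: since $\mathbf{E}$ and $\mathbf{n}$ are independent and zero--mean, $E[\mathbf{w}]=0$; and because the entries of $\mathbf{E}$ are i.i.d.\ with variance $\sigma_e^2$, a direct computation of $E[(\mathbf{E}_S\mathbf{x}_S)(\mathbf{E}_S\mathbf{x}_S)^\mathrm{T}]$ collapses to $\sigma_e^2\|\mathbf{x}_S\|_{\ell_2}^2\mathbf{U}_m$, whence $\mathrm{Cov}(\mathbf{w})=(\sigma_e^2\|\mathbf{x}\|_{\ell_2}^2+\sigma_n^2)\mathbf{U}_m=\sigma_{\mathbf{x}}^2\mathbf{U}_m$, with $\sigma_{\mathbf{x}}^2$ as in \eqref{equiv_noise}. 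This identification of the multiplicative perturbation with an effective white noise of variance $\sigma_{\mathbf{x}}^2$ is the one genuinely model--specific step, and I regard it as the crux of the argument.

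Once the effective model is in place, unbiasedness is immediate. Substituting $\mathbf{y}=\mathbf{A}_S\mathbf{x}_S+\mathbf{w}$ into the definition and using $\mathbf{A}_S^\dagger\mathbf{A}_S=(\mathbf{A}_S^\mathrm{T}\mathbf{A}_S)^{-1}\mathbf{A}_S^\mathrm{T}\mathbf{A}_S=\mathbf{U}_s$ (legitimate because $\mathrm{spark}(\mathbf{A})>2s$ forces $\mathbf{A}_S$ to have full column rank) gives $(\hat{\mathbf{x}}_{\mathrm{or}})_S=\mathbf{x}_S+\mathbf{A}_S^\dagger\mathbf{w}$, so $E[(\hat{\mathbf{x}}_{\mathrm{or}})_S]=\mathbf{x}_S$; off the support both the estimate and the parameter vanish. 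Hence the bias $\mathbf{b}(\mathbf{x})=E[\hat{\mathbf{x}}_{\mathrm{or}}-\mathbf{x}]$ is identically zero on the whole set of vectors supported in $S$. Since every direction $\mathbf{v}\in\mathcal{F}(\mathbf{x}_0)=\mathrm{span}\{\mathbf{e}_k:k\in S\}$ keeps the perturbed point $\mathbf{x}_0+\epsilon\mathbf{v}$ supported in $S$, both conditions of the neighborhood--unbiasedness definition in the Remark (vanishing bias and vanishing directional derivative) hold trivially.

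Finally, I would compute the MSE. Because the off--support error is zero and $\mathbf{w}$ is zero--mean, $\mathrm{mse}(\hat{\mathbf{x}}_{\mathrm{or}})=E[\|\mathbf{A}_S^\dagger\mathbf{w}\|_{\ell_2}^2]=\tr\!\big(\mathbf{A}_S^\dagger\,\mathrm{Cov}(\mathbf{w})\,(\mathbf{A}_S^\dagger)^\mathrm{T}\big)$, and inserting $\mathrm{Cov}(\mathbf{w})=\sigma_{\mathbf{x}}^2\mathbf{U}_m$ together with the pseudoinverse identity $\mathbf{A}_S^\dagger(\mathbf{A}_S^\dagger)^\mathrm{T}=(\mathbf{A}_S^\mathrm{T}\mathbf{A}_S)^{-1}$ yields $\sigma_{\mathbf{x}}^2\tr((\mathbf{A}_S^\mathrm{T}\mathbf{A}_S)^{-1})$, as claimed. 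Apart from the covariance calculation noted above, every step here is routine linear algebra---full column rank of $\mathbf{A}_S$, the standard trace manipulation $E[\|\mathbf{M}\mathbf{w}\|_{\ell_2}^2]=\tr(\mathbf{M}\,\mathrm{Cov}(\mathbf{w})\,\mathbf{M}^\mathrm{T})$, and the cancellation in $\mathbf{A}_S^\dagger(\mathbf{A}_S^\dagger)^\mathrm{T}$---so I anticipate no serious obstacle beyond keeping the index bookkeeping straight.
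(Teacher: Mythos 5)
Your proof is correct, and it follows exactly the route the paper itself relies on: the paper states Proposition \ref{prop:pinv} without a separate proof, but its justification rests on the same reformulation $\mathbf{y}=\mathbf{A}\mathbf{x}+\mathbf{n}_{\mathbf{x}}$ with equivalent white noise $\mathbf{n}_{\mathbf{x}}\sim\mathcal{N}(0,\sigma_{\mathbf{x}}^2\mathbf{U}_m)$ established in Appendix \ref{proof_FIM}, after which unbiasedness and the trace identity $\mathbf{A}_S^\dagger(\mathbf{A}_S^\dagger)^\mathrm{T}=(\mathbf{A}_S^\mathrm{T}\mathbf{A}_S)^{-1}$ give the stated MSE just as you argue. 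No gaps; your handling of the neighborhood-unbiasedness condition (bias vanishing identically on vectors supported in $S$, hence with vanishing directional derivatives along $\mathcal{F}$) is exactly what the paper's definition requires.
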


Proposition \ref{prop:pinv} demonstrates that the first term of the CCRB is just the MSE of the oracle estimator. This term is also similar to the CCRB of the case where only measurement noise exists. Various references (e.g. \cite{Candes_Dantzig,Eldar_SparseCRB}) have shown that the CCRB with only measurement noise is the variance of the noise on $\mathbf{y}$ multiplied by $\tr{(\mathbf{A}_S^\mathrm{T}\mathbf{A}_S)^{-1}}$, and the oracle estimator achieves this bound.

The second term of \eqref{CCRB_1} stems from the dependence of the variance of the total noise on the parameter $\mathbf{x}$, and reveals a possibility that utilizing matrix perturbation might help estimate $\mathbf{x}$ more accurately\blue{\footnote{\blue{
An extreme example is given in Appendix \ref{eg_analy_CCRB}.}}}. However, it can be shown that this term is not dominant in the CCRB under certain assumptions on $\mathbf{A}$, which is given as follows.

\begin{assump}\label{assump:sensing_mat}
For the sensing matrix $\mathbf{A}$ in \eqref{CCRB_1}, it is assumed that there exist constants $\vartheta_{\mathrm{l},s}\in(0,1)$ and $\vartheta_{\mathrm{u},s}>0$ such that
\begin{equation}
(1-\vartheta_{\mathrm{l},s})\|\mathbf{x}\|^2_{\ell_2}\leq\|\mathbf{A}\mathbf{x}\|_{\ell_2}^2\leq(1+\vartheta_{\mathrm{u},s})\|\mathbf{x}\|_{\ell_2}^2
\end{equation}
for any $s$-sparse parameter $\mathbf{x}$.
\end{assump}

Assumption \ref{assump:sensing_mat} is very similar to the restricted isometry property \cite{Candes_RIP,Candes_DecodingLP}, and has the same form as the asymmetric restricted isometry property in \cite{Blanchard_RIP}. However, the sensing matrix is not restricted to be underdetermined in our case. With Assumption \ref{assump:sensing_mat}, the following theorem can be derived providing lower and upper bounds on the second term of the CCRB.

\begin{thm}\label{thm:bounds_dCCRB}
Denote the opposite of the second term of the CCRB as $d_\mathrm{CCRB}$, i.e.
\begin{equation}\label{dCCRB}
d_\mathrm{CCRB}
=\sigma_\mathbf{x}^2\cdot\frac{2m\sigma_e^4\|(\mathbf{A}_S^\mathrm{T}\mathbf{A}_S)^{-1}\mathbf{x}_S\|^2_{\ell_2}}
{\sigma_\mathbf{x}^2+2m\sigma_e^4\mathbf{x}_S^\mathrm{T}(\mathbf{A}_S^\mathrm{T}\mathbf{A}_S)^{-1}\mathbf{x}_S}.
\end{equation}
Then $d_\mathrm{CCRB}$ satisfies the following inequalities:
\begin{equation}\label{ubound_dCCRB}
\begin{aligned}
d_\mathrm{CCRB}\lesseqqgtr
\sigma_\mathbf{x}^2\frac{(1+\vartheta_{\pm,s})^2}{(1+\vartheta_{\mp,s})^2} \cdot\frac{2c_e}{2(1+\vartheta_{\mp,s})c_e+1+\vartheta_{\pm,s}+c_n/c_e},
\end{aligned}
\end{equation}
where $\vartheta_{+,s}=\vartheta_{\mathrm{u},s},\vartheta_{-,s}=-\vartheta_{\mathrm{l},s}$, and
\begin{equation*}
\begin{aligned}
c_e=\frac{ms\sigma_e^2}{\sum_{i=1}^m\sum_{j\in S}A_{ij}^2}=\frac{ms\sigma_e^2}{\tr(\mathbf{A}_S^\mathrm{T}\mathbf{A}_S)}, \quad
c_n=\frac{m\sigma_n^2}{\|\mathbf{x}\|_{\ell_2}^2}
\end{aligned}
\end{equation*}
indicate the matrix perturbation level and the measurement noise level\footnote{One may argue that $m\sigma_n^2/\|\mathbf{A}\mathbf{x}\|_{\ell_2}^2$ seems more reasonable as an indication of the measurement noise level. However, because of Assumption \ref{assump:sensing_mat}, one has $\|\mathbf{A}\mathbf{x}\|_{\ell_2}^2\approx\|\mathbf{x}\|_{\ell_2}^2$, and thus for simplicity $m\sigma_n^2/\|\mathbf{x}\|_{\ell_2}^2$ is employed instead.} respectively. The ratio $\gamma_\mathrm{CCRB}=d_\mathrm{CCRB}/(\mathrm{CCRB}+d_\mathrm{CCRB})$, i.e. the ratio of the second term to the first term of the CCRB, is bounded by
the following inequalities:
\begin{equation}\label{ubound_dratio}
\begin{aligned}
\gamma_\mathrm{CCRB}
\lesseqqgtr\frac{1}{s}\frac{(1+\vartheta_{\pm,s})^3}{(1+\vartheta_{\mp,s})^2} \cdot\frac{2c_e}{2(1+\vartheta_{\mp,s})c_e+1+\vartheta_{\pm,s}+c_n/c_e}
\end{aligned}
\end{equation}
\end{thm}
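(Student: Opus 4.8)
The plan is to reduce everything to eigenvalue estimates for the Gram matrix $\mathbf{G}:=\mathbf{A}_S^\mathrm{T}\mathbf{A}_S$. Since $\mathbf{x}$ has support $S$ with $|S|=s$, Assumption \ref{assump:sensing_mat} restricted to vectors supported on $S$ says precisely that $\mathbf{x}_S^\mathrm{T}\mathbf{G}\mathbf{x}_S=\|\mathbf{A}_S\mathbf{x}_S\|_{\ell_2}^2$ lies between $(1+\vartheta_{-,s})\|\mathbf{x}_S\|_{\ell_2}^2$ and $(1+\vartheta_{+,s})\|\mathbf{x}_S\|_{\ell_2}^2$, so every eigenvalue of $\mathbf{G}$ lies in $[1+\vartheta_{-,s},\,1+\vartheta_{+,s}]$. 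From this single fact I would read off all the estimates I need: writing $t:=\|\mathbf{x}_S\|_{\ell_2}^2=\|\mathbf{x}\|_{\ell_2}^2$, the form $\mathbf{x}_S^\mathrm{T}\mathbf{G}^{-1}\mathbf{x}_S$ lies in $[t/(1+\vartheta_{+,s}),\,t/(1+\vartheta_{-,s})]$, the quantity $\|\mathbf{G}^{-1}\mathbf{x}_S\|_{\ell_2}^2=\mathbf{x}_S^\mathrm{T}\mathbf{G}^{-2}\mathbf{x}_S$ lies in $[t/(1+\vartheta_{+,s})^2,\,t/(1+\vartheta_{-,s})^2]$, and the traces satisfy $\tr(\mathbf{G})\in[s(1+\vartheta_{-,s}),\,s(1+\vartheta_{+,s})]$ and $\tr(\mathbf{G}^{-1})\in[s/(1+\vartheta_{+,s}),\,s/(1+\vartheta_{-,s})]$.

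Next I would regard $d_\mathrm{CCRB}$ from \eqref{dCCRB} as a function of the two scalars $P:=\|\mathbf{G}^{-1}\mathbf{x}_S\|_{\ell_2}^2$ and $Q:=\mathbf{x}_S^\mathrm{T}\mathbf{G}^{-1}\mathbf{x}_S$, namely $d_\mathrm{CCRB}=\sigma_\mathbf{x}^2\cdot 2m\sigma_e^4 P/(\sigma_\mathbf{x}^2+2m\sigma_e^4 Q)$. This expression is increasing in $P$ and decreasing in $Q$, so the upper bound follows by inserting the largest admissible $P$ together with the smallest admissible $Q$, and the lower bound by the opposite choice. This is exactly the origin of the $\pm/\mp$ pattern in \eqref{ubound_dCCRB}: the two signs correspond to the two extremal selections of eigenvalues, and both inequalities are obtained by one and the same computation with the roles of $1+\vartheta_{+,s}$ and $1+\vartheta_{-,s}$ interchanged.

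The remaining and most delicate step is to re-express the resulting bound, which comes out naturally in terms of $m\sigma_e^2$ and $\sigma_n^2/(\sigma_e^2 t)$, in terms of the normalized quantities $c_e$ and $c_n$. The subtlety is that $c_e=ms\sigma_e^2/\tr(\mathbf{G})$ hides the average eigenvalue $\bar{\lambda}:=\tr(\mathbf{G})/s\in[1+\vartheta_{-,s},\,1+\vartheta_{+,s}]$, so that $m\sigma_e^2=c_e\bar{\lambda}$ and $\sigma_n^2/(\sigma_e^2 t)=c_n/(c_e\bar{\lambda})$. After dividing numerator and denominator of the fraction by $\sigma_e^2 t$, I would replace $\bar{\lambda}$ by its extreme values $1+\vartheta_{\pm,s}$ separately in each of its occurrences, always in the direction that enlarges the overall quotient for the upper bound (and shrinks it for the lower bound); care is required because $\bar{\lambda}$ appears in three places with differing monotonicities, and one must check that each individual replacement moves the bound the correct way. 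Carrying this through, the factors collect into precisely $(1+\vartheta_{\pm,s})^2/(1+\vartheta_{\mp,s})^2$ times $2c_e/[\,2(1+\vartheta_{\mp,s})c_e+1+\vartheta_{\pm,s}+c_n/c_e\,]$, which is \eqref{ubound_dCCRB}. I expect this bookkeeping of signs and directions, rather than any deep inequality, to be the main obstacle.

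Finally, for $\gamma_\mathrm{CCRB}$ I would use that the first term of the CCRB in Theorem \ref{thm:CCRB} equals $\mathrm{CCRB}+d_\mathrm{CCRB}=\sigma_\mathbf{x}^2\tr(\mathbf{G}^{-1})$, so that $\gamma_\mathrm{CCRB}=d_\mathrm{CCRB}/(\sigma_\mathbf{x}^2\tr(\mathbf{G}^{-1}))$. Dividing the bound on $d_\mathrm{CCRB}$ already obtained by $\sigma_\mathbf{x}^2\tr(\mathbf{G}^{-1})$ and inserting the trace estimate $\tr(\mathbf{G}^{-1})\in[s/(1+\vartheta_{+,s}),\,s/(1+\vartheta_{-,s})]$ in the direction consistent with the chosen sign—for the upper bound the lower estimate $s/(1+\vartheta_{+,s})$—contributes one factor $1/s$ and one extra factor $1+\vartheta_{+,s}$, which upgrades $(1+\vartheta_{\pm,s})^2$ to $(1+\vartheta_{\pm,s})^3$ and yields \eqref{ubound_dratio}.
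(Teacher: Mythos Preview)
Your proposal is correct and follows essentially the same route as the paper's proof in Appendix~\ref{proof_bounds_dCCRB}: eigenvalue bounds on $\mathbf{G}=\mathbf{A}_S^\mathrm{T}\mathbf{A}_S$ from Assumption~\ref{assump:sensing_mat} yield the two-sided estimates \eqref{norm_ineq1}--\eqref{norm_ineq2} on $Q$ and $P$, these are substituted into \eqref{dCCRB} using the monotonicity you describe to get the intermediate bound \eqref{temp_bound_dCCRB} in terms of $m\sigma_e^2$, and then $m\sigma_e^2=c_e\,\tr(\mathbf{G})/s$ is replaced occurrence-by-occurrence using the trace bound. Your treatment of $\gamma_\mathrm{CCRB}$ via $\tr(\mathbf{G}^{-1})$ is exactly what the paper means by ``similar techniques''; if anything you are slightly more explicit than the paper about the separate monotonic replacements of $\bar{\lambda}$.
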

\begin{proof}
See Appendix \ref{proof_bounds_dCCRB}.
\end{proof}

Theorem \ref{thm:bounds_dCCRB} provides very simple approximate expressions that captures how $d_\mathrm{CCRB}$ and $\gamma_\mathrm{CCRB}$ vary with the noise level
$c_e$ and $c_n$:
\begin{align}
d_\mathrm{CCRB}&\approx\sigma_\mathbf{x}^2\cdot\frac{2c_e}{2c_e+1+c_n/c_e}, \label{approx_dCCRB}\\
\gamma_\mathrm{CCRB}&\approx\frac{1}{s}\cdot\frac{2c_e}{2c_e+1+c_n/c_e}, \label{approx_dratio}
\end{align}
provided that the constants $\vartheta_{\mathrm{l},s}$ and $\vartheta_{\mathrm{u},s}$ are small compared to $1$.
Apparently the quantity $2c_e/(2c_e+1+c_n/c_e)$ is always less than $1$, and therefore $\gamma_\mathrm{CCRB}$ can be upper-bounded approximately by $1/s$. Furthermore, \eqref{approx_dratio} implies that as $s$ increases, the second term $d_\mathrm{CCRB}$ becomes less and less important, and finally becomes negligible. This can be considered as the asymptotic behavior of $d_\mathrm{CCRB}$ or the CCRB,
which can be summarized as the following corollary.

\begin{corol}\label{corol:asymp}
Assume that there exist constants $\epsilon_\mathrm{l}\in(0,1)$ and $\epsilon_\mathrm{u}>0$ such that when $s$ tends to infinity, the constants $\vartheta_{\mathrm{l},s}$ and $\vartheta_{\mathrm{u},s}$ keep satisfying $\vartheta_{\mathrm{l},s}<\epsilon_\mathrm{l}$ and $\vartheta_{\mathrm{u},s}<\epsilon_\mathrm{u}$ for every $s$. Then if $c_e$ and $c_n$ remains constant, $\gamma_\mathrm{CCRB}$ possesses the following asymptotic behavior:
\begin{equation}
\frac{A}{s}\leq\gamma_\mathrm{CCRB}\leq\frac{B}{s},
\end{equation}
where $A$ and $B$ are some positive constants.
\end{corol}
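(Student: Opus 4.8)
The plan is to start directly from the two-sided estimate \eqref{ubound_dratio} established in Theorem \ref{thm:bounds_dCCRB} and simply convert its $s$-dependent coefficient into $s$-independent constants. First I would unpack the compact $\lesseqqgtr$ and $\pm/\mp$ notation. Recalling $\vartheta_{+,s}=\vartheta_{\mathrm{u},s}$ and $\vartheta_{-,s}=-\vartheta_{\mathrm{l},s}$, the top signs yield the upper bound
\begin{equation*}
\gamma_\mathrm{CCRB}\leq\frac{1}{s}\cdot\frac{(1+\vartheta_{\mathrm{u},s})^3}{(1-\vartheta_{\mathrm{l},s})^2}\cdot\frac{2c_e}{2(1-\vartheta_{\mathrm{l},s})c_e+1+\vartheta_{\mathrm{u},s}+c_n/c_e},
\end{equation*}
while the bottom signs give the matching lower bound
\begin{equation*}
\gamma_\mathrm{CCRB}\geq\frac{1}{s}\cdot\frac{(1-\vartheta_{\mathrm{l},s})^3}{(1+\vartheta_{\mathrm{u},s})^2}\cdot\frac{2c_e}{2(1+\vartheta_{\mathrm{u},s})c_e+1-\vartheta_{\mathrm{l},s}+c_n/c_e}.
\end{equation*}
Multiplying both through by $s$, it then suffices to bound the two $s$-dependent coefficients above and below by positive constants, uniformly in $s$.

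For the upper constant $B$, I would bound each factor of the coefficient in the upper bound separately, using the hypotheses $\vartheta_{\mathrm{u},s}<\epsilon_\mathrm{u}$ and $\vartheta_{\mathrm{l},s}<\epsilon_\mathrm{l}$ together with $\vartheta_{\mathrm{u},s}>0$: replace $(1+\vartheta_{\mathrm{u},s})^3$ by $(1+\epsilon_\mathrm{u})^3$, bound $(1-\vartheta_{\mathrm{l},s})^2$ from below by $(1-\epsilon_\mathrm{l})^2$, and minimize the denominator of the remaining fraction by dropping $\vartheta_{\mathrm{u},s}$ and replacing $1-\vartheta_{\mathrm{l},s}$ by $1-\epsilon_\mathrm{l}$. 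This produces
\begin{equation*}
B=\frac{(1+\epsilon_\mathrm{u})^3}{(1-\epsilon_\mathrm{l})^2}\cdot\frac{2c_e}{2(1-\epsilon_\mathrm{l})c_e+1+c_n/c_e}.
\end{equation*}
Symmetrically, for the lower constant $A$ I would bound the coefficient in the lower bound in the opposite directions, namely $(1-\vartheta_{\mathrm{l},s})^3\geq(1-\epsilon_\mathrm{l})^3$, $(1+\vartheta_{\mathrm{u},s})^2\leq(1+\epsilon_\mathrm{u})^2$, and maximize the denominator using $\vartheta_{\mathrm{u},s}<\epsilon_\mathrm{u}$ and $-\vartheta_{\mathrm{l},s}<0$, obtaining
\begin{equation*}
A=\frac{(1-\epsilon_\mathrm{l})^3}{(1+\epsilon_\mathrm{u})^2}\cdot\frac{2c_e}{2(1+\epsilon_\mathrm{u})c_e+1+c_n/c_e}.
\end{equation*}
Since $c_e$ and $c_n$ are held constant, both $A$ and $B$ are fixed numbers, which establishes $A/s\leq\gamma_\mathrm{CCRB}\leq B/s$.

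The only delicate point, and the main obstacle, is ensuring the strict positivity of $A$ and the finiteness of $B$, both of which rest entirely on the hypothesis $\epsilon_\mathrm{l}<1$: this is precisely what keeps $1-\epsilon_\mathrm{l}>0$, so that $(1-\epsilon_\mathrm{l})^3>0$ in the numerator of $A$ and the denominators appearing in both constants remain bounded away from zero. I would therefore verify, at each substitution, that the direction of the inequality is consistent with the sign of the quantity being replaced, relying on $\vartheta_{\mathrm{l},s},\vartheta_{\mathrm{u},s}\geq0$, since that is the only place a slip could occur; the remaining algebra is routine.
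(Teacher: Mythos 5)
Your proof is correct and takes essentially the same route the paper intends: the corollary is presented as a direct consequence of the two-sided bound \eqref{ubound_dratio} of Theorem \ref{thm:bounds_dCCRB}, and your unpacking of the $\pm/\mp$ notation followed by substitution of the uniform bounds $\epsilon_\mathrm{l}$ and $\epsilon_\mathrm{u}$ simply supplies the details the paper leaves implicit. The explicit constants $A$ and $B$ you exhibit are valid, and your observation that $\epsilon_\mathrm{l}<1$ is what guarantees $A>0$ and keeps all denominators bounded away from zero is precisely the right check.
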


\begin{rem}
This corollary demonstrates that as $s\rightarrow\infty$, the CCRB approaches $\sigma_\mathbf{x}^2\tr\left((\mathbf{A}_S^\mathrm{T}\mathbf{A}_S)^{-1}\right)$ which is just the MSE of the oracle estimator. The oracle estimator is the solution of the following minimization problem:
\begin{equation*}
\mathop{\mathrm{arg}}_{\mathbf{x}_S}\min\|\mathbf{y}-\mathbf{A}_S\mathbf{x}_S\|_{\ell_2}^2,
\end{equation*}
which merely minimizes the residual and completely ignores the dependence of the total noise on $\mathbf{x}$. It can be concluded that as $s$ increases, less information can be possibly obtained from the dependence of the total noise on $\mathbf{x}$ to help reduce the estimation error, and finally this dependence can be ignored.
\end{rem}

\section{The Hammersley-Chapman-Robbins Bound}\label{sec:HCRB}

The constrained CRB given in the above section possesses a simple closed form, but it takes into account only the unbiasedness in the neighborhood of a specific parameter value rather than the global unbiasedness for all sparse vectors.
Therefore it can be anticipated that for estimators that are globally unbiased for sparse parameter values (i.e. for all estimators in
$\mathcal{U}$), the CCRB is not a very tight lower bound.

In this section, the Hammersley-Chapman-Robbins bound (HCRB) is derived for sparse estimation under the setting of general perturbation.
However, the calculation of the HCRB for general sensing matrix $\mathbf{A}$ is much more complicated, and therefore attention is focussed only on the simplest case of unit sensing matrix in this section \cite{Eldar_SSNM,Eldar_SLM_RKHS}. Nevertheless, the HCRB of this simple case is still instructive for us to have a qualitative understanding of the HCRB for general cases.

The HCRB in the context of sparse estimation with general perturbation can be summarized as the following lemma.

\begin{lem}\label{lem:HCRB_general}
In the setting given in Section \ref{sec:Fund_problem}, consider a specific parameter value $\mathbf{x}\in\mathcal{X}_s$.
Suppose $\{\mathbf{v}_i\}_{i=1}^k$ are $k$ vectors such that $\mathbf{x}+\mathbf{v}_i\in\mathcal{X}_s$ for all $i=1,\ldots,k$.
Then the covariance matrix of any unbiased estimator $\hat{\mathbf{x}}\in\mathcal{U}$ at $\mathbf{x}$ satisfies
\begin{equation}\label{HCRB_sparse_general}
\mathrm{Cov}(\hat{\mathbf{x}})\succeq
\mathbf{V}\mathbf{H}^{\dagger}\mathbf{V}^\mathrm{T}.
\end{equation}
Here the matrix $\mathbf{V}$ is given by
\begin{equation}
\mathbf{V}=[\mathbf{v}_1,\ldots,\mathbf{v}_k],
\end{equation}
and the $(i,j)$th element of $\mathbf{H}$ is
\begin{equation}\label{Hij_general}
\begin{aligned}
H_{ij} = \left(\frac{\sigma_\mathbf{x}^2\varsigma_{\mathbf{x},\mathbf{v}_i,\mathbf{v}_j}^2}{\sigma_{\mathbf{x}+\mathbf{v}_i}^2\sigma_{\mathbf{x}+\mathbf{v}_j}^2}\right)^{\frac{m}{2}}
\exp\left[-\frac{\|\mathbf{A}\mathbf{v}_i\|_{\ell_2}^2}{2\sigma_{\mathbf{x}+\mathbf{v}_i}^2}-\frac{\|\mathbf{A}\mathbf{v}_j\|_{\ell_2}^2}{2\sigma_{\mathbf{x}+\mathbf{v}_j}^2} +\frac{\varsigma_{\mathbf{x},\mathbf{v}_i,\mathbf{v}_j}^2}{2}\left\|\frac{\mathbf{A}\mathbf{v}_i}{\sigma_{\mathbf{x}+\mathbf{v}_i}^2}
+\frac{\mathbf{A}\mathbf{v}_j}{\sigma_{\mathbf{x}+\mathbf{v}_j}^2}\right\|_{\ell_2}^2 \right] - 1,
\end{aligned}
\end{equation}
where
\begin{equation}
\frac{1}{\varsigma_{\mathbf{x},\mathbf{v}_i,\mathbf{v}_j}^2}
=\frac{1}{\sigma_{\mathbf{x}+\mathbf{v}_i}^2}+\frac{1}{\sigma_{\mathbf{x}+\mathbf{v}_j}^2}-\frac{1}{\sigma_{\mathbf{x}}^2}.
\end{equation}
\end{lem}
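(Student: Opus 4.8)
The plan is to specialize the general multivariate Hammersley-Chapman-Robbins covariance inequality to the Gaussian likelihood that the perturbation model \eqref{fund_eq} induces, so that the whole statement reduces to evaluating a single Gaussian integral. Throughout I would write $\varsigma^2$ for $\varsigma_{\mathbf{x},\mathbf{v}_i,\mathbf{v}_j}^2$ and $\sigma_i^2,\sigma_j^2$ for $\sigma_{\mathbf{x}+\mathbf{v}_i}^2,\sigma_{\mathbf{x}+\mathbf{v}_j}^2$ to declutter the algebra.

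First I would establish the abstract bound \eqref{HCRB_sparse_general}. For each test point introduce the likelihood ratio $L_i(\mathbf{y})=p(\mathbf{y};\mathbf{x}+\mathbf{v}_i)/p(\mathbf{y};\mathbf{x})$ and set $\boldsymbol{\eta}=(L_1-1,\ldots,L_k-1)^{\mathrm{T}}$ and $\boldsymbol{\xi}=\hat{\mathbf{x}}-\mathbf{x}$, all expectations taken under $p(\mathbf{y};\mathbf{x})$. Because $\mathbf{x}+\mathbf{v}_i\in\mathcal{X}_s$ by hypothesis, any $\hat{\mathbf{x}}\in\mathcal{U}$ is unbiased at every test point, so $E_{\mathbf{y};\mathbf{x}}[\boldsymbol{\xi}(L_i-1)]=E_{\mathbf{y};\mathbf{x}+\mathbf{v}_i}[\hat{\mathbf{x}}]-\mathbf{x}=\mathbf{v}_i$ and $E_{\mathbf{y};\mathbf{x}}[L_i]=1$. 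Hence $E[\boldsymbol{\xi}\boldsymbol{\eta}^{\mathrm{T}}]=\mathbf{V}$ and $E[\boldsymbol{\eta}\boldsymbol{\eta}^{\mathrm{T}}]=\mathbf{H}$ with $H_{ij}=E_{\mathbf{y};\mathbf{x}}[L_iL_j]-1$, and the matrix covariance (Cauchy-Schwarz) inequality gives $\mathrm{Cov}(\hat{\mathbf{x}})=E[\boldsymbol{\xi}\boldsymbol{\xi}^{\mathrm{T}}]\succeq\mathbf{V}\mathbf{H}^{\dagger}\mathbf{V}^{\mathrm{T}}$, the range condition $\mathcal{R}(\mathbf{V}^{\mathrm{T}})\subseteq\mathcal{R}(\mathbf{H})$ being automatic since $\mathbf{V}^{\mathrm{T}}=E[\boldsymbol{\eta}\boldsymbol{\xi}^{\mathrm{T}}]$. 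It then remains only to compute $E_{\mathbf{y};\mathbf{x}}[L_iL_j]$.

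Next I would pin down the likelihood. As in the derivation of Lemma \ref{lemma:FIM}, the total noise $\mathbf{E}\mathbf{x}+\mathbf{n}$ is a sum of independent zero-mean Gaussians whose covariance is $(\sigma_e^2\|\mathbf{x}\|_{\ell_2}^2+\sigma_n^2)\mathbf{U}_m=\sigma_{\mathbf{x}}^2\mathbf{U}_m$, so $\mathbf{y}\sim\mathcal{N}(\mathbf{A}\mathbf{x},\sigma_{\mathbf{x}}^2\mathbf{U}_m)$. Substituting the Gaussian densities,
\[
E_{\mathbf{y};\mathbf{x}}[L_iL_j]=\int_{\mathbb{R}^m}\frac{p(\mathbf{y};\mathbf{x}+\mathbf{v}_i)\,p(\mathbf{y};\mathbf{x}+\mathbf{v}_j)}{p(\mathbf{y};\mathbf{x})}\,\ud\mathbf{y},
\]
an unnormalized Gaussian integral. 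Since every covariance is a scalar multiple of $\mathbf{U}_m$, the integrand factorizes across the $m$ coordinates, and completing the square in $\mathbf{y}$ produces a quadratic term with coefficient $\tfrac12(\sigma_i^{-2}+\sigma_j^{-2}-\sigma_{\mathbf{x}}^{-2})=\tfrac{1}{2}\varsigma^{-2}$. The integral therefore converges exactly when $\varsigma^2>0$, and its normalizing constant is $(\sigma_{\mathbf{x}}^2\varsigma^2/(\sigma_i^2\sigma_j^2))^{m/2}$, matching the prefactor in \eqref{Hij_general}.

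The real work — and the step I expect to be the main obstacle — is simplifying the exponent left by completing the square. Writing $\mathbf{a}=\mathbf{A}\mathbf{x}$, it takes the form $\tfrac{\varsigma^2}{2}\|\boldsymbol{\beta}\|_{\ell_2}^2+\kappa$, where $\boldsymbol{\beta}=\mathbf{a}/\varsigma^2+\mathbf{A}\mathbf{v}_i/\sigma_i^2+\mathbf{A}\mathbf{v}_j/\sigma_j^2$ and $\kappa$ gathers the terms $-\|\mathbf{A}(\mathbf{x}+\mathbf{v}_i)\|_{\ell_2}^2/(2\sigma_i^2)$, $-\|\mathbf{A}(\mathbf{x}+\mathbf{v}_j)\|_{\ell_2}^2/(2\sigma_j^2)$ and $+\|\mathbf{a}\|_{\ell_2}^2/(2\sigma_{\mathbf{x}}^2)$. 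I would expand both pieces and check that every term carrying a factor of $\mathbf{a}$ cancels: the two $\|\mathbf{a}\|_{\ell_2}^2/(2\varsigma^2)$ contributions cancel through the identity $\varsigma^{-2}=\sigma_i^{-2}+\sigma_j^{-2}-\sigma_{\mathbf{x}}^{-2}$, and the cross terms $\mathbf{a}^{\mathrm{T}}(\mathbf{A}\mathbf{v}_i/\sigma_i^2+\mathbf{A}\mathbf{v}_j/\sigma_j^2)$ cancel against the corresponding pieces of $\kappa$. What survives is precisely $-\|\mathbf{A}\mathbf{v}_i\|_{\ell_2}^2/(2\sigma_i^2)-\|\mathbf{A}\mathbf{v}_j\|_{\ell_2}^2/(2\sigma_j^2)+\tfrac{\varsigma^2}{2}\|\mathbf{A}\mathbf{v}_i/\sigma_i^2+\mathbf{A}\mathbf{v}_j/\sigma_j^2\|_{\ell_2}^2$; keeping the $\varsigma^{-2}$ substitution visible throughout is what makes this cancellation transparent. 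Subtracting the $1$ from $E_{\mathbf{y};\mathbf{x}}[L_iL_j]$ then yields \eqref{Hij_general}.
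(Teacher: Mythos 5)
Your proposal is correct and takes essentially the same route as the paper: both specialize the multivariate Hammersley--Chapman--Robbins inequality to the Gaussian likelihood $\mathbf{y}\sim\mathcal{N}(\mathbf{A}\mathbf{x},\sigma_{\mathbf{x}}^2\mathbf{U}_m)$, reduce each entry $H_{ij}$ to the single Gaussian integral $E_{\mathbf{y};\mathbf{x}}[L_iL_j]-1$, and evaluate it by completing the square, with the quadratic coefficient $\tfrac{1}{2}\varsigma_{\mathbf{x},\mathbf{v}_i,\mathbf{v}_j}^{-2}$ producing exactly \eqref{Hij_general}. The only difference is that the paper imports the abstract covariance bound as a cited result of Gorman and Hero (its Proposition \ref{prop:HCRB_original}), whereas you re-derive it via the matrix Cauchy--Schwarz/Schur-complement argument using unbiasedness at the test points $\mathbf{x}+\mathbf{v}_i\in\mathcal{X}_s$; this makes your write-up self-contained (and your observation that convergence requires $\varsigma_{\mathbf{x},\mathbf{v}_i,\mathbf{v}_j}^2>0$ is a point the paper leaves implicit), but the key lemma and the computation are the same.
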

\begin{proof}
The proof is postponed to Appendix \ref{proof_HCRB_general}.
\end{proof}

It can be seen from Lemma \ref{lem:HCRB_general} that the HCRB is actually a family of lower bounds of unbiased estimators. By employing different sets of $\mathbf{v}_i$, one will generally get different HCRBs, and the tightest one is their supremum. However, it is often impossible to obtain a closed-form expression of the supreme value of the HCRB family, and thus our task is to employ a certain set of $\mathbf{v}_i$ in the hope that the corresponding HCRB will be simple and easy to analyze.

By appropriately choosing a set of $\mathbf{v}_i$ and applying some special techniques, the following theorem of the HCRB\footnote{This lower bound is actually a limit of a family of HCRBs. Nevertheless this bound will still be referred to as the HCRB in the following text.} will be obtained.

\begin{thm}\label{thm:HCRB}
Assume that $n\geq 2$, and denote $\beta=x_q^2/\sigma_\mathbf{x}^2$, where $x_q$ is the smallest entry in magnitude of the parameter
$\mathbf{x}$, with $q$ the corresponding index. Then the MSE of any estimator $\hat{\mathbf{x}}\in\mathcal{U}$ satisfies
\begin{equation}\label{HCRB_closedform}
\begin{aligned}
\mathrm{mse}(\hat{\mathbf{x}})
\geq \sigma_\mathbf{x}^2\left(s-
\frac{2n\sigma_e^4\|\mathbf{x}\|^2_{\ell_2}}
{\sigma_{\mathbf{x}}^2+2n\sigma_e^4\|\mathbf{x}\|^2_{\ell_2}}\right) + \frac{\sigma_\mathbf{x}^2(n-s)\beta{\rm e}^{-\beta}}{{\rm e}^{\beta}-1}\cdot
\left(1-\frac{1}{n-s+{\rm e}^{\beta}(1-g(\beta))^{-1}}\right)
\end{aligned}
\end{equation}
for any $\mathbf{x}\in\mathcal{X}_s$ with $\|\mathbf{x}\|_{\ell_0}=s$.
The specific form of the function $g(\beta)$ is
\begin{equation}\label{form_func_g}
g(\beta)=\frac{\beta(1-2\sigma_e^2\beta)^2}{({\rm e}^{\beta}-1)(1+2n\sigma_e^4\beta)},
\end{equation}
and $g(\beta)$ satisfies
\begin{equation}\label{func_g_property1}
0\leq g(\beta)<1,\quad\forall\beta>0, n\geq2.
\end{equation}
When $\sigma_n$ and $\sigma_e$ are fixed, ones has
\begin{equation}\label{func_g_property2}
\lim_{x_q\rightarrow 0}g(\beta)=1.
\end{equation}
\end{thm}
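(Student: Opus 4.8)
The plan is to instantiate the general HCRB of Lemma \ref{lem:HCRB_general} at $\mathbf{A}=\mathbf{U}_n$ (so $m=n$ and $\mathbf{A}\mathbf{v}=\mathbf{v}$) for a carefully chosen collection of test vectors, and then to evaluate $\tr(\mathbf{V}\mathbf{H}^{\dagger}\mathbf{V}^{\mathrm{T}})$ through a block decomposition together with a limiting argument. I would use two families of displacements that keep $\mathbf{x}+\mathbf{v}\in\mathcal{X}_s$: first, $s$ \emph{within-support} directions $\mathbf{v}_i=\delta\mathbf{e}_{S_i}$ with $S_i\in S$, which I will later send $\delta\to 0$; second, $n-s$ \emph{magnitude-preserving swaps} $\mathbf{w}_j=x_q(\mathbf{e}_j-\mathbf{e}_q)$, one for each $j\in S^{\mathrm{c}}$, where $q$ indexes the smallest-magnitude entry. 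Each swap moves the smallest entry to a currently empty coordinate without changing $\|\mathbf{x}\|_{\ell_2}$, so that $\sigma_{\mathbf{x}+\mathbf{w}_j}^2=\sigma_{\mathbf{x}}^2$; this is exactly the property that produces the clean parameter $\beta=x_q^2/\sigma_{\mathbf{x}}^2$, and the $\delta\to 0$ limit of this construction (with the swap family fixed) is the limiting family the footnote refers to.

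First I would compute the three blocks of $\mathbf{H}$. For the swap--swap block, substituting $\sigma_{\mathbf{x}+\mathbf{w}_j}^2=\sigma_{\mathbf{x}}^2$ into \eqref{Hij_general} collapses the prefactor to $1$ and leaves only the exponent, which, using $\|\mathbf{w}_j\|_{\ell_2}^2=2x_q^2$ and $\mathbf{w}_j^{\mathrm{T}}\mathbf{w}_{j'}=x_q^2$ for $j\neq j'$, equals $2\beta$ on the diagonal and $\beta$ off it, giving $H_{jj}=\mathrm{e}^{2\beta}-1$ and $H_{jj'}=\mathrm{e}^{\beta}-1$. For the within-support block, the standard fact that the HCRB degenerates to the CRB as the displacements vanish yields $H_{ii'}=\delta^2(\mathbf{J})_{S_iS_{i'}}+o(\delta^2)$, with $\mathbf{J}$ the FIM of Lemma \ref{lemma:FIM}; denoting by $\mathbf{J}_{SS}$ the restriction of $\mathbf{J}$ to the rows and columns indexed by $S$, this reproduces the matrix $\mathbf{V}^{\mathrm{T}}\mathbf{J}\mathbf{V}=\mathbf{J}_{SS}$ of the proof of Theorem \ref{thm:CCRB}. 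For the cross block, a first-order expansion of \eqref{Hij_general} in $\delta$ (where the prefactor is again exactly $1$ and $\varsigma^2=\sigma_{\mathbf{x}+\mathbf{v}_i}^2$, a simplification I would exploit) gives $H_{ij}=\delta\,c_i+o(\delta)$ with a coefficient $c_i$ independent of $j$, so the cross block is $\delta\,\mathbf{c}\mathbf{1}^{\mathrm{T}}$ for an explicit $\mathbf{c}\in\mathbb{R}^s$.

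To dispose of $\delta$ I would write $\mathbf{H}=\mathbf{D}\tilde{\mathbf{H}}\mathbf{D}$ and $\mathbf{V}=\tilde{\mathbf{V}}\mathbf{D}$ with $\mathbf{D}=\mathrm{diag}(\delta\mathbf{U}_s,\mathbf{U}_{n-s})$, where $\tilde{\mathbf{V}}=[\mathbf{e}_{S_1},\ldots,\mathbf{e}_{S_s},\mathbf{w}_1,\ldots,\mathbf{w}_{n-s}]$ and $\tilde{\mathbf{H}}$ carries the finite limiting blocks $\mathbf{J}_{SS}$, $\mathbf{c}\mathbf{1}^{\mathrm{T}}$ and the swap block above. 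Then $\mathbf{V}\mathbf{H}^{\dagger}\mathbf{V}^{\mathrm{T}}=\tilde{\mathbf{V}}\tilde{\mathbf{H}}^{-1}\tilde{\mathbf{V}}^{\mathrm{T}}$ with the $\delta$-factors cancelling, so the bound is $\tr(\tilde{\mathbf{H}}^{-1}\tilde{\mathbf{V}}^{\mathrm{T}}\tilde{\mathbf{V}})$. I would invert $\tilde{\mathbf{H}}$ by the Schur complement over $\mathbf{J}_{SS}$ (itself inverted by Sherman--Morrison exactly as in Theorem \ref{thm:CCRB}); since the cross block is rank one along $\mathbf{1}$, the swap block is corrected only by a rank-one term along $\mathbf{1}$, keeping its eigenstructure explicit. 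Collecting terms, the $\tr(\mathbf{J}_{SS}^{-1})$ contribution reproduces precisely the CCRB \eqref{CCRB_1} at $\mathbf{A}=\mathbf{U}_n$, i.e. the first bracket of \eqref{HCRB_closedform}, while the excess diagonal $\mathrm{e}^{\beta}(\mathrm{e}^{\beta}-1)$ of the swap block furnishes the factor $\mathrm{e}^{-\beta}/(\mathrm{e}^{\beta}-1)$ and $\tilde{\mathbf{V}}^{\mathrm{T}}\tilde{\mathbf{V}}$ furnishes the factor $x_q^2=\beta\sigma_{\mathbf{x}}^2$ and the multiplicity $n-s$, assembling the second summand. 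I expect this bookkeeping --- keeping track of the scalar $\mathbf{c}^{\mathrm{T}}\mathbf{J}_{SS}^{-1}\mathbf{c}$, the off-diagonal blocks of $\tilde{\mathbf{H}}^{-1}$, and the Schur-complement trace so that all $\sigma_e$-dependence condenses into the factors $(1-2\sigma_e^2\beta)^2$ and $(1+2n\sigma_e^4\beta)$ of \eqref{form_func_g} --- to be the main obstacle, the rest being a routine if lengthy reduction.

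Finally I would verify the stated properties of $g$. Since $x_q$ is the smallest-magnitude nonzero entry, $\|\mathbf{x}\|_{\ell_2}^2\geq x_q^2$, hence $\sigma_e^2\beta=\sigma_e^2 x_q^2/\sigma_{\mathbf{x}}^2\leq 1$ (as $\sigma_{\mathbf{x}}^2\geq\sigma_e^2\|\mathbf{x}\|_{\ell_2}^2$), so $0\leq(1-2\sigma_e^2\beta)^2\leq 1$; combining this with $1+2n\sigma_e^4\beta\geq 1$ and the elementary inequality $\beta<\mathrm{e}^{\beta}-1$ for $\beta>0$ gives $0\leq g(\beta)<\beta/(\mathrm{e}^{\beta}-1)<1$, which is \eqref{func_g_property1}. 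For \eqref{func_g_property2}, as $x_q\to 0$ one has $\beta\to 0$, and the limits $(1-2\sigma_e^2\beta)^2\to 1$, $(1+2n\sigma_e^4\beta)\to 1$ and $\beta/(\mathrm{e}^{\beta}-1)\to 1$ yield $g(\beta)\to 1$.
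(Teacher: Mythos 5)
Your ingredients are largely the right ones --- the magnitude-preserving swaps $x_q(\mathbf{e}_j-\mathbf{e}_q)$, the vanishing within-support directions, the unit prefactors and the swap-block entries $\mathrm{e}^{2\beta}-1$, $\mathrm{e}^{\beta}-1$ are all exactly what the paper uses, and your proof of $0\leq g(\beta)<1$ via $\sigma_e^2\beta\leq 1$ is in fact cleaner than the paper's two-case analysis (though it only covers $\beta$ achievable by actual parameters, not all $\beta>0$ as \eqref{func_g_property1} literally claims). The gap is in the assembly. The paper does \emph{not} form one combined family and take the full trace: it splits the MSE into the support part $\sum_{i\in S}E[(\hat{x}_i-x_i)^2]$ and the non-support part $\sum_{i\notin S}E[(\hat{x}_i-x_i)^2]$, bounds each with a \emph{different} test family --- $\{t\mathbf{e}_i\}_{i\in S}$ with $t\to 0$ for the former (giving the CCRB), and the $n-s$ swaps together with the \emph{single} extra vector $t\mathbf{e}_q$ for the latter --- keeps only the relevant diagonal entries of each bound, and adds. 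With that choice the only coupling between the swap block and the within-support directions is the scalar ratio $\lim_{t\to0}b^2(t)/a(t)=c_q^2/(\mathbf{J}_{SS})_{qq}=(\mathrm{e}^{\beta}-1)g(\beta)$, and this is precisely what produces $g$ of \eqref{form_func_g}.

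In your combined family the Schur correction to the swap block is instead $\mathbf{c}^{\mathrm{T}}\mathbf{J}_{SS}^{-1}\mathbf{c}$, and the support block of $\tilde{\mathbf{H}}^{-1}$ acquires the additional positive term $(\mathbf{1}^{\mathrm{T}}\mathbf{S}^{-1}\mathbf{1})\,\mathbf{J}_{SS}^{-1}\mathbf{c}\mathbf{c}^{\mathrm{T}}\mathbf{J}_{SS}^{-1}$ plus cross contributions from the off-diagonal blocks of $\tilde{\mathbf{V}}^{\mathrm{T}}\tilde{\mathbf{V}}$. When $\sigma_e\neq 0$ and $s\geq 2$, your own computation of the cross block shows $c_i=2\sigma_e^2\beta x_{S_i}/\sigma_{\mathbf{x}}^2\neq 0$ for every $S_i\neq q$, so $\mathbf{c}$ is not proportional to the $q$th column of $\mathbf{J}_{SS}$; by Cauchy--Schwarz, $\mathbf{c}^{\mathrm{T}}\mathbf{J}_{SS}^{-1}\mathbf{c}>c_q^2/(\mathbf{J}_{SS})_{qq}$ strictly, and the resulting expression involves $\sigma_e^2\|\mathbf{x}\|_{\ell_2}^2/\sigma_{\mathbf{x}}^2$ in a way that does \emph{not} condense into \eqref{form_func_g}. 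Hence your claim that the bookkeeping ``reproduces precisely'' \eqref{HCRB_closedform} is false except when $\sigma_e=0$ or $s=1$ --- exactly the degenerate cases this paper is not about. Your bound is in fact weakly \emph{tighter} than the stated one, so the theorem could still be deduced from it, but only by adding a domination argument you never state (e.g.\ monotonicity of the HCRB in the test family, or a direct comparison of your expression with the right-hand side of \eqref{HCRB_closedform}), together with an invertibility check on your larger Schur complement playing the role of $g(\beta)<1$. The cleanest repair is the paper's splitting device, which is the one genuinely missing idea in your proposal.
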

\begin{proof}
See Appendix \ref{proof_HCRB_thm}.
\end{proof}

The quantity $\beta=x_q^2/\sigma_\mathbf{x}^2$ represents a special kind of signal-to-noise ratio, and can be named the ``worst case entry SNR'' \cite{Eldar_SSNM}. This quantity plays a central role in the transition from maximal support to nonmaximal support. However, it should be noticed that there exists an upper bound of the domain of $\beta$ if $\sigma_e\neq 0$:
\begin{equation}
\begin{aligned}
\beta=\frac{x_q^2}{\sigma_n^2+\sigma_e^2\|\mathbf{x}\|_{\ell_2}^2}\leq\frac{x_q^2}{\sigma_e^2\|\mathbf{x}\|_{\ell_2}^2} =\frac{1}{\sigma_e^2\sum x_i^2/x_q^2}\leq\frac{1}{s\sigma_e^2}.
\end{aligned}
\end{equation}
Therefore the situation here is more complicated than that with only the measurement noise. One of the consequences is that when $x_q\rightarrow+\infty$\blue{\footnote{\blue{
The case $x_q\rightarrow+\infty$ is studied to analyze the large signal-to-measurement-noise-ratio situation. This is for the
convenience of analysis and clarity of exposition.
}}}, the HCRB does not generally tends to the CCRB of maximal support unless $\sigma_e=0$, i.e.
\begin{equation}
\lim_{x_q\rightarrow+\infty}\frac{\mathrm{HCRB}-\mathrm{CCRB}_\textrm{max supp}}{\mathrm{CCRB}_\textrm{max supp}}
>0,\quad\textrm{if }\sigma_e\neq 0.
\end{equation}
This phenomenon reveals that the global unbiasedness has an essential effect on the problem of the $\sigma_e\neq0$ case.
Fortunately, when $\sigma_n$ and $\sigma_e$ remains constant, the HCRB will converge to the CCRB of nonmaximal support as $x_q\rightarrow0$, as can be seen from
\begin{equation}
\lim_{x_q\rightarrow0}\mathrm{HCRB}=\sigma_\mathbf{x}^2
\left(n-
\frac{2n\sigma_e^4\|\mathbf{x}\|^2_{\ell_2}}
{\sigma_{\mathbf{x}}^2+2n\sigma_e^4\|\mathbf{x}\|^2_{\ell_2}}\right).
\end{equation}
Therefore it can be said that the gap between the maximal and nonmaximal cases is eliminated.

We compare the above result with other similar works. In \cite{Eldar_SSNM}, a closed-form expression of the HCRB for $\sigma_e^2=0$ is derived in a similar approach but with different choice of $\{\mathbf{v}_i\}$. Their closed-form result is tighter than ours when $\beta$ is sufficiently large, but is not as tight as ours in the low $\beta$ range and fails to close the gap between maximal and nonmaximal cases. In \cite{Eldar_SLM_RKHS}, another lower bound is provided and is tighter than ours for all $\beta>0$, but the derivation is based on the RKHS formulation of the BB which is difficult to be generalized when $\sigma_e^2\neq0$. Despite that our bound is not the tightest, it still provides a correct qualitative trend of the lower bound of sparse estimation, and is able to deal with matrix perturbation without much effort.

\subsection{Further Analysis of the HCRB}\label{analysis_HCRB}

It has already been mentioned that the domain of $\beta$ possesses an upper bound which is not greater than $1/s\sigma_e^2$, and thus the $x_q\rightarrow+\infty$ limit of the HCRB does not coincide with the CCRB.
This difference implicates the effect of the matrix perturbation on the HCRB and is the major topic of this subsection. For simplicity of analysis, in the following text it is assumed that as $x_q\rightarrow+\infty$, all other components of $\mathbf{x}$ equal $x_q$ asymptotically, which leads to the fact that the upper bound of $\beta$ is $1/s\sigma_e^2$.

As $x_q\rightarrow+\infty$, the HCRB asymptotically equals
\begin{displaymath}
\begin{aligned}
&\sigma_\mathbf{x}^2\left(s-
\frac{2n\sigma_e^4\|\mathbf{x}\|^2_{\ell_2}}
{\sigma_{\mathbf{x}}^2+2n\sigma_e^4\|\mathbf{x}\|^2_{\ell_2}}\right) + \frac{\sigma_\mathbf{x}^2(n-s)
\exp(-1/s\sigma_e^2)}{s\sigma_e^2(\exp(1/s\sigma_e^2)-1)} \\
&\qquad\cdot
\left(1-\frac{1}{n-s+\exp(1/s\sigma_e^2)(1-g(1/s\sigma_e^2))^{-1}}\right).
\end{aligned}
\end{displaymath}
The above expression contains two terms, the first of which is just the CCRB of the maximal support case.
For the second term, it can be seen that the $\sigma_e$'s mainly appear in the exponentials, which demonstrates that there exists a particular value of $\sigma_e$ that separates the low $\sigma_e$ region and the high $\sigma_e$ region. This particular value will be named the transition value of $\sigma_e$ and will be denoted by $\sigma_{e,\mathrm{t}}$.

The analysis and computation of $\sigma_{e,\mathrm{t}}$ is simple. When $\sigma_e^2\ll 1/s$, one has $\exp(1/s\sigma_e^2)\gg 1$ and $\exp(-1/s\sigma_e^2)\approx0$, and thus the HCRB approximates
\begin{displaymath}
\sigma_\mathbf{x}^2
\left(s-
\frac{2n\sigma_e^4\|\mathbf{x}\|^2_{\ell_2}}
{\sigma_{\mathbf{x}}^2+2n\sigma_e^4\|\mathbf{x}\|^2_{\ell_2}}\right),
\end{displaymath}
which is just the maximal support CCRB. However, if the matrix perturbation is large enough so that $\sigma_e^2\gg 1/s$, the difference between the HCRB and CCRB cannot be neglected. Therefore $1/\sqrt{s}$ can be regarded as the transition point $\sigma_{e,\mathrm{t}}$ : when $\sigma_e<\sigma_{e,\mathrm{t}}$, the HCRB degenerates to the CCRB as $x_q\rightarrow+\infty$; when $\sigma_e>\sigma_{e,\mathrm{t}}$, the lower bound is raised for large $x_q$ and does not degenerate to the CCRB.

The theory of the transition point $\sigma_{e,\mathrm{t}}$ may have a geometrical explanation which is not rigorous but very intuitive. The set $\mathcal{X}_s$ can be regarded as the union of $s$-dimensional hyperplanes spanned by $s$ coordinate axes respectively. The sparse parameter $\mathbf{x}$ lies on one of the hyperplanes $\Sigma_s$ and is surrounded by a noise ball whose radius is $r^2=\sigma_\mathbf{x}^2$. As $x_q\rightarrow+\infty$, the radius satisfies $r^2\sim\sigma_e^2\|\mathbf{x}\|_{\ell_2}^2\sim s\sigma_e^2x_q$. The transition point corresponds to the tangency of the noise ball to one of the hyperplanes of $\mathcal{X}_s$ apart from $\Sigma_s$, and the high and low $\sigma_e$ regimes correspond to whether or not the noise ball intersects with another hyperplane (See Fig. \ref{fig_HCRB_noiseball}). It can be easily verified that this geometrical interpretation gives out correct value of the transition point $\sigma_{e,\mathrm{t}}$.

\begin{figure}
\centering
\includegraphics[width=.35\textwidth]{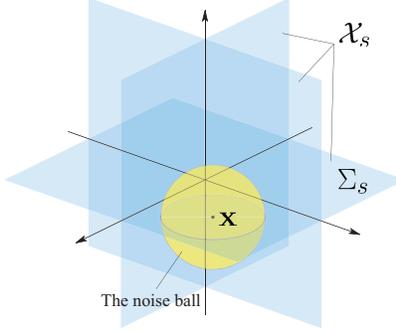}
\caption{The geometrical interpretation of the transition point $\sigma_{e,\mathrm{t}}$. The hyperplanes are part of the set $\mathcal{X}_s$, and the sparse parameter $\mathbf{x}$ lies on $\Sigma_s$. The radius of the noise ball is $r^2\sim s\sigma_e^2x_q$. If this ball intersects another hyperplane, such situation belongs to the high $\sigma_e$ regime; otherwise it belongs to the low $\sigma_e$ regime. For the low $\sigma_e$ regime, the HCRB equals the corresponding CCRB approximately if $x_q$ is sufficiently large; for the high $\sigma_e$ regime, the HCRB is evidently higher than the corresponding CCRB.}
\label{fig_HCRB_noiseball}
\end{figure}

\section{Numerical Results}\label{sec:num_results}

In this section, numerical simulations are performed in order to substantiate the theoretical results
presented in the previous sections.

\subsection{Numerical Analysis of $\gamma_\mathrm{CCRB}$}

Numerical experiments are first made on the CCRB, or equivalently, on the quantity $\gamma_\mathrm{CCRB}$. We wish to verify that the formula \eqref{approx_dratio} is valid and can demonstrate how $\gamma_\mathrm{CCRB}$ varies versus the perturbation level $c_e$ and the noise level $c_n$. Before we perform the numerical simulations, it is worthwhile to analyze the formula \eqref{approx_dratio} first with the help of its graph obtained by numerical approaches.

Define the function in the approximate formula \eqref{approx_dratio}
\begin{equation}
\gamma(c_e, c_n) = \frac{1}{s}\frac{2c_e}{2c_e+1+c_n/c_e}.
\end{equation}
The graphs of the function $\gamma(c_e, c_n)$ with varying $c_e$ and $c_n$ are shown in Fig. \ref{fig_CCRB1_cn_vary}, \blue{
where the sparsity $s$ is set to be $10$}
. It can be seen that when $c_n$ is fixed, $\gamma(c_e, c_n)$ is monotonically increasing of $c_e$, with limits $\gamma(0^+, c_n)=0$ and $\gamma(+\infty, c_n)=1/s$. Also, for each fixed $c_n$, there exists a transition point $c_{e,\mathrm{t}}$ of the curve which approximately separates the high $c_e$ and low $c_e$ regimes: when $c_e\ll c_{e,\mathrm{t}}$, one has $\gamma(c_e, c_n)\approx0$, while for $c_e\gg c_{e,\mathrm{t}}$, one has $\gamma(c_e, c_n)\approx1/s$. This transition point can be defined such that $\gamma(c_{e,\mathrm{t}},c_n)=1/2s$. When $c_n=0$, the transition point is $c_{e,\mathrm{t}}=1/2=-3~\mathrm{dB}$; for general $c_n>0$, $c_{e,\mathrm{t}}$ is the positive root of $2x^2-x-c_n=0$.

When $c_e$ is fixed, $\gamma(c_e, c_n)$ is monotonically decreasing of $c_n$. Fig. \ref{fig_CCRB1_cn_vary} also indicates that the transition point $c_{e,\mathrm{t}}(c_n)$ is a monotonically increasing function of $c_n$, which is a direct corollary of the monotonicity of $\gamma(c_e, c_n)$ with respect to $c_n$.

\begin{figure}
\centering
\includegraphics[width=.48\textwidth]{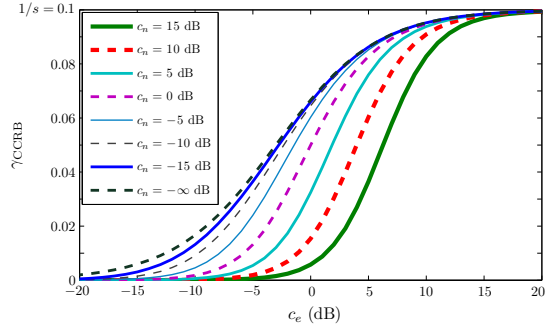}
\caption{The graphs of the approximate formula of $\gamma_\mathrm{CCRB}$ as a function of $c_e$ with different settings of $c_n$. The theoretical approximate formula is given by \eqref{approx_dratio}.}
\label{fig_CCRB1_cn_vary}
\end{figure}

In order to verify that the trend of $\gamma_\mathrm{CCRB}$ can be described by $\gamma(c_e, c_n)$, numerical results of $\gamma_\mathrm{CCRB}$ are performed. The dimensions are $n=20s,m=10s$, and $s=10$. The entries of $\mathbf{A}$ are drawn independently from $\mathcal{N}(0,1/m)$. Such generation of $\mathbf{A}$ is standard in the field of compressive sensing, and ensures the existence of the constants $\vartheta_{\mathrm{l},s}$ and $\vartheta_{\mathrm{u},s}$ with overwhelming probability \cite{Candes_DecodingLP,Blanchard_RIP}. The support of $\mathbf{x}$ is equiprobably chosen from all subsets of $\{1,\ldots,n\}$ with size $s$, and the nonzero entries satisfy i.i.d. equiprobable Bernoulli distribution on $\{-1,1\}$.

The results are shown in Fig. \ref{fig_CCRB1}. Points marked by ``$\times$'' are raw simulation results, while the solid lines represent the function $\gamma(c_e,c_n)$. It can be seen that the curves of the function $\gamma(c_e,c_n)$ can correctly describe how $\gamma_\mathrm{CCRB}$ varies versus $c_e$ and $c_n$. The figures also demonstrate that the solid curves lie almost in the middle of the raw points, which partially justifies the validity of the approximate formula and the bounds of \eqref{ubound_dCCRB}.

\begin{figure*}[!t]
\centerline{\subfigure[$c_n=-\infty$ dB]{\includegraphics[width
=.32\textwidth]{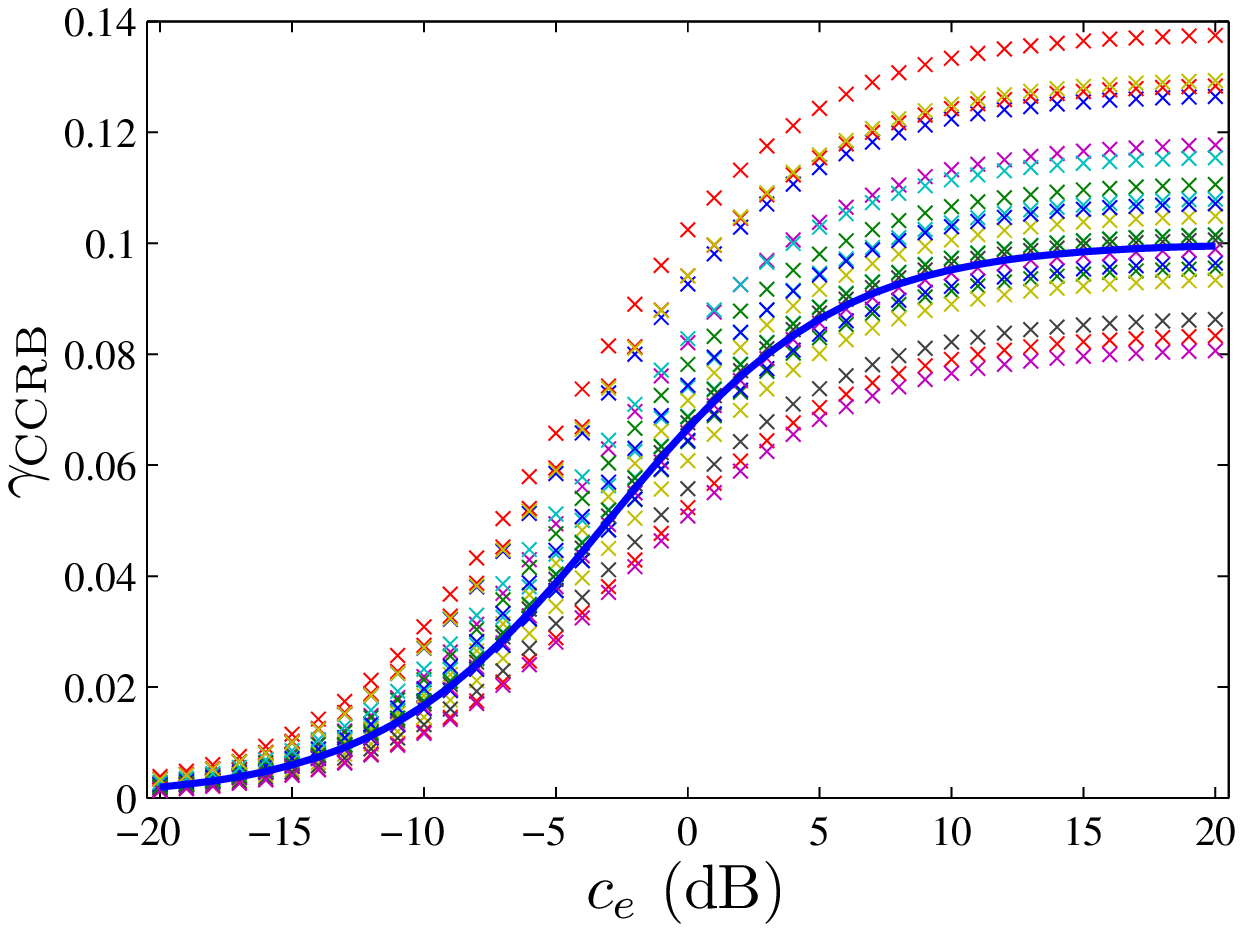}
\label{fig_CCRB1_cn0}}
\hfil
\subfigure[$c_n=-5$ dB]{\includegraphics[width=.32\textwidth]{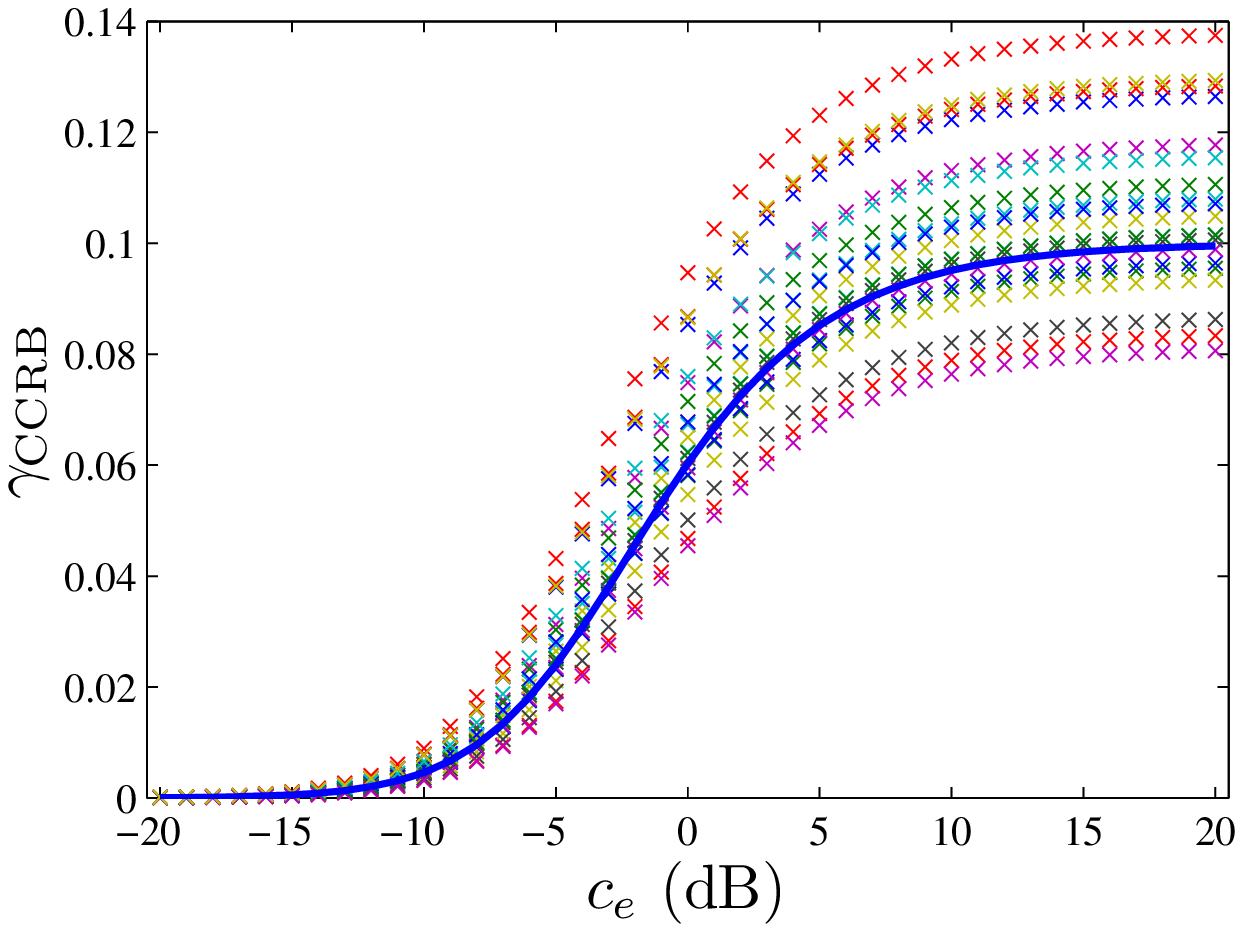}
\label{fig_CCRB1_cn-5dB}}
\hfil
\subfigure[$c_n=15$ dB]{\includegraphics[width=.32\textwidth]{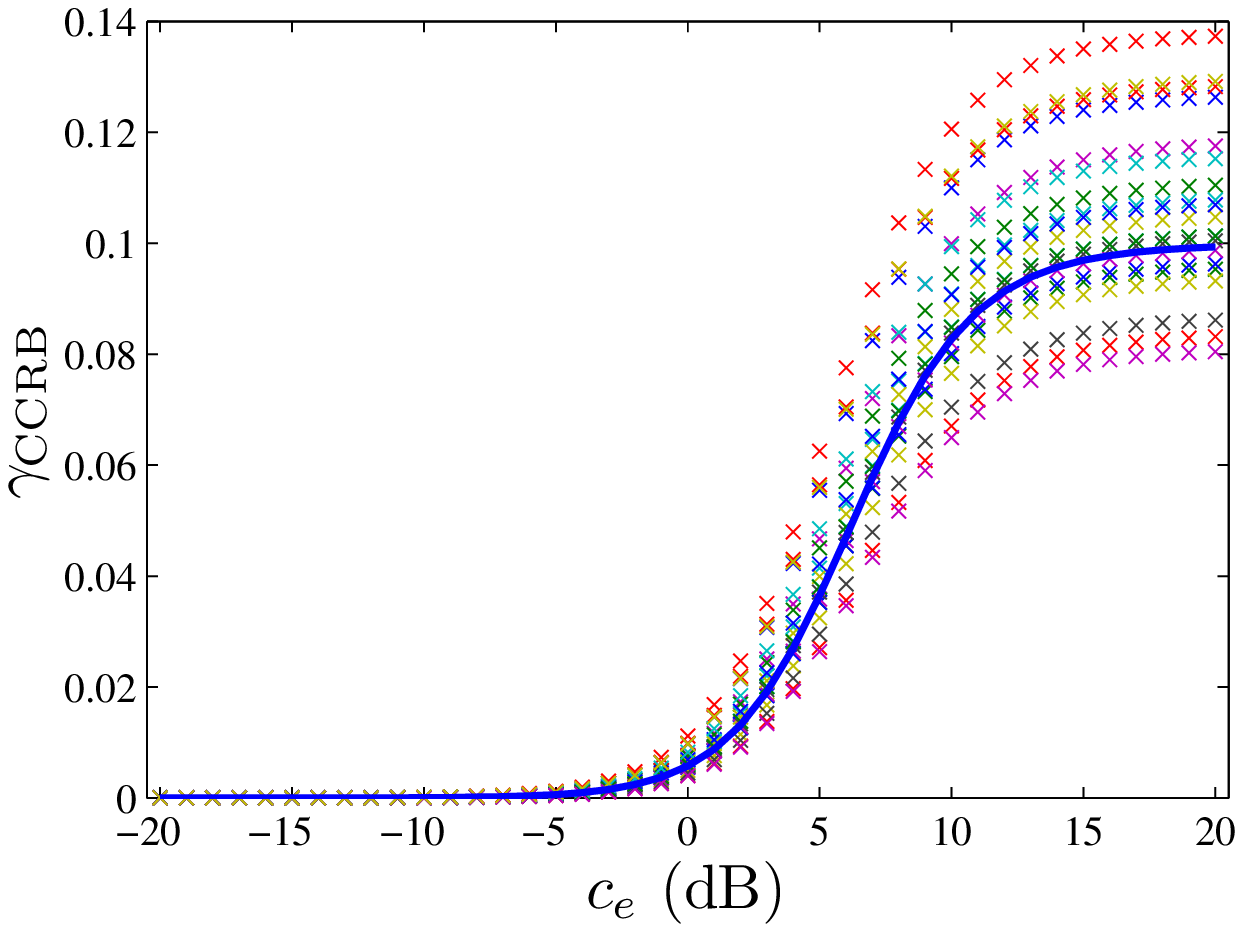}
\label{fig_CCRB1_cn15dB}}
}
\caption{$\gamma_\mathrm{CCRB}$ as a function of $c_e$ with different settings of $c_n$. Data points marked by ``$\times$'' are generated by simulation results. The solid lines represent the corresponding values calculated by the approximate expression \eqref{approx_dratio}.}
\label{fig_CCRB1}
\end{figure*}

Next we wish to verify the asymptotic behavior of $\gamma_\mathrm{CCRB}$, i.e. the $s^{-1}$ law given by Corollary \ref{corol:asymp}. This time
$s$ varies from $3$ to $300$ with exponentially increasing increments, and $n=20s, m=10s$. The generations of $\mathbf{A}$ and $\mathbf{x}$ are the same as in previous simulations.

The results are shown in Fig. \ref{fig_CCRB2}. The dotted points are raw experimental data of $\gamma_\mathrm{CCRB}$, and the solid straight line is the $s^{-1}$ line given by \eqref{approx_dratio}. Note that we employ the log-log scaling of the coordinate system so that the $s^{-1}$ law is represented by a straight line with slope $-1$. It can be seen that the data points can be upper and lower bounded by two $s^{-1}$ curves, and they cluster near the straight line given by \eqref{approx_dratio}. Thus it can be concluded that the trend of $\gamma_\mathrm{CCRB}$ with respect to $s$ can be described by the $s^{-1}$ law.

\begin{figure*}[!t]
\centerline{
\subfigure[$c_n=-\infty$ dB]{\includegraphics[width=.32\textwidth]{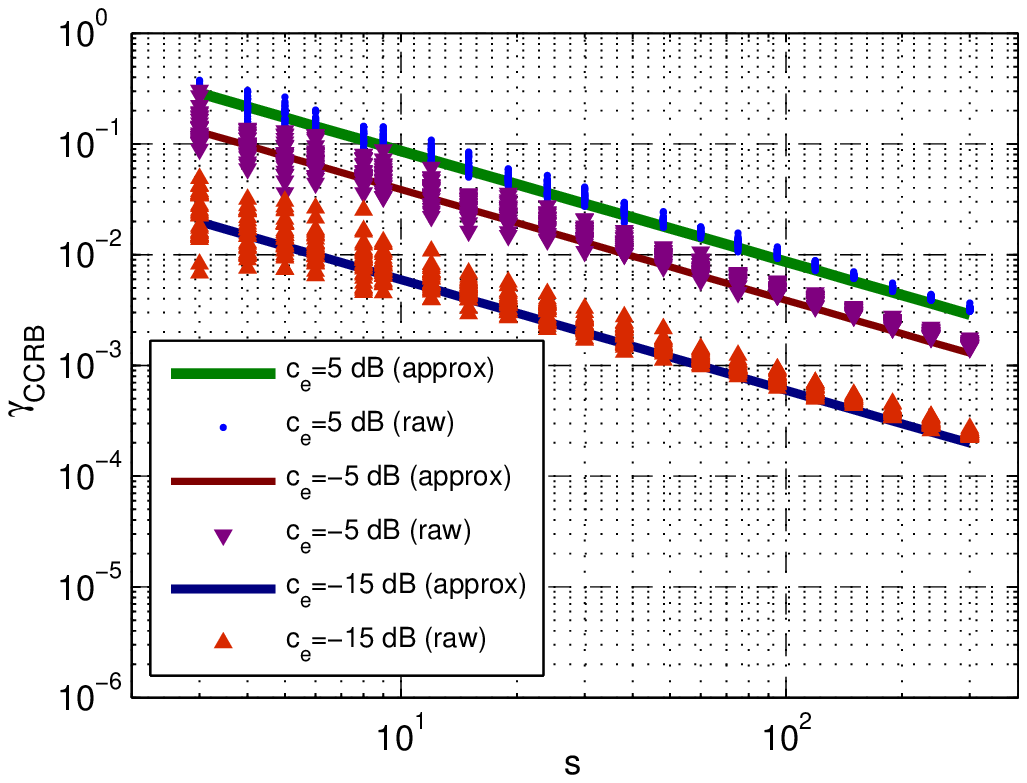}
\label{fig_CCRB2_cn0}}
\hfil
\subfigure[$c_n=-15$ dB]{\includegraphics[width
=.32\textwidth]{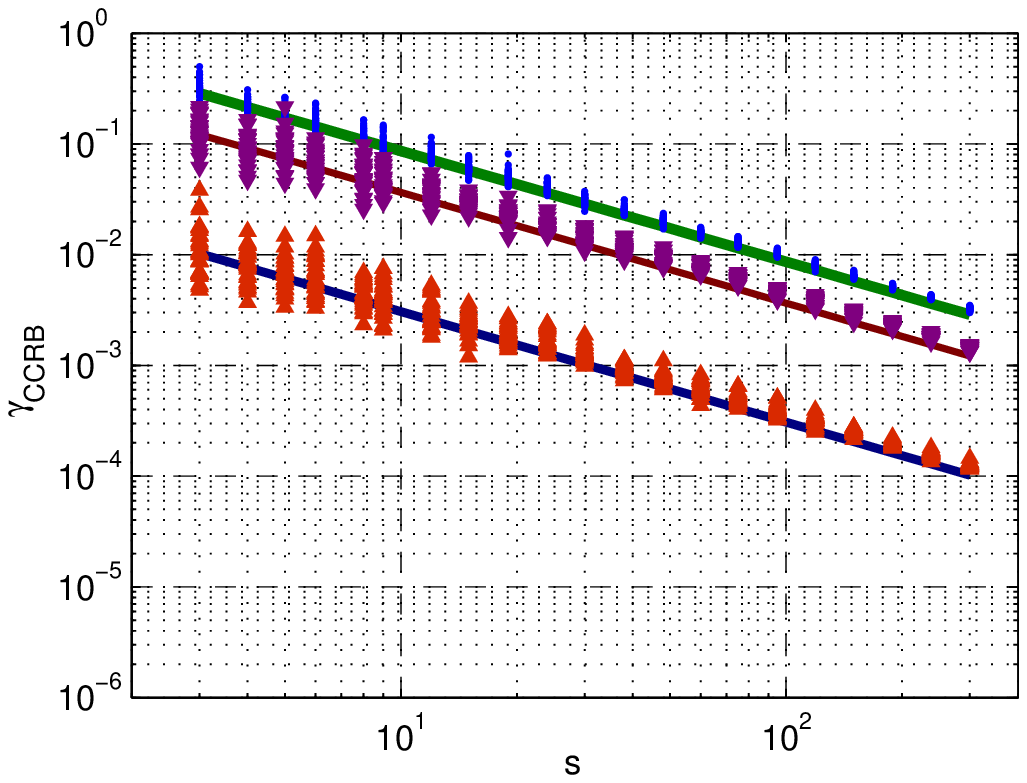}
\label{fig_CCRB2_cn-15dB}}
\hfil
\subfigure[$c_n=-5$ dB]{\includegraphics[width=.32\textwidth]{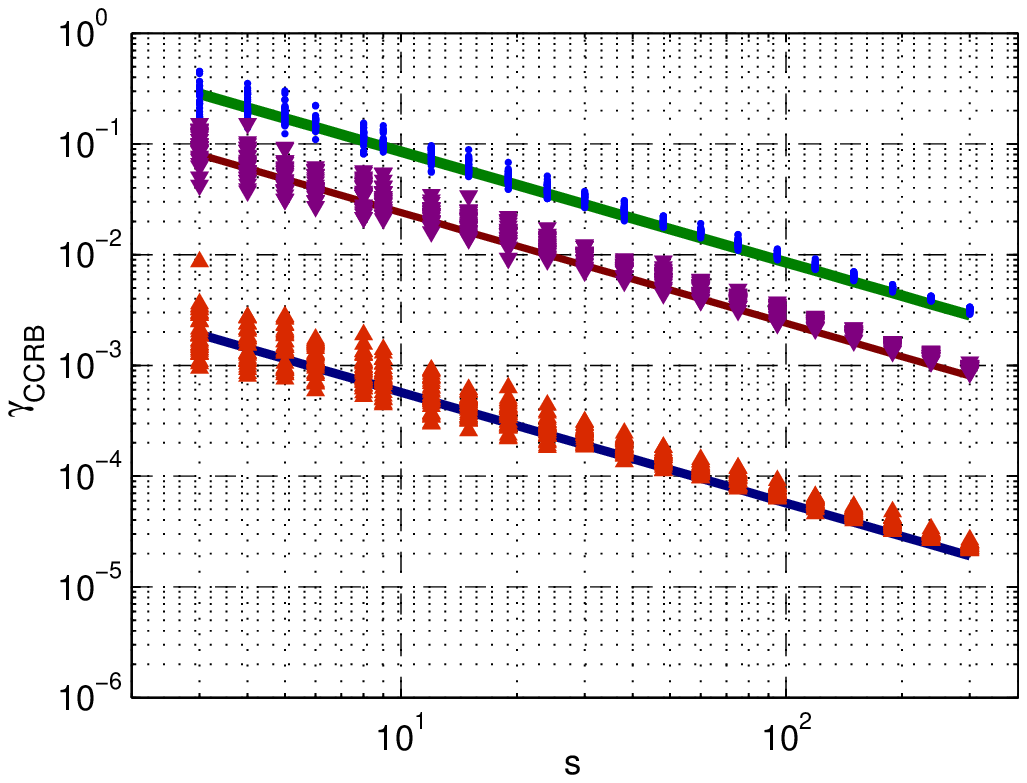}
\label{fig_CCRB2_cn-5dB}}
}
\caption{Simulation results of $\gamma_\mathrm{CCRB}$ versus the support size $s$ under different settings of $c_e$ and $c_n$. Dotted data points are generated by numerical simulations, and the solid straight lines are computed by the approximate expression \eqref{approx_dratio}. Every single sub-figure contains three groups of data, which correspond to $c_e=5$ dB,$c_e=-5$ dB and $c_e=-15$ dB as shown from the top to the bottom.
The figures share the same legend as shown in Fig.\ref{fig_CCRB2_cn0}. Notice that the log-log scaling of the coordinate system makes the graph of $s^{-1}$ a straight line of slope $-1$.}
\label{fig_CCRB2}
\end{figure*}

\subsection{Numerical Analysis of the HCRB}

We first have a brief review of Theorem \ref{thm:HCRB}. The lower bound of \eqref{HCRB_closedform} consists of two parts: the maximal support CCRB and the additional term which eliminates the CCRB gap. Numerical experiments will mainly be performed on the additional term, and the common factor $\sigma_\mathbf{x}^2$ will be ignored. In other words, our analysis will be focussed on the quantity
\begin{equation}
\begin{aligned}
d_\mathrm{HCRB}
=\frac{(n-s)\beta{\rm e}^{-\beta}}{{\rm e}^\beta-1}\left(1-\frac{1}{n-s+{\rm e}^\beta(1-g(\beta))^{-1}}\right).
\end{aligned}
\end{equation}

Settings of the numerical simulations are as follows. The dimensions are $n=m=10s$. $\mathbf{x}$ is set to be $[x_q,...,x_q,0,\ldots,0]^\mathrm{T}$. Different settings of $\sigma_e$ and $\sigma_n$ are employed, and for each group of $\{\sigma_e, \sigma_n\}$ we simulate the HCRB with varying $x_q$ to get a graph of $d_\mathrm{HCRB}$.

The first task is to verify the theory of transition point presented previously. The sparsities are $1,3,10,30$ and $100$ respectively, and for each $s$, $\sigma_e^2$ ranges from $0.0001$ to $100$. $\sigma_n$ is set to be $0.1$, and the HCRB for $x_q=1000$ is computed as a good approximation for the $x_q\rightarrow+\infty$ case.

The numerical results are shown in Fig. \ref{fig_HCRB_sigma_e}, where $d_\mathrm{HCRB}$ is ``normalized'' by $n-s$ so that the curves share a similar scale. It can be seen that the transition point for each $s$ exists and can be well evaluated by $\sigma_{e,\mathrm{t}}^2=1/s$. For $\sigma_e\ll\sigma_{e,\mathrm{t}}$, $d_\mathrm{HCRB}/(n-s)$ is much less than $1$, and as $\sigma_e\rightarrow0$, $d_\mathrm{HCRB}/(n-s)$ tends to zero rapidly; for $\sigma_e>\sigma_{e,\mathrm{t}}$, $d_\mathrm{HCRB}/(n-s)$ cannot be neglected, and as $\sigma_e\rightarrow+\infty$, $d_\mathrm{HCRB}/(n-s)$ possesses a limit which is of order $1$.

\begin{figure}[!t]
\centering
\includegraphics[width
=.48\textwidth]{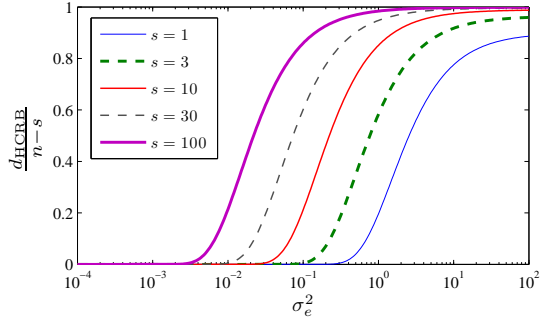}
\caption{$d_\mathrm{HCRB}/(n-s)$ versus $\sigma_e$ with different settings of $s$.}
\label{fig_HCRB_sigma_e}
\end{figure}

Next we fix $\sigma_e$ and $\sigma_n$, and test how $d_\mathrm{HCRB}$ varies versus $x_q$. The sparsity $s$ is set to be $1$ for simplicity. Results are shown in Fig. \ref{fig_HCRB1_sig_n0.1}, where $\sigma_n=0.1$ is fixed with different $\sigma_e$'s. The figure shows that when $\sigma_e$ belongs to the low value regime (i.e. $\sigma_e\ll\sigma_{e,\mathrm{t}}$, where $\sigma_{e,\mathrm{t}}=1$ here), its effect on $d_\mathrm{HCRB}$ can be neglected. Moreover, each curve can be separated into three regions: the low $x_q$ region where $\mathrm{HCRB}\approx\mathrm{CCRB}_\textrm{nonmax supp}$, the high $x_q$ region where $\mathrm{HCRB}\approx\mathrm{CCRB}_\textrm{max supp}$, and the transition region which connects the low and high $x_q$ region. However, when $\sigma_e$ exceeds $\sigma_{e,\mathrm{t}}$, the behavior of $d_\mathrm{HCRB}$ will become rather different and exhibits a more complex pattern. These phenomena demonstrate the way matrix perturbation influences the HCRB.

\begin{figure}[!t]
\centering
\includegraphics[width=.48\textwidth]{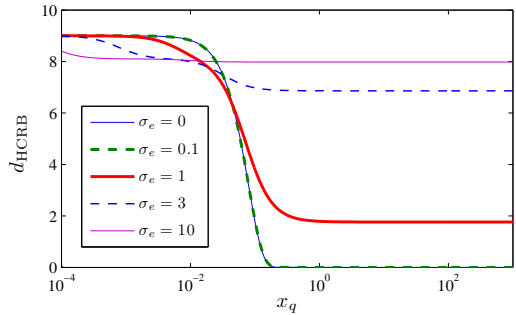}\label{fig_HCRB1_sig_n0.1}
\caption{$d_\mathrm{HCRB}$ versus $x_q$ with $\sigma_n=0.1$ and different settings of $\sigma_e$.}
\end{figure}

\blue{\subsection{Comparison with Existing Estimators}}

\blue{In this subsection we compare the derived bounds with the performance of existing estimators. Because the HCRB is only derived for the unit
sensing matrix case, the following analysis and simulations will be focused on this particular situation. One of the existing estimators to be compared
is the maximum likelihood (ML) estimator, which, in the unit sensing matrix case, is given by
\begin{equation}
\hat{\mathbf{x}}_{\mathrm{ML}}(\mathbf{y}) = \mathcal{P}_s(\mathbf{y}),
\end{equation}
where the operator $\mathcal{P}_s$ retains the $s$ largest entries in magnitude and zeros out all others. It should be noted that when
the noise is large, this estimator will be severely biased. Another estimator is the one given in \cite[\S 5]{Eldar_SLM_RKHS} for the special case of $s=1$. This estimator is globally unbiased, and is given by
\begin{equation}\label{eq_unbiased_estim}
\hat{x}_k(\mathbf{y})
=\left\{
\begin{aligned}
&y_q,&\quad&k=q, \\
&y_k\exp\left(-\frac{2y_k x_{0,q}+x_{0,q}^2}{2\sigma_{\mathbf{x}_0}^2}\right),
&\quad&\textrm{otherwise}.
\end{aligned}
\right.
\end{equation}
Here $\mathbf{x}_0$ is a sparse vector with $s=1$ which represents
a kind of prior knowledge that the true parameter $\mathbf{x}$ will not be far from $\mathbf{x}_0$\footnote{\blue{
Rigorously, this estimator is the locally minimum variance unbiased (LMVU) estimator at $\mathbf{x}_0$ when $\sigma_n=0$. See \cite{Eldar_SLM_RKHS} for more detailed explanations.}}, $q$ is the index of the single non-zero entry, and $x_{0,q}$ is the value of this entry.}

\blue{\begin{figure}[!t]
\centering
\subfigure[$\sigma_e=0.1$]{\includegraphics[width=.48\textwidth]{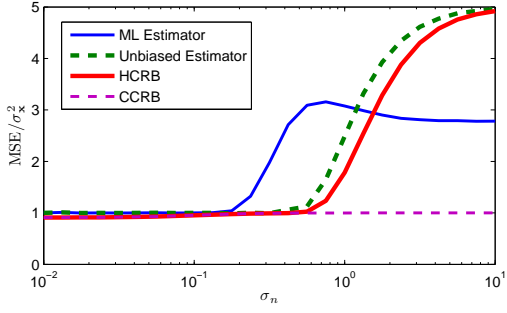}
\label{fig_Estim_1}}
\\
\subfigure[$\sigma_e=1$]{\includegraphics[width=.48\textwidth]{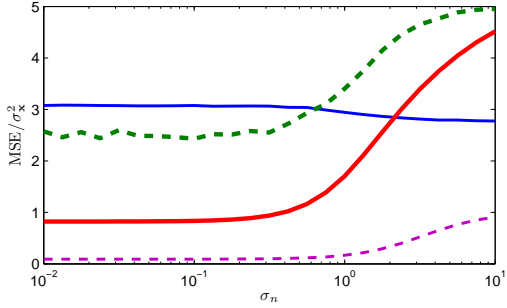}
\label{fig_Estim_2}}
\caption{Comparison of the performance of ML estimator and the unbiased estimator \eqref{eq_unbiased_estim} with HCRB and CCRB.}
\label{fig_Estim_Perform}
\end{figure}}

\blue{We test the case when $s=1$ and $n=5$. The sparse vector $\mathbf{x}$ is set to be
$[1,0,\ldots,0]^\mathrm{T}$. The MSE of the two estimators together with HCRB and CCRB are numerically
computed for different settings of $\sigma_n$'s and $\sigma_e's$.}

\blue{The results are shown in Fig \ref{fig_Estim_Perform}. We first analyze the case $\sigma_e=0.1$ which represents relatively small
matrix perturbation. It can be seen that when $\sigma_n$ is small, the MSEs of the ML estimator and the unbiased estimator are
both lower bounded by the HCRB and the CCRB. As $\sigma_n$ increases, the ML estimator undergoes an evident performance degradation
before the unbiased estimator does. However, when $\sigma_n$ is considerably large, the MSE of the ML estimator is not lower bounded
by the HCRB. This is due to the severe bias of the ML estimator in such situations. For the unbiased estimator, it can be seen that
its MSE is always lower bounded by the HCRB.}

\blue{For the case $\sigma_e=1$, it can be seen that when $\sigma_n$ is small, the MSEs of both the ML estimator and the unbiased estimator are much larger than the HCRB due to the large matrix perturbation level. As explained before, the ML estimator is severely biased when the noise level is large, and thus for certain values of $\sigma_n$, its MSE will not be lower bounded by the HCRB. For the unbiased estimator, its MSE is always lower bounded by the HCRB.}

\section{Conclusion}\label{sec:conclusion}

In this paper, the performance bound of sparse estimation with general perturbation has been studied. Two widely-used types of lower bounds, the CCRB and the HCRB, have been calculated and analyzed. For the CCRB, it has been shown that the additional term of the CCRB can be approximated by a simple $s^{-1}$ law if the sensing matrix satisfies RIP-type conditions. For the HCRB where only the case of unit sensing matrix is studied, it has been shown that the HCRB provides a tighter lower bound than the CCRB, and that it exhibits a satisfying transition behavior when the smallest entry of the parameter tends to zero. A geometrical interpretation of the theory of the HCRB has also been given.

There are several future directions to be explored. First, the HCRB is obtained only for the unit sensing matrix case, which is useful for
a qualitative comprehension but has apparent limitation. In many cases the sensing matrix cannot be assumed unit, and
a more precise quantitative study on the performance bound also requires generalizing the HCRB to the general sensing matrix case.
There may be two practical approaches to this problem. One is to derive a closed-form lower bound which is convenient to analyze and
understand; the other is to find a tractable way to numerically compute the lower bound. However, both ways need further research and
are still waiting for useful results.

Second, the HCRB provides a lower bound for globally unbiased estimators. However, recovery algorithms are quite likely to be biased
in the sparse setting, and the bias is usually dependent on the noise variance. Moreover, there are occasions that biased estimators can achieve a lower MSE than unbiased estimators \cite{Eldar_Coherence}. These problems also point out a possible direction for future study.

\appendices

\section{Proof of Lemma \ref{lemma:FIM}}\label{proof_FIM}

We first compute the likelihood function $p(\mathbf{y};\mathbf{x})$. Notice that the problem \eqref{fund_eq} can be re-formulated as
\begin{equation}\label{fund_eq_form2}
\mathbf{y}=\mathbf{A}\mathbf{x}+\mathbf{E}\mathbf{x}+\mathbf{n}=\mathbf{A}\mathbf{x}+\mathbf{n}_\mathbf{x},
\end{equation}
where $\mathbf{n}_\mathbf{x}=\mathbf{E}\mathbf{x}+\mathbf{n}$ denotes the equivalent noise. Because the elements of $\mathbf{E}$ are drawn i.i.d. from Gaussian distribution, $\mathbf{E}\mathbf{x}$ is an $m$-dimensional Gaussian-distributed random vector, and is independent of $\mathbf{n}$.
By straightforward calculation, it can be shown that $E[\mathbf{E}\mathbf{x}]=0$ and
$\mathrm{Cov}(\mathbf{E}\mathbf{x}) = \sigma_e^2\|\mathbf{x}\|^2_{\ell_2}\mathbf{U}_m$.
Then, from the mutual independence of $\mathbf{E}\mathbf{x}$ and $\mathbf{n}$ one has $\mathbf{n}_\mathbf{x}\sim \mathcal{N}(0,(\sigma_e^2\|\mathbf{x}\|_{\ell_2}^2+\sigma_n^2)\mathbf{U}_m)$. Therefore the likelihood function is given by
\begin{equation}\label{likelihood}
\begin{aligned}
L(\mathbf{x})=\frac{1}{\left(2\pi\sigma_{\mathbf{x}}^2\right)^{\frac{m}{2}}}
\exp\left[-\frac{(\mathbf{y}-\mathbf{A}\mathbf{x})^\mathrm{T}(\mathbf{y}-\mathbf{A}\mathbf{x})}{2\sigma_{\mathbf{x}}^2}\right],
\end{aligned}
\end{equation}
where
\begin{equation}
\sigma_{\mathbf{x}}^2=\sigma_e^2\|\mathbf{x}\|_{\ell_2}^2+\sigma_n^2.
\end{equation}

The next step is to compute the FIM. The gradient of the log likelihood function $\ln p(\mathbf{y};\mathbf{x})$ is
\begin{equation}
\begin{aligned}
\nabla_{\mathbf{x}}\ln p(\mathbf{y};\mathbf{x})
=\frac{1}{\sigma_e^2\mathbf{x}^\mathrm{T}\mathbf{x}+\sigma_n^2}\Bigg[\mathbf{A}^\mathrm{T}(\mathbf{y}-\mathbf{A}\mathbf{x})+\frac{\|\mathbf{y}-\mathbf{A}\mathbf{x}\|_{\ell_2}^2\mathbf{x}}{\sigma_e^2\mathbf{x}^\mathrm{T}\mathbf{x}+\sigma_n^2}\Bigg]
-m\frac{\sigma_e^2\mathbf{x}}{\sigma_e^2\mathbf{x}^\mathrm{T}\mathbf{x}+\sigma_n^2},
\end{aligned}
\end{equation}
and therefore
\begin{equation}\label{FIM_intermediate}
\begin{aligned}
(\nabla_{\mathbf{x}}\ln p(\mathbf{y};\mathbf{x}))(\nabla_{\mathbf{x}}^\mathrm{T}\ln p(\mathbf{y};\mathbf{x}))
=~&\frac{m^2\sigma_e^4}{(\sigma_e^2\mathbf{x}^\mathrm{T}\mathbf{x}+\sigma_n^2)^2}\mathbf{x}\mathbf{x}^\mathrm{T}
-\frac{2m\sigma_e^4\|\mathbf{y}-\mathbf{A}\mathbf{x}\|_{\ell_2}^2}{(\sigma_e^2\mathbf{x}^\mathrm{T}\mathbf{x}+\sigma_n^2)^3}\mathbf{x}\mathbf{x}^\mathrm{T} \\
&+\frac{1}{(\sigma_e^2\mathbf{x}^\mathrm{T}\mathbf{x}+\sigma_n^2)^2}\mathbf{A}^\mathrm{T}(\mathbf{y}-\mathbf{A}\mathbf{x})(\mathbf{y}-\mathbf{A}\mathbf{x})^\mathrm{T}
\mathbf{A}\\
&+\frac{\sigma_e^4\|\mathbf{y}-\mathbf{A}\mathbf{x}\|_{\ell_2}^4}{(\sigma_e^2\mathbf{x}^\mathrm{T}\mathbf{x}+\sigma_n^2)^4}\mathbf{x}\mathbf{x}^\mathrm{T} +\textrm{terms linear in }(\mathbf{y}-\mathbf{A}\mathbf{x}) \\
&+\textrm{terms cubic in }(\mathbf{y}-\mathbf{A}\mathbf{x}).
\end{aligned}
\end{equation}
The specific forms of the linear and cubic terms in the above equation is of no importance; they will vanish when we take their expectation because $\mathbf{y}-\mathbf{A}\mathbf{x}$ is Gaussian-distributed with zero mean and diagonal covariance. The expectation value of $\|\mathbf{y}-\mathbf{A}\mathbf{x}\|_{\ell_2}^2$ is $m(\sigma_e^2\mathbf{x}^\mathrm{T}\mathbf{x}+\sigma_n^2)$, while for $\|\mathbf{y}-\mathbf{A}\mathbf{x}\|_{\ell_2}^4$ the expectation will be $(m^2+2m)(\sigma_e^2\mathbf{x}^\mathrm{T}\mathbf{x}+\sigma_n^2)^2$, which can be seen from the fact that for a series of i.i.d. zero-mean Gaussian random variables $w_1,\ldots,w_k$ with the same variance $q^2$, one has
$E[(w_1^2+\cdots+w_k^2)^2]=(k^2+2k)q^4$. By taking the expectation of \eqref{FIM_intermediate}, and recalling the covariance matrix of $\mathbf{y}-\mathbf{A}\mathbf{x}=\mathbf{E}\mathbf{x}+\mathbf{n}$, one has
\begin{equation*}
\begin{aligned}
\mathbf{J}(\mathbf{x})=\frac{1}{\sigma_e^2\mathbf{x}^\mathrm{T}\mathbf{x}+\sigma_n^2}\mathbf{A}^\mathrm{T}\mathbf{A}
+\frac{2m\sigma_e^4}{(\sigma_e^2\mathbf{x}^\mathrm{T}\mathbf{x}+\sigma_n^2)^2}\mathbf{x}\mathbf{x}^\mathrm{T},
\end{aligned}
\end{equation*}
which is exactly \eqref{FIM_xS}.

\section{Proof of Theorem \ref{thm:bounds_dCCRB}}\label{proof_bounds_dCCRB}

Assumption \ref{assump:sensing_mat} implies that the eigenvalues of $\mathbf{A}_S^\mathrm{T}\mathbf{A}_S$ are bounded by
$1-\vartheta_{\mathrm{l},s}$ and $1+\vartheta_{\mathrm{u},s}$.
Therefore the eigenvalues of $(\mathbf{A}_S^\mathrm{T}\mathbf{A}_S)^{-1}$, denoted by $\tilde{\lambda}_i$, are bounded as follows
\begin{equation}
\frac{1}{1+\vartheta_{\mathrm{u},s}}\leq \tilde{\lambda}_i \leq \frac{1}{1-\vartheta_{\mathrm{l},s}}.
\end{equation}
\blue{
Since $(\mathbf{A}_S^\mathrm{T}\mathbf{A}_S)^{-1}$ is a symmetric matrix, it can be diagonalized by some orthogonal matrix $\mathbf{Q}$, i.e.
\begin{equation*}
(\mathbf{A}_S^\mathrm{T}\mathbf{A}_S)^{-1}=\mathbf{Q}^\mathrm{T}\mathrm{diag}(\tilde{\lambda}_1,\ldots,\tilde{\lambda}_s) \mathbf{Q}.
\end{equation*}
Thus, for any vector $\mathbf{u}\in\mathbb{R}^s$, denote $\tilde{\mathbf{u}}=\mathbf{Q}\mathbf{u}$, one has
\begin{equation*}
\begin{aligned}
\mathbf{u}^\mathrm{T}(\mathbf{A}_S^\mathrm{T}\mathbf{A}_S)^{-1}\mathbf{u}
=&\tilde{\mathbf{u}}^\mathrm{T}\mathrm{diag}(\tilde{\lambda}_1,\ldots,\tilde{\lambda}_s)\tilde{\mathbf{u}} =\sum_{k=1}^s\tilde{\lambda}_k\tilde{u}_k^2\\
\geq&\sum_{k=1}^s\frac{1}{1+\vartheta_{\mathrm{u},s}}\tilde{u}_k^2
=\frac{1}{1+\vartheta_{\mathrm{u},s}}\|\tilde{\mathbf{u}}\|_{\ell_2}^2\\
=&\frac{1}{1+\vartheta_{\mathrm{u},s}}\|\mathbf{u}\|_{\ell_2}^2.
\end{aligned}
\end{equation*}
By employing the same techniques it can be verified that
\begin{equation}\label{norm_ineq1}
\frac{\|\mathbf{x}\|_{\ell_2}^2}{1+\vartheta_{\mathrm{u},s}}\leq\mathbf{x}_S^\mathrm{T}(\mathbf{A}_S^\mathrm{T}\mathbf{A}_S)^{-1}\mathbf{x}_S
\leq\frac{\|\mathbf{x}\|_{\ell_2}^2}{1-\vartheta_{\mathrm{l},s}},
\end{equation}
where we have used the fact that for $s$-sparse vector $\mathbf{x}$ supported on $S$, one has $\|\mathbf{x}_S\|_{\ell_2}^2=\|\mathbf{x}\|_{\ell_2}^2$. Noting that
the square of $(\mathbf{A}_S^\mathrm{T}\mathbf{A}_S)^{-1}$ has eigenvalues
$\tilde{\lambda}_1^2,\ldots,\tilde{\lambda}_s^2$ with the same eigenvectors as
$(\mathbf{A}_S^\mathrm{T}\mathbf{A}_S)^{-1}$, it can be shown similarly that
\begin{equation}\label{norm_ineq2}
\frac{\|\mathbf{x}\|_{\ell_2}^2}{(1+\vartheta_{\mathrm{u},s})^2}\leq\|(\mathbf{A}_S^\mathrm{T}\mathbf{A}_S)^{-1}\mathbf{x}_S\|^2_{\ell_2}\leq \frac{\|\mathbf{x}\|_{\ell_2}^2}{(1-\vartheta_{\mathrm{l},s})^2}.
\end{equation}}
Substituting \eqref{norm_ineq1} and \eqref{norm_ineq2} into \eqref{dCCRB}, one will obtain
\begin{equation}\label{temp_bound_dCCRB}
\begin{aligned}
d_\mathrm{CCRB}\lesseqqgtr \sigma_\mathbf{x}^2\frac{1+\vartheta_{\pm,s}}{(1+\vartheta_{\mp,s})^2} 
\cdot\frac{2m\sigma_e^2}{1+2m\sigma_e^2+\vartheta_{\pm,s}+\frac{c_n(1+\vartheta_{\pm,s})}{m\sigma_e^2}},\\
\end{aligned}
\end{equation}
where $\vartheta_{+,s}=\vartheta_{\mathrm{u},s},\vartheta_{-,s}=-\vartheta_{\mathrm{l},s}$.

Next we wish to bound the term $m\sigma_e^2$. It can be seen from the definition of $c_e$ that $m\sigma_e^2=c_e\tr(\mathbf{A}_S^\mathrm{T}\mathbf{A}_S)/s$, and
with the help of the bounds of the eigenvalues of $\mathbf{A}_S^\mathrm{T}\mathbf{A}_S$, it can be shown that
\begin{equation}
c_e(1-\vartheta_{\mathrm{l},s})\leq m\sigma_e^2\leq c_e(1+\vartheta_{\mathrm{u},s}).
\end{equation}
Substituting this into \eqref{temp_bound_dCCRB}, one will obtain the bounds given by \eqref{ubound_dCCRB}.

The bounds of $\gamma_\mathrm{CCRB}$ can also be derived by similar techniques with the help of the bounds of the eigenvalues of
$(\mathbf{A}_S^\mathrm{T}\mathbf{A}_S)^{-1}$.

\section{Proof of Lemma \ref{lem:HCRB_general}}\label{proof_HCRB_general}

We first refer to \cite{Gorman_HCRB} for the definition of the multivariate HCRB for unbiased estimators.

\begin{prop}\label{prop:HCRB_original}\cite{Gorman_HCRB}
Suppose $p(\mathbf{y};\mathbf{x})$ are a class of pdf's parameterized by $\mathbf{x}\in\mathcal{X}$, and let $\mathbf{x},\mathbf{x}+\mathbf{v}_1,\ldots,\mathbf{x}+\mathbf{v}_k$ be test points contained in the constrained parameter set $\mathcal{X}$. Define
\begin{align}
\mathbf{V}&=[\mathbf{v}_1,\ldots,\mathbf{v}_k], \label{HCRB_Vmat}\\
\delta_i p &= p(\mathbf{y};\mathbf{x}+\mathbf{v}_i)-p(\mathbf{y};\mathbf{x}), \\
\bs{\delta} p&=[\delta_1 p,\ldots,\delta_k p]^\mathrm{T}, \\
\mathbf{H} &= E_{\mathbf{y};\mathbf{x}}\left[
\frac{\bs{\delta}p}{p}\frac{\bs{\delta}p^\mathrm{T}}{p}\right], \label{H_mat_HCRB_general}
\end{align}
where $p$ denotes $p(\mathbf{y};\mathbf{x})$ for short.
Then for any unbiased estimator $\hat{\mathbf{x}}$, the estimator covariance matrix $\mathrm{Cov}(\hat{\mathbf{x}})$ satisfies the matrix inequality
\begin{equation}\label{HCRB_general}
\mathrm{Cov}(\hat{\mathbf{x}})
\succeq
\mathbf{V}\mathbf{H}^\dagger
\mathbf{V}^\mathrm{T},
\end{equation}
\end{prop}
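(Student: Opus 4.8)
The plan is to derive the bound from a single positive-semidefiniteness statement about the joint second-moment matrix of the estimation error and the vector of normalized likelihood increments, followed by a Schur-complement reduction. First I would introduce the two zero-mean random vectors that carry all the information. Let $\mathbf{e}=\hat{\mathbf{x}}(\mathbf{y})-\mathbf{x}$ denote the estimation error and let $\mathbf{g}=\bs{\delta}p/p$ denote the vector of normalized increments, both regarded as functions of $\mathbf{y}$ under the nominal pdf $p(\mathbf{y};\mathbf{x})$. Unbiasedness at $\mathbf{x}$ gives $E_{\mathbf{y};\mathbf{x}}[\mathbf{e}]=0$, and since each $\delta_i p$ integrates to $\int[p(\mathbf{y};\mathbf{x}+\mathbf{v}_i)-p(\mathbf{y};\mathbf{x})]\,\ud\mathbf{y}=1-1=0$, one also has $E_{\mathbf{y};\mathbf{x}}[\mathbf{g}]=0$. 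Thus the second-moment matrix of the stacked vector $[\mathbf{e}^\mathrm{T},\mathbf{g}^\mathrm{T}]^\mathrm{T}$ coincides with its covariance matrix.

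The crux is the cross-term identity $E_{\mathbf{y};\mathbf{x}}[\mathbf{e}\,\mathbf{g}^\mathrm{T}]=\mathbf{V}$, which I would establish column by column. For the $i$th column the factor $p(\mathbf{y};\mathbf{x})$ in the denominator cancels the base measure,
\begin{equation*}
E_{\mathbf{y};\mathbf{x}}\!\left[\mathbf{e}\,\frac{\delta_i p}{p}\right]
=\int(\hat{\mathbf{x}}(\mathbf{y})-\mathbf{x})\big[p(\mathbf{y};\mathbf{x}+\mathbf{v}_i)-p(\mathbf{y};\mathbf{x})\big]\,\ud\mathbf{y}
=E_{\mathbf{y};\mathbf{x}+\mathbf{v}_i}[\hat{\mathbf{x}}-\mathbf{x}]-E_{\mathbf{y};\mathbf{x}}[\hat{\mathbf{x}}-\mathbf{x}],
\end{equation*}
and invoking unbiasedness at the test point $\mathbf{x}+\mathbf{v}_i$ (so that $E_{\mathbf{y};\mathbf{x}+\mathbf{v}_i}[\hat{\mathbf{x}}]=\mathbf{x}+\mathbf{v}_i$) together with unbiasedness at $\mathbf{x}$ collapses the right-hand side to $(\mathbf{x}+\mathbf{v}_i-\mathbf{x})-0=\mathbf{v}_i$. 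Collecting the $k$ columns yields $E_{\mathbf{y};\mathbf{x}}[\mathbf{e}\,\mathbf{g}^\mathrm{T}]=\mathbf{V}$, while by definition $E_{\mathbf{y};\mathbf{x}}[\mathbf{g}\,\mathbf{g}^\mathrm{T}]=\mathbf{H}$ and $E_{\mathbf{y};\mathbf{x}}[\mathbf{e}\,\mathbf{e}^\mathrm{T}]=\mathrm{Cov}(\hat{\mathbf{x}})$.

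I would then assemble the block second-moment matrix
\begin{equation*}
\mathbf{M}=E_{\mathbf{y};\mathbf{x}}\!\left[\begin{bmatrix}\mathbf{e}\\\mathbf{g}\end{bmatrix}\begin{bmatrix}\mathbf{e}^\mathrm{T}&\mathbf{g}^\mathrm{T}\end{bmatrix}\right]
=\begin{bmatrix}\mathrm{Cov}(\hat{\mathbf{x}})&\mathbf{V}\\\mathbf{V}^\mathrm{T}&\mathbf{H}\end{bmatrix},
\end{equation*}
which is positive semidefinite because it is the second-moment matrix of a random vector. The final step is the generalized Schur-complement lemma: a symmetric block matrix $\begin{bmatrix}\mathbf{C}&\mathbf{B}\\\mathbf{B}^\mathrm{T}&\mathbf{D}\end{bmatrix}\succeq0$ forces both $\mathcal{R}(\mathbf{B})\subseteq\mathcal{R}(\mathbf{D})$ and $\mathbf{C}\succeq\mathbf{B}\mathbf{D}^\dagger\mathbf{B}^\mathrm{T}$. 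Applying this with $\mathbf{C}=\mathrm{Cov}(\hat{\mathbf{x}})$, $\mathbf{B}=\mathbf{V}$, and $\mathbf{D}=\mathbf{H}$ delivers $\mathrm{Cov}(\hat{\mathbf{x}})\succeq\mathbf{V}\mathbf{H}^\dagger\mathbf{V}^\mathrm{T}$.

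The hard part will be the appearance of the Moore-Penrose pseudoinverse $\mathbf{H}^\dagger$ rather than an ordinary inverse, since $\mathbf{H}$ is generically singular when the increments $\mathbf{v}_i$ produce linearly dependent likelihood ratios. This is precisely what forces the generalized rather than the classical Schur-complement lemma and requires the range condition $\mathcal{R}(\mathbf{V})\subseteq\mathcal{R}(\mathbf{H})$; fortunately this inclusion is not an extra hypothesis but a free consequence of $\mathbf{M}\succeq0$, so no separate argument is needed. A secondary technicality is integrability: the cancellation of $p(\mathbf{y};\mathbf{x})$ in the cross-term identity and the very existence of $\mathbf{M}$ implicitly require $\hat{\mathbf{x}}$ to have finite variance under both $p(\mathbf{y};\mathbf{x})$ and each $p(\mathbf{y};\mathbf{x}+\mathbf{v}_i)$, which I would note holds whenever the bound is non-vacuous.
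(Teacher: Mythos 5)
The paper offers no proof of this proposition at all---it is imported verbatim from \cite{Gorman_HCRB} as a known result---so there is no internal argument to compare against; your derivation is correct and is essentially the standard one behind the multivariate HCRB: zero mean of both $\mathbf{e}$ and $\mathbf{g}=\bs{\delta}p/p$, the cross-moment identity $E_{\mathbf{y};\mathbf{x}}[\mathbf{e}\,\mathbf{g}^\mathrm{T}]=\mathbf{V}$ obtained from unbiasedness at every test point, positive semidefiniteness of the stacked second-moment matrix $\mathbf{M}$, and a generalized Schur-complement reduction. Two small repairs. First, the range inclusion forced by $\mathbf{M}\succeq0$ is $\mathcal{R}(\mathbf{V}^\mathrm{T})\subseteq\mathcal{R}(\mathbf{H})$ (equivalently $\ker\mathbf{H}\subseteq\ker\mathbf{V}$), not $\mathcal{R}(\mathbf{V})\subseteq\mathcal{R}(\mathbf{H})$: with $\mathbf{V}$ of size $n\times k$ and $\mathbf{H}$ of size $k\times k$ those two spaces live in different ambient dimensions. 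The slip is harmless because the inequality does not actually need any range condition---for arbitrary $\mathbf{u}$, substituting $\mathbf{w}=-\mathbf{H}^\dagger\mathbf{V}^\mathrm{T}\mathbf{u}$ into $[\mathbf{u}^\mathrm{T}\ \mathbf{w}^\mathrm{T}]\,\mathbf{M}\,[\mathbf{u}^\mathrm{T}\ \mathbf{w}^\mathrm{T}]^\mathrm{T}\geq0$ and using $\mathbf{H}^\dagger\mathbf{H}\mathbf{H}^\dagger=\mathbf{H}^\dagger$ gives $\mathbf{u}^\mathrm{T}\bigl(\mathrm{Cov}(\hat{\mathbf{x}})-\mathbf{V}\mathbf{H}^\dagger\mathbf{V}^\mathrm{T}\bigr)\mathbf{u}\geq0$ directly. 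Second, beyond the finite-variance caveat you note, the ratios $\delta_i p/p$ and the cancellation in the cross-term require each $p(\mathbf{y};\mathbf{x}+\mathbf{v}_i)$ to be absolutely continuous with respect to $p(\mathbf{y};\mathbf{x})$; this is automatic in the paper's Gaussian setting, where $p(\mathbf{y};\mathbf{x})>0$ everywhere, but should be stated in the general proposition.
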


Next Proposition \ref{prop:HCRB_original} will be applied to the present case. The $(i,j)$th element of $\mathbf{H}$ is given by
\begin{equation}\label{Hij_general_deriv}
\begin{aligned}
H_{ij}
=E_{\mathbf{y};\mathbf{x}}\left[\frac{p(\mathbf{y};\mathbf{x}+\mathbf{v}_i)p(\mathbf{y};\mathbf{x}+\mathbf{v}_j)}{p^2(\mathbf{y};\mathbf{x})}\right]-E_{\mathbf{y};\mathbf{x}}\left[\frac{p(\mathbf{y};\mathbf{x}+\mathbf{v}_i)}{p(\mathbf{y};\mathbf{x})}\right]
 - E_{\mathbf{y};\mathbf{x}}\left[\frac{p(\mathbf{y};\mathbf{x}+\mathbf{v}_j)}{p(\mathbf{y};\mathbf{x})}\right] + 1.
\end{aligned}
\end{equation}
The second term of the above equation is
\begin{equation}
-E_{\mathbf{y};\mathbf{x}}\left[\frac{p(\mathbf{y};\mathbf{x}+\mathbf{v}_i)}{p(\mathbf{y};\mathbf{x})}\right]
=-\int_{\mathbb{R}^m} p(\mathbf{y};\mathbf{x}+\mathbf{v}_i)\ud\mathbf{y}=-1,
\end{equation}
and similarly the third term also equals $-1$. The first term is
\begin{equation}
\begin{aligned}
&E_{\mathbf{y};\mathbf{x}}\left[\frac{p(\mathbf{y};\mathbf{x}+\mathbf{v}_i)p(\mathbf{y};\mathbf{x}+\mathbf{v}_j)}{p^2(\mathbf{y};\mathbf{x})}\right] \\
=~&\left(\frac{\sigma_\mathbf{x}^2}{2\pi\sigma_{\mathbf{x}+\mathbf{v}_i}^2\sigma_{\mathbf{x}+\mathbf{v}_j}^2}\right)^{\frac{m}{2}}
\exp\left(-\frac{\|\mathbf{A}\mathbf{v}_i\|_{\ell_2}^2}{2\sigma_{\mathbf{x}+\mathbf{v}_i}^2}-\frac{\|\mathbf{A}\mathbf{v}_j\|_{\ell_2}^2}{2\sigma_{\mathbf{x}+\mathbf{v}_i}^2}\right) \\
&\cdot
\int_{\mathbb{R}^m}
\exp\Bigg[\left(\frac{\mathbf{v}_i^\mathrm{T}\mathbf{A}^\mathrm{T}}{\sigma_{\mathbf{x}+\mathbf{v}_i}^2}
+\frac{\mathbf{v}_j^\mathrm{T}\mathbf{A}^\mathrm{T}}{\sigma_{\mathbf{x}+\mathbf{v}_j}^2}\right)(\mathbf{y}-\mathbf{A}\mathbf{x}) -\frac{\|\mathbf{y}-\mathbf{A}\mathbf{x}\|_{\ell_2}^2}{2\varsigma_{\mathbf{x},\mathbf{v}_i,\mathbf{v}_j}^2}\Bigg]\ud\mathbf{y} \\
=~&\left(\frac{\sigma_\mathbf{x}^2\varsigma_{\mathbf{x},\mathbf{v}_i,\mathbf{v}_j}^2}{\sigma_{\mathbf{x}+\mathbf{v}_i}^2\sigma_{\mathbf{x}+\mathbf{v}_j}^2}\right)^{\frac{m}{2}}
\exp\Bigg[-\frac{\|\mathbf{A}\mathbf{v}_i\|_{\ell_2}^2}{2\sigma_{\mathbf{x}+\mathbf{v}_i}^2}-\frac{\|\mathbf{A}\mathbf{v}_j\|_{\ell_2}^2}{2\sigma_{\mathbf{x}+\mathbf{v}_i}^2} +\frac{\varsigma_{\mathbf{x},\mathbf{v}_i,\mathbf{v}_j}^2}{2}\left\|\frac{\mathbf{A}\mathbf{v}_i}{\sigma_{\mathbf{x}+\mathbf{v}_i}^2}
+\frac{\mathbf{A}\mathbf{v}_j}{\sigma_{\mathbf{x}+\mathbf{v}_j}^2}\right\|_{\ell_2}^2
\Bigg],
\end{aligned}
\end{equation}
where
\begin{equation}
\frac{1}{\varsigma_{\mathbf{x},\mathbf{v}_i,\mathbf{v}_j}^2}
=\frac{1}{\sigma_{\mathbf{x}+\mathbf{v}_i}^2}+\frac{1}{\sigma_{\mathbf{x}+\mathbf{v}_j}^2}-\frac{1}{\sigma_{\mathbf{x}}^2}.
\end{equation}
Substituting these results into \eqref{Hij_general_deriv}, one will readily obtain \eqref{Hij_general}, and Lemma \ref{lem:HCRB_general} is proved.

\section{Proof of Theorem \ref{thm:HCRB}} \label{proof_HCRB_thm}

It is assumed without loss of generality that the support of $\mathbf{x}$ is $S=\{1,2,\ldots,s\}$. We split the MSE of an estimator $\hat{\mathbf{x}}$ into two parts as follows:
\begin{equation}
\begin{aligned}
\mathrm{mse}(\hat{\mathbf{x}})
=\sum_{i\in S}E[(\hat{x}_i-x_i)^2]+\sum_{i\notin S}E[(\hat{x}_i-x_i)^2],
\end{aligned}
\end{equation}
i.e. the support part and the non-support part. We choose different sets of $\mathbf{v}_i$ for the two parts respectively, and in the end combine these two parts to get a lower bound. This approach results from the fact that the MSE of a vector-valued estimator is the sum of the MSE of its components.

For the lower bound of the support part, the following $\{\mathbf{v}_i\}_{i=1}^{s}$ is employed:
\begin{equation}
\mathbf{v}_i=t\mathbf{e}_i,\quad i=1,\ldots,s,
\end{equation}
where $t$ is an arbitrary real number. After the corresponding covariance matrix is obtained, we take $t\rightarrow0$ and sum only the first $s$ diagonal elements to obtain the lower bound of the support part. It can be proved (see Appendix \ref{proof_CCRB_HCRB}) that this lower bound is identical to the CCRB.

For the non-support part, the following $\{\mathbf{v}_i\}_{i=1}^{n-s+1}$ is used:
\begin{equation}
\mathbf{v}_i=\left\{
\begin{aligned}
&x_q\mathbf{e}_{i+s}-x_q\mathbf{e}_q,&\quad&i=1,\ldots,n-s, \\
&t\mathbf{e}_q,&\quad& i=n-s+1.
\end{aligned}\right.
\end{equation}
Here $t$ is also an arbitrary real number which will tend to zero afterwards. At last, the last $n-s$ diagonal elements of the covariance matrix will be summed to represent the lower bound of the non-support part.

The matrix $\mathbf{V}$ in \eqref{HCRB_sparse_general} could be expressed as
\begin{equation}
\mathbf{V}
=\begin{bmatrix}
   t & -x_q\mathbf{1}^\mathrm{T} \\
   \mathbf{0} & x_q\mathbf{U}_{n-s} \\
 \end{bmatrix},
\end{equation}
(the order of $\mathbf{v}_i$'s is slightly changed which has no effect on the final result), where the bold face $\mathbf{1}$ denotes the column vector $[1,1,\ldots,1]^\mathrm{T}$. Then the elements of the matrix $\mathbf{H}$ can be given as follows:
\begin{equation}
H_{11}=\left(\frac{\sigma_\mathbf{x}^2\varsigma_{\mathbf{x},q}^2}{\sigma_q^4}\right)^{\frac{n}{2}}
\exp\left(\frac{t^2}{\sigma_q^2}\left(2\frac{\varsigma_{\mathbf{x},q}^2}{\sigma_q^2}-1\right)\right)-1,
\end{equation}
where it has been defined that
\begin{equation}
\begin{aligned}
\sigma_q^2&=\sigma_n^2+\sigma_e^2(\|\mathbf{x}+t\mathbf{e}_q\|_{\ell_2}^2), \quad
\varsigma_{\mathbf{x},q}^2=\left(\frac{2}{\sigma_q^2}-\frac{1}{\sigma_\mathbf{x}^2}\right)^{-1},
\end{aligned}
\end{equation}
and
\begin{align}
H_{1i}=H_{i1}&=\exp\left(-\frac{t x_q}{\sigma_\mathbf{x}^2}+\frac{x_q^2}{\sigma_\mathbf{x}^2}\left(\frac{\sigma_q^2}{\sigma_\mathbf{x}^2}-1\right)\right)-1,\\
H_{ij} &= \exp\left(\frac{(1+\delta_{ij})x_q^2}{\sigma_\mathbf{x}^2}\right)-1\label{Hij_proof_HCRB_thm}
\end{align}
for $i,j\geq 2$. The matrix $\mathbf{H}$ could be represented as
\begin{equation}
\mathbf{H}=
\begin{bmatrix}
  a(t) & b(t)\mathbf{1}^\mathrm{T} \\
  b(t)\mathbf{1} & \mathbf{D} \\
\end{bmatrix},
\end{equation}
where $a(t)=H_{11}$, $b(t)=H_{12}$ and $\mathbf{D}=(H_{22}-H_{23})\mathbf{U}_{n-s}+H_{23}\mathbf{1}\mathbf{1}^\mathrm{T}$. To check the existence and the expression of $\mathbf{H}^{-1}$, we first calculate the inverse of $\mathbf{D}$ by the Sherman-Morrison formula \cite{Golub_MatComp}:
\begin{equation}\label{mat_D_proof_HCRB}
\mathbf{D}^{-1}=\frac{1}{H_{22}-H_{23}}\left(\mathbf{U}_{n-s}-\frac{H_{23}}{H_{22}+H_{23}(n-s-1)}\mathbf{1}\mathbf{1}^\mathrm{T}
\right).
\end{equation}
Then the blockwise inversion formula is used for the calculation of $\mathbf{H}^{-1}$ (if it exists):
\begin{equation}
\mathbf{H}^{-1}=
\begin{bmatrix}
  f_{11}(t) & f_{12}(t)\mathbf{1}^\mathrm{T} \\
  f_{12}(t)\mathbf{1} & \mathbf{D}^{-1}
  +\frac{b^2(t)\mathbf{D}^{-1}\mathbf{1}\mathbf{1}^\mathrm{T}\mathbf{D}^{-1}}{a(t)-b^2(t)\mathbf{1}^\mathrm{T}\mathbf{D}^{-1}\mathbf{1}} \\
\end{bmatrix},
\end{equation}
where $f_{11}(t)$ and $f_{12}(t)$ are some functions of $t$. Because at last only the last $n-s$ diagonal elements will be summed up, the specific form of the two functions are of no importance. The existence of $\mathbf{H}^{-1}$ relies on whether the submatrices are valid. By employing \eqref{mat_D_proof_HCRB}, one has
\begin{equation}
\begin{aligned}
\mathbf{D}^{-1}\mathbf{1}&=\frac{1}{H_{22}+H_{23}(n-s-1)}\mathbf{1} \\
\mathbf{1}^\mathrm{T}\mathbf{D}^{-1}\mathbf{1}&=
\frac{n-s}{H_{22}+H_{23}(n-s-1)}.
\end{aligned}
\end{equation}
With these equations, it can be shown by tedious calculation that
\begin{equation}
\begin{aligned}
&\lim_{t\rightarrow0}
\frac{b^2(t)\mathbf{D}^{-1}\mathbf{1}\mathbf{1}^\mathrm{T}\mathbf{D}^{-1}}{a(t)-b^2(t)\mathbf{1}^\mathrm{T}\mathbf{D}^{-1}\mathbf{1}} \\
=~&\frac{\beta(1-2\sigma_e^2\beta)^2}{H_{22}+H_{23}(n-s-1)}
\left([H_{22}+H_{23}(n-s-1)]\right. (1+2n\sigma_e^4\beta)-\left.(n-s)\beta(1-2\sigma_e^2\beta)^2\right)^{-1},
\end{aligned}
\end{equation}
and that
\begin{equation}
\begin{aligned}
\mathbf{H}_{22}\equiv\lim_{t\rightarrow0}\left(\mathbf{D}^{-1}
  +\frac{b^2(t)\mathbf{D}^{-1}\mathbf{1}\mathbf{1}^\mathrm{T}\mathbf{D}^{-1}}{a(t)-b^2(t)\mathbf{1}^\mathrm{T}\mathbf{D}^{-1}\mathbf{1}}\right)
=\frac{(n-s){\rm e}^{-\beta}}{{\rm e}^\beta-1}
\left(1-\frac{1}{n-s+{\rm e}^\beta(1-g(\beta))^{-1}}\right),
\end{aligned}
\end{equation}
where $\mathbf{H}_{22}$ is defined as the limit of the $(2,2)$th submatrix of $\mathbf{H}$, and the function $g(\beta)$ is defined as \eqref{form_func_g}. As will be shown later, under the modest requirement that $n\geq2$, for any $\beta>0$ one has $0<g(\beta)<1$, and thus the expressions presented above are all valid. In this way we have not only checked the invertibility of $\mathbf{H}$, but also find out the expression of the inversion's limit.

To get the final form of the HCRB, we calculate $\mathbf{V}\mathbf{H}^{-1}\mathbf{V}^{\mathrm{T}}$ and sum up its last $n-s$ diagonal elements. By straightforward calculation it can be verified that this is just equal to $x_q^2\tr(\mathbf{H}_{22})$. Adding this to the lower bound of the support part, given by
\begin{equation}
\begin{aligned}
\sigma_\mathbf{x}^2\left(s-
\frac{2m\sigma_e^4\|\mathbf{x}\|^2_{\ell_2}}
{\sigma_{\mathbf{x}}^2+2m\sigma_e^4\|\mathbf{x}\|^2_{\ell_2}}\right),
\end{aligned}
\end{equation}
one will finally get the expression \eqref{HCRB_closedform}.

\blue{
The last part of the proof deals with the properties of $g(\beta)$. The proof of the two limits given by \eqref{func_g_property2} is straightforward. In order to prove \eqref{func_g_property1},
we consider separately the situations $0<\beta\leq1/(2\sigma_e^2)$ and $\beta>1/(2\sigma_e^2)$.
When $0<\beta\leq1/(2\sigma_e^2)$, one has $0\leq1-2\sigma_e^2\beta<1$, and therefore
\begin{equation*}
(1-2\sigma_e^2\beta)^2<1<1+2n\sigma_e^4\beta.
\end{equation*}
Together with $\beta<{\rm e}^\beta-1$, it leads to the inequality that
\begin{equation*}
\begin{aligned}
g(\beta)&=\frac{\beta(1-2\sigma_e^2\beta)^2}{({\rm e}^\beta-1)(1+2n\sigma_e^4\beta)}
<1,\quad\forall\beta\in\left(0,\frac{1}{2\sigma_e^2}\right].
\end{aligned}
\end{equation*}
 When $\beta>1/(2\sigma_e^2)$, i.e. $2\sigma_e^2\beta>1$, it follows that
 $0<2\sigma_e^2\beta-1<2\sigma_e^2\beta$, and thus
\begin{equation*}
\begin{aligned}
g(\beta)=\frac{\beta}{{\rm e}^\beta-1}\frac{(1-2\sigma_e^2\beta)^2}{1+2n\sigma_e^4\beta}
<\frac{\beta}{{\rm e}^\beta-1}\frac{(2\sigma_e^2\beta)^2}{1+2n\sigma_e^4\beta} <\frac{\beta}{{\rm e}^\beta-1}\frac{4\sigma_e^4\beta^2}{2n\sigma_e^4\beta}
=\frac{2}{n}\frac{\beta^2}{{\rm e}^\beta-1}.
\end{aligned}
\end{equation*}
Because $\beta^2/({\rm e}^\beta-1)<1$ for all $\beta>0$, it can be seen that $g(\beta)<1$ when $n\geq2$.}
Combining the discussions of the two situations, we have proved that $g(\beta)<1$ for all $\beta>0$. The inequality that $g(\beta)\geq0$ for all $\beta>0$ is trivial.

\section{Lower Bound of the Support Part} \label{proof_CCRB_HCRB}

In this section, we prove that the lower bound of the support part obtained by using $\mathbf{v}_i=t\mathbf{e}_i,i=1,\ldots,s$ and taking $t\rightarrow0$ is just the CCRB of maximal support. It can be easily seen that the matrix $\mathbf{V}$ in \eqref{HCRB_Vmat} is $t[\mathbf{U}_s\ \mathbf{0}]^\mathrm{T}$, and therefore the right side of \eqref{HCRB_general} is
\begin{equation}\label{cov_temp_HCRB_CCRB}
\mathbf{V}\mathbf{H}\mathbf{V}^\mathrm{T}
=\begin{bmatrix}
      t^2\mathbf{H}^\dagger & \mathbf{0} \\
      \mathbf{0} & \mathbf{0} \\
    \end{bmatrix},
\end{equation}
where the matrix $\mathbf{H}$ is given by \eqref{H_mat_HCRB_general}. Next we calculate the limit of $\mathbf{H}/t^2$. It can be seen that
for $\bs{\delta}p/pt$, one has
\begin{equation}
\left(\frac{\bs{\delta}p}{pt}\right)_i
=\frac{1}{p(\mathbf{y};\mathbf{x})}\frac{p(\mathbf{y};\mathbf{x}+t\mathbf{e}_i)-p(\mathbf{y};\mathbf{x})}{t}.
\end{equation}
Taking the limit $t\rightarrow0$, one will get
\begin{equation}
\begin{aligned}
\lim_{t\rightarrow0}\left(\frac{\bs{\delta}p}{pt}\right)_i
&=\frac{1}{p(\mathbf{y};\mathbf{x})}\lim_{t\rightarrow0}\frac{p(\mathbf{y};\mathbf{x}+t\mathbf{e}_i)-p(\mathbf{y};\mathbf{x})}{t} \\
&=\frac{1}{p(\mathbf{y};\mathbf{x})}\frac{\partial p(\mathbf{y};\mathbf{x})}{\partial \mathbf{e}_i}
=\frac{\partial\ln p(\mathbf{y};\mathbf{x})}{\partial \mathbf{e}_i},
\end{aligned}
\end{equation}
where $\partial/\partial\mathbf{e}_i$ denotes the directional derivative along $\mathbf{e}_i$ with respect to $\mathbf{x}$. \blue{
Thus the limit of $\bs{\delta}p/pt$ is
\begin{equation}
\begin{aligned}
\lim_{t\rightarrow0}\frac{\bs{\delta}p}{pt} &=
\begin{bmatrix}
  \partial\ln p/\partial\mathbf{e}_1 \\
  \vdots \\
  \partial\ln p/\partial\mathbf{e}_s \\
\end{bmatrix}
=\begin{bmatrix}
  \mathbf{U}_s & \mathbf{0} \\
\end{bmatrix}
\begin{bmatrix}
  \partial\ln p/\partial\mathbf{e}_1 \\
  \vdots \\
  \partial\ln p/\partial\mathbf{e}_s \\
  \partial\ln p/\partial\mathbf{e}_{s+1}\\
  \vdots \\
  \partial\ln p/\partial\mathbf{e}_n\\
\end{bmatrix} \\
&=
\begin{bmatrix}
  \mathbf{U}_s & \mathbf{0} \\
\end{bmatrix}
\nabla_\mathbf{x}\ln p(\mathbf{y};\mathbf{x}),
\end{aligned}
\end{equation}}
and the limit of $\mathbf{H}/t^2$ is
\begin{equation}
\begin{aligned}
\lim_{t\rightarrow0}\frac{1}{t^2}\mathbf{H}=E\left[\lim_{t\rightarrow0}\frac{\bs{\delta}p}{pt}\frac{\bs{\delta}p^\mathrm{T}}{pt}\right] =&
\begin{bmatrix}
  \mathbf{U}_s & \mathbf{0} \\
\end{bmatrix}
E\left[(\nabla_\mathbf{x}\ln p(\mathbf{y};\mathbf{x}))(\nabla_\mathbf{x}^\mathrm{T}\ln p(\mathbf{y};\mathbf{x}))\right]
\begin{bmatrix}
  \mathbf{U}_s \\
  \mathbf{0}
\end{bmatrix} \\
=&\begin{bmatrix}
  \mathbf{U}_s & \mathbf{0} \\
\end{bmatrix}\mathbf{J}\begin{bmatrix}
  \mathbf{U}_s \\
  \mathbf{0}
\end{bmatrix}.
\end{aligned}
\end{equation}
Here $\mathbf{J}$ is just the FIM, and thus the above matrix is invertible in the setting given by Section \ref{sec:Fund_problem}. Substituting it into \eqref{cov_temp_HCRB_CCRB}, and take the sum of the first $s$ diagonal elements, one will get
\begin{equation}
\begin{aligned}
\lim_{t\rightarrow0}\tr(t^2\mathbf{H}^{-1})&=\tr\left(\left(
\begin{bmatrix}
  \mathbf{U}_s & \mathbf{0} \\
\end{bmatrix}\mathbf{J}\begin{bmatrix}
  \mathbf{U}_s \\
  \mathbf{0}
\end{bmatrix}\right)^{-1}\right)
\end{aligned}.
\end{equation}
Comparing this with the proof of Theorem \ref{thm:CCRB}, one will readily accept that this bound is the CCRB for maximal support case.

\blue{\section{An Example for Section \ref{analysis_CCRB}}\label{eg_analy_CCRB}}

\blue{In this section we give an extreme example showing that utilizing the matrix perturbation appropriately might help
estimate the sparse vector $\mathbf{x}$ more accurately.}

\blue{We only consider the case where the additive noise $\mathbf{n}$ vanishes and the sensing matrix is a unit matrix. We also set the sparsity level
to be one, i.e. $s=1$, and denote the only element of the support as $k$. In this case, it can be seen that
as long as the total noise is small, the support could be correctly recovered with high probability by selecting the element with the maximum magnitude.}

\blue{A commonly used method to estimate the non-zero coefficient is given by
\begin{equation}\label{ls_estim}
\hat{x}_{\hat{k}} = y_{\hat{k}},
\end{equation}
where $\hat{k}$ denotes the estimated support element. This estimation is actually a least squares estimation. However, as the measurement
can be formulated as
\begin{equation*}
\mathbf{y}=\mathbf{x}+\mathbf{E}\mathbf{x},
\end{equation*}
it can be seen that the noise term $\mathbf{E}\mathbf{x}$ also contains some information about the sparse vector $\mathbf{x}$. The above equation
can be equivalently represented as
\begin{equation}
\mathbf{y}=\mathbf{x}+x_k\mathbf{e} = [x_ke_1,\ldots,x_k(1+e_k),\ldots,x_ke_n]^\mathrm{T},
\end{equation}
where $\mathbf{e}=[e_1,\ldots,e_n]^\mathrm{T}\sim\mathcal{N}(0,\sigma_e^2\mathbf{U}_n)$. Intuitively, as long as $n$ is sufficiently large, $e_1^2+\ldots+e_n^2$ can be arbitrarily close to $n\sigma_e^2$. Therefore, it can be seen that
\begin{equation}\label{estim_formula_last_app}
\sum_{j\neq k}y_j^2+(y_k-x_k)^2\simeq n\sigma_e^2x_k^2.
\end{equation}
One can construct an estimator from the above expression by interpreting it as an equality and substituting $\hat{k}$ for $k$ and $\hat{x}_{\hat{k}}$ for $x_k$.}

\blue{Set $n=10000$ and $\sigma_e=0.01$. In this case the estimator derived from \eqref{estim_formula_last_app} is given by
\begin{equation}\label{better_estim}
\hat{x}_{\hat{k}}=\frac{\sum_{j=1}^n y_j^2}{2y_{\hat{k}}}.
\end{equation}
We compare the two estimators \eqref{ls_estim} and \eqref{better_estim} by numerical simulations. The sparse vector is set to be
$\mathbf{x}=[1,0,\ldots,0]^\mathrm{T}$. We run the simulation 10000 times to obtain the average performance of the two estimators. The results is given in Table \ref{perform_last_app}, where the theoretical MSE of the least squares estimator is simply given by
$\sigma_e^2\|\mathbf{x}\|_{\ell_2}^2$. It can be seen that the empirical MSE of the least squares estimator is evidentally smaller than the MSE of
\eqref{ls_estim}. Therefore it can be concluded that by appropriately using the information from the noise term $\mathbf{E}\mathbf{x}$, one can
estimate the sparse vector with higher accuracy than the commonly used least squares approach.}

\begin{table}[!t]
\renewcommand{\arraystretch}{1.3}
\centering
\blue{\caption{Performance of Least Squares Estimator and Eq. \eqref{better_estim}}
\label{perform_last_app}
\begin{tabular}{c|c}
\hline
Estimator & MSE \\
\hline
Least Squares & $1\times 10^{-4}$ (Theoretical) \\
Least Squares &  $1.0161\times 10^{-4}$ (Empirical) \\
Eq. \eqref{better_estim} & $4.9823\times 10^{-5}$ (Empirical) \\
\hline
\end{tabular}}
\end{table}

\bibliographystyle{IEEEbib}
\bibliography{main_bib}
\end{document}